\title{
    Improved Robust Estimation for \ER Graphs: The Sparse Regime and Optimal Breakdown Point\thanks{This work is supported by funding from the European Research Council (ERC) under the European Union’s Horizon 2020 research and innovation programme (grant agreement No 815464).}
}
\author{
  Hongjie Chen\thanks{ETH Z\"urich.} \and
  Jingqiu Ding\footnotemark[2] \and
  Yiding Hua\footnotemark[2] \and
  Stefan Tiegel\footnotemark[2]
}
\begin{document}

\pagestyle{empty}

%%% MAKE TITLE
\maketitle
\thispagestyle{empty} % seems to be required here to avoid page number on first page

%%% ABSTRACT
\begin{abstract}

We study the problem of robustly estimating the edge density of \ER random graphs $\bbG(n, \dnull/n)$ when an adversary can arbitrarily add or remove edges incident to an $\eta$-fraction of the nodes.
We develop the first polynomial-time algorithm for this problem that estimates $\dnull$ up to an additive error $O\bigparen{[\sqrt{\log(n) / n}  + \eta\sqrt{\log(1/\eta)} ] \cdot \sqrt{\dnull} + \eta \log(1/\eta)}$.
Our error guarantee matches information-theoretic lower bounds up to factors of ${\log(1/\eta)}$.
Moreover, our estimator works for all $\dnull \geq \Omega(1)$ and achieves optimal breakdown point $\eta = 1/2$.

Previous algorithms \cite{acharya2022robust,chen2024privateNeurips}, including inefficient ones, incur significantly suboptimal errors.
Furthermore, even admitting suboptimal error guarantees, only inefficient algorithms achieve optimal breakdown point.
Our algorithm is based on the sum-of-squares (SoS) hierarchy.
A key ingredient is to construct constant-degree SoS certificates for concentration of the number of edges incident to small sets in $\bbG(n, \dnull/n)$.
Crucially, we show that these certificates also exist in the sparse regime, when $\dnull = o(\log n)$, a regime in which the performance of previous algorithms was significantly suboptimal.
\end{abstract}

\clearpage

%%% TOC
% assumes microtype
\microtypesetup{protrusion=false}
\tableofcontents{}
\microtypesetup{protrusion=true}

\clearpage

\pagestyle{plain}
\setcounter{page}{1}

%%% SECTION
% main text
\section{Introduction}
\label{sec:intro}
% \Hnote{I made some revisions to Jinqiu's preliminary version. I'll specify the structure of this section using comments in the following. These comments are to be removed in the end.}

% \Hnote{introduce the statistical task}

We study the problem of estimating the expected average degree of \ER random graphs under node corruptions.
The \ER random graph model \cite{gilbert1959random,ErdosRenyi} is a fundamental statistical model for graphs that has been extensively studied for decades.
\begin{comment}
\Snote{I would remove the definition environment}
\Snote{parametrize everything with $\dnull$}
\begin{definition}[\ER random graphs]\label{model:ER}
    The \ER random graph model on $n$ nodes with edge density parameter $\pnull\in [0,1]$, denoted by $\bbG(n,\pnull)$, is a distribution over $n$-node graphs where each edge is present with probability $\pnull$ independently.
    Every node in $\bbG(n,\pnull)$ has the same expected degree $\dnull \seteq (n-1)\pnull$.\Jnote{Maybe we can just parameterize the \Erdos \Renyi model by $\dnull$? $\bbG(n,\dnull/(n-1))$?}
\end{definition}
\end{comment}
This model has two parameters: the number of nodes ($n$) and the degree parameter ($\dnull$) and is denoted by $\bbG(n, \dnull/n)$.
It is a distribution over graphs with $n$ nodes where each edge is sampled independently with probability $\dnull/n$.
Note that every node in $\bbG(n, \dnull/n)$ has the same expected degree $(1-1/n)\, \dnull$.
% Note that every node in $\bbG(n, \dnull/n)$ has the same expected degree $(1-1/n)\, \dnull$.
%\Jnote{Maybe we can just parameterize the \Erdos \Renyi model by $\dnull$? $\bbG(n,\dnull/(n-1))$?}
The most fundamental statistical task in \ER random graphs is the following:
given a graph sampled from $\bbG(n, \dnull/n)$, find an estimate $\hat{d}$ for the ground truth parameter $\dnull$.
% Since estimating $\pnull$ is equivalent to estimating $\dnull$ (they just differ by a known factor of $n-1$) and the error rates are easier to understand when stated in terms of $\dnull$, we will focus on the equivalent task of find an estimate $\hat{d}$ for $\dnull$ in the rest part of this paper.
% \Hnote{not sure if this is a good phrasing to explain why we want to switch to $\dnull$.}
It is well known that the empirical average degree achieves the information-theoretically optimal error rate $\abs{\hat{d} - \dnull} \le \Theta \bigparen{ \sqrt{\log(n) / n} \cdot \sqrt{\dnull} }$.\footnote{Throughout this paper, the statistical utility guarantees we state hold with probability $1-1/\poly(n)$ over the randomness of both the input graph and the algorithm, if not otherwise specified.}

% \Hnote{motivate and introduce the node corruption model}

Although many phenomena in network analysis are captured by \ER random graphs, the distributions of real-world networks can deviate significantly from such a basic model. 
This may tremendously impact the performance of algorithms that are tailored too much towards the \ER model.
In particular, this already occurs for the very basic task of estimating the expected average degree, which is also the task we focus on in this work.
Simple estimators such as the mean or the median of all degrees, or variations thereof, are known to fail drastically even when perturbing edges incident to few of the nodes \cite{acharya2022robust}.
%\Snote{add citations, can repeat whatever Gautam cites}
%\Hnote{@Stefan: what kind of citations are you referring to?}
This motivates the study of robust estimation algorithms for random graphs.
Following \cite{acharya2022robust, chen2024privateNeurips} we study robust estimation for \ER random graphs under node corruptions, defined as follows.

\begin{definition}[$\eta$-corrupted \ER random graphs]\label{model:corrupted-er}
  For $\eta \in [0,1]$, an $\eta$-corrupted \ER random graph is generated by first sampling a graph from $\bbG(n,\dnull/n)$, and then adversarially picking an $\eta$-fraction of the nodes and arbitrarily adding and removing edges incident to them.
\end{definition}

Given the corruption rate $\eta$ and observation of an $\eta$-corrupted \ER random graph, the goal is to estimate the edge parameter $\dnull$.

% Given an $\eta$-corrupted \ER random graph, the goal for robust edge density estimation is to find an estimator $\hat{p} \in \R$ with a small error rate $\E\left|\hat{p}/{\pnull} - 1\right|$.
% \Hnote{i find it a bit repetitive to define the problem again}
% While the empirical density estimator is known to be optimal under a constant fraction of corrupted edges, its error rate becomes significantly sub-optimal even under the corruption of a single node. 
% \Hnote{i find it a bit unnecessary to bring up the edge corruption model here}

% \Hnote{previous work}
% \Snote{I would not call the guarantees of the previous work "nearly optimal", it makes it sound like there is not much left to be done.}

\textit{Previous work.} 
% While seemingly simple, this problem has proven to be more difficult.
% \Hnote{``more'' sounds a bit confusing, as it's not clear to me what we are comparing to}
% \Snote{Maybe good to at some more detail, e.g., that simple approaches like taking the median doesn't work}
While this is a seemingly simple one-dimensional robust estimation problem, in the work that initiated this line of research, \cite{acharya2022robust} showed that many standard robust estimators, such as the median estimator, and their natural variants, such as the truncated median estimator, provably incur very suboptimal errors.
On a high level, this occurs because the degrees of the uncorrupted graphs are not independent and further, because the adversary can change all of them by just corrupting a single node.
In the same work, \cite{acharya2022robust} gave a polynomial-time algorithm that achieves an error rate of $O\bigparen{[\sqrt{\log(n) / n}  + \eta\sqrt{\log(1/\eta)} ] \cdot \sqrt{\dnull}  + \eta \log n}$ when $\eta<1/60$.\footnote{Previous works \cite{acharya2022robust,chen2024privateNeurips} use a different parametrization $\bbG(n, \pnull)$ and consider the task of estimating the edge density parameter $\pnull$. Since $\pnull$ and $\dnull$ differ by a known factor of $n$, these two parameterizations are equivalent.}
Note that the term $\sqrt{\log n / n} \sqrt{\dnull}$ is information-theoretically necessary even in the non-robust case.
Additionally, \cite{acharya2022robust} gave a companion lower bound, showing that information-theoretically no algorithm can achieve an error rate better than $\Omega\bigparen{\max\{\eta \sqrt{\dnull}, \eta\}}$.

% \paragraph{Error rate in sparse regime}
% Although the upper and lower bounds are separated only by logarithmic factors, \cite{acharya2022robust} do not provide non-trivial algorithmic guarantees when the average degree is bounded by $\log(n)$. 
It might seem that the error rate of \cite{acharya2022robust} is only worse than the optimal by logarithmic factors.
However, because of this, it fails to provide any non-trivial statistical guarantees in the sparse regime.
For example, when $\dnull \ll \log n$ and $\eta = \Omega(1)$, their error bound is $O\paren{\eta \log n}$ which is much larger than the ground truth parameter $\dnull$.
Notice that, for statistical estimation problems on random graphs, it is often the case that sparse graphs are more difficult than dense graphs, such as community detection in stochastic block models \cite{MR2155709, guedon2016community, liu2022minimax, hua23RobustKS}. The first step towards robust edge density estimation for sparse random graphs was made in \cite{chen2024privateNeurips}.
They proposed a polynomial-time algorithm that estimates $\dnull$ up to an additive error of $\dnull / 10$ when $\eta$ is at most some sufficiently small constant that is much smaller than $1/2$.
However, their statistical guarantees are suboptimal when $\dnull$ grows with $n$, e.g. $\dnull = \log\log n$.

% \paragraph{Optimal breakdown point} \Hnote{i feel paragraph draws too much attention}
\textit{Optimal breakdown point.}
Besides optimal error rates, another desirable feature of robust estimation algorithms is a high breakdown point.
In the node corruption model, the breakdown point of an estimator is defined to be the minimum fraction of nodes to corrupt such that the estimator cannot give any non-trivial guarantees. 
For the robust edge density estimation problem, it is easy to see that any estimator has breakdown point at most $1/2$, as an adversary can make $\bbG(n,0)$ and $\bbG(n,1)$ indistinguishable if it is allowed to corrupt half the nodes.
% \Snote{explain what breakdown point is}
% \Jnote{The issue is that Gautam's paper didn't refer to $1/2$ as the breakdown point. Also I am not sure whether breakdown point has been introduced in the context of node corruption before.}
% \Snote{maybe say again that this algorithm doesn't get optimal rates/doesn't work in the sparse regime}
In \cite{acharya2022robust}, they provide an exponential-time algorithm achieving the optimal breakdown point $1/2$ with error rate $O\paren{\sqrt{\dnull} + \sqrt{\log n}}$.\footnote{This result does not state the dependence on $\eta$ when $\eta$ is small.}
Note that this error rate is quite suboptimal, as it does not recover the non-robust case when there are no corruptions, and does not provide any non-trivial guarantee in the sparse regime.
% As result, in the sparse regime $\dnull=o(\sqrt{\log(n)})$, non-trivial error guarantees up to the breakdown point $\eta<1/2$ remains unknown even information-theoretically.
% \Snote{State what they are}
% It is easy to see that the edge density is not identifiable when the fraction of corruption $\eta>1/2$.
% Therefore, their inefficient algorithm achieves the optimal breakdown point.
On the other hand, previous polynomial-time algorithms \cite{acharya2022robust,chen2024privateNeurips} can provably work only when the corruption rate is at most some sufficiently small constant bounded away from $1/2$.

% \Snote{This sounds a bit like in exponential time, the problem is solved and the only challenge is getting poly-time algorithms. Maybe instead stay that even inefficiently the optimal rate is not known}
To summarize, current algorithms, including inefficient ones are far from the information-theoretic lower bound.
This holds, even when allowing suboptimal breakdown points.
Furthermore, all known efficient algorithms are both far from known lower bounds and have suboptimal breakdown points.
%As such, there is a significant gap between the breakdown points achieved by polynomial time algorithms and exponential time algorithms for robust edge density estimation in \ER random graphs.
%Moreover, the optimal error rate is unknown in the sparse regime.
This leads us to the following question:
% The main goal of our paper is to address the following question:

%\Ynote{I'm not sure if the question is well formulated. It is a bit unclear what "optimal error rate" is referring to here. It is also not clear what the optimal rate should be. No exponential time algorithm is known to achieve the info-theoretic error rate. An alternative might be "What is the best error rate polynomial-time algorithms can achieve for estimating edge density in \ER random graphs with node corruptions?"? I'm not sure what the best option is here. Maybe @Stefan can decide?}

\begin{displayquote}
    Can polynomial-time algorithms achieve error rates matching the information-theoretic lower bound? % for estimating edge density in \ER random graphs with node corruptions?
    If so, can polynomial-time algorithms simultaneously achieve optimal breakdown point $1/2$?% achieve the optimal error rate and the optimal breakdown point $1/2$?
\end{displayquote}

We answer both questions affirmatively in this paper, up to factors of ${\log(1/\eta)}$ in the error rate.

  \subsection{Results}
  \label{sec:results}
  We give the first polynomial-time algorithm for node-robust edge density estimation in \ER random graphs that achieves near-optimal error rate and reaches the optimal breakdown point $1/2$.

\begin{theorem}[Informal restatement of \cref{thm:main_algo}]
    \label{thm:main-theorem}
    For any $0 \leq \eta < 1/2$ and $\dnull \ge 1$, 
    % Let $\eta \leq 1/2 - \eps$ and $\dnull \geq c$ for any constant $\eps, c>0$, 
    there exists a polynomial-time algorithm which, given $\eta$ and a graph that is an $\eta$-corruption of an \Erdos-\Renyi random graph sampled from $\bbG(n, \dnull/n)$, outputs an estimator $\hat{d}$ satisfying
    \[
        \Abs{\hat{d}- \dnull}
        \lesssim \Paren{\sqrt{\frac{\log(n)}{n}} + \eta \sqrt{\log \Paren{1 / \eta}}} \cdot \sqrt{\dnull} + \eta \log \Paren{1 / \eta} \,,
    \]
    with probability at least $1-1/\poly(n)$.
\end{theorem}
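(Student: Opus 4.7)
I plan to formulate a constant-degree sum-of-squares program with Boolean variables $w_1, \ldots, w_n \in \{0,1\}$ encoding a proposed set of uncorrupted nodes, subject to $\sum_i w_i \ge (1 - \eta) n$ and to polynomial inequalities enforcing that the subgraph of the observed graph $X$ induced on the selected set obeys the edge-count concentration properties of an \ER sample --- in particular that, for every sub-subset of size at most $2\eta n$, the number of incident edges is close to its expectation. The output $\hat d$ is the pseudo-expectation of (twice) the average degree of $X$ restricted to $w$, suitably normalized.

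\textbf{Identifiability via edge counts on small subsets.} The main structural observation is that any feasible $w$ shares at least $(1 - 2\eta) n$ vertices with the planted good set $w^*$. On the intersection $S = w \cap w^*$, the observed graph $X$ agrees with the planted graph $Y^*$, and the average degree over $S$ concentrates to $\dnull$ up to the unavoidable statistical error of order $\sqrt{\log(n)/n} \cdot \sqrt{\dnull}$. The remaining contribution to $\hat d$ comes from edges incident to $w \setminus w^*$ in $X$ (controlled by the SoS small-set constraint applied to the program variables) and from edges incident to $w^* \setminus w$ in $Y^*$ (controlled by the same concentration applied to the planted graph). Both contributions produce error of the form $\eta \sqrt{\dnull \log(1/\eta)} + \eta \log(1/\eta)$, provided the SoS-certified concentration stated next holds.

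\textbf{Main technical ingredient: constant-degree SoS certificate for small-set edge concentration.} The heart of the proof is to establish that with probability $1 - 1/\poly(n)$ over $Y^* \sim \bbG(n, \dnull/n)$, for every Boolean vector $u \in \{0,1\}^n$ with $\sum_i u_i = k \le 2\eta n$, the number of edges of $Y^*$ incident to the set indicated by $u$ is at most $\dnull \, k + O\bigparen{\sqrt{\dnull \, k \log(n/k)}} + O\bigparen{k \log(n/k)}$, and that this inequality admits a constant-degree SoS proof in the variables $u$. In the dense regime $\dnull \gtrsim \log n$, a standard moment-method argument should suffice: bound an appropriate even centered moment of the edge count by an SoS polynomial in $u$, and apply a Markov-style SoS inequality. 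In the sparse regime $\dnull = o(\log n)$, this approach fails because moments of sparse-Bernoulli sums are Poissonian rather than sub-Gaussian, and the $\sqrt{\dnull k \log(n/k)}$ term is no longer dominant. My plan is to execute a truncation-based SoS proof: first identify and deterministically bound the contribution of a small set of ``heavy'' vertices (those whose degree in $Y^*$ exceeds $\Theta(\log(1/\eta))$), and then prove a sub-Gaussian-type SoS inequality on the truncated graph, whose constants depend on $\log(1/\eta)$ rather than $\log n$.

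\textbf{Main obstacle.} The principal obstacle is precisely the sparse-regime SoS certificate. Naive low-degree moment arguments yield the suboptimal bound with $\log n$ replacing $\log(1/\eta)$, which is exactly the source of the $\eta \log n$ term in the guarantee of~\cite{acharya2022robust}. The truncation must be carried out at the level of polynomial inequalities and SoS proofs, which is delicate because the ``heavy vertex'' set is itself encoded by the SoS variables. Once this certificate is in place, the remaining steps --- reducing identifiability to the small-set edge bound via the decomposition in the second paragraph, extracting $\hat d$ from the pseudo-expectation, and pushing the breakdown point to the optimal $\eta < 1/2$ (which only requires that the constants in the certificate are independent of $\eta$ as $\eta \to 1/2$) --- should follow from relatively standard SoS rounding techniques.
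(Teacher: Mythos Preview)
Your high-level plan---an SoS program whose constraints encode small-set edge-count concentration, plus an identifiability argument that compares the planted good set with any feasible solution on their overlap---is essentially the paper's strategy. The utility calculation you sketch (split $d(Y)-d(Y^*)$ into the contribution from the symmetric difference and bound each piece by the concentration certificate) is precisely what the paper does.

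Where you diverge is in the construction of the SoS certificate itself, and here the paper's approach is both different and more concrete than your moment/truncation plan. The paper splits the quantity $\langle \rA - \tfrac{d(\rA)}{n}\mathbf{1}\mathbf{1}^\top,\, 2w\mathbf{1}^\top - ww^\top\rangle$ into a linear part $p_1(w)=\sum_v w_v(d_v(\rA)-d(\rA))$ and a quadratic part $p_2(w)=w^\top(\rA-\tfrac{d(\rA)}{n}\mathbf{1}\mathbf{1}^\top)w$. For $p_1$ it uses a deterministic ``SoS subset-sum'' lemma: whenever a fixed sequence $(a_v)$ satisfies $|\sum_{v\in S}a_v|\le B$ for all $|S|\le k$, one has a \emph{degree-one} SoS proof of $|\sum_v w_v a_v|\le B$ under $\{0\le w\le 1,\,\sum w_v\le k\}$; no moments are needed. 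For $p_2$ it does not truncate but instead applies a \emph{diagonal rescaling}: with $\tilde{\rv D}_{vv}=\max\{1,\,d_v(\rA)/(2\dnull)\}$, results of Le--Levina--Vershynin give $\|\tilde{\rv D}^{-1/2}(\rA-\E\rA)\tilde{\rv D}^{-1/2}\|_{\mathrm{op}}\lesssim\sqrt{\dnull}$ even in the sparse regime, and then $\|\tilde{\rv D}^{1/2}w\|^2$ is bounded again via the subset-sum lemma. Both pieces together give a degree-$4$ certificate. The paper does mention deletion of high-degree vertices as a likely viable alternative for $p_2$, so your truncation idea is not off-base, but the threshold there is $\Theta(\dnull)$ (to get spectral norm $O(\sqrt{\dnull})$), not $\Theta(\log(1/\eta))$ as you wrote; your threshold does not correspond to either a Feige--Ofek-style spectral cleanup or to the Chernoff deviation scale.

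Two further remarks. First, your worry that ``the heavy vertex set is itself encoded by the SoS variables'' does not arise in the paper: the rescaling $\tilde{\rv D}$ is a deterministic function of the uncorrupted graph, used only to establish feasibility; on the utility side the constraint is simply ``there exists a degree-$4$ SoS proof of the concentration inequality for $Y$,'' enforced via auxiliary coefficient variables. Second, the paper's program carries an explicit matrix variable $Y$ for the guessed clean graph and outputs $\tilde{\mathbb E}[d(Y)]$; this is close to your ``average degree on $w$'' but makes the algebra of the utility lemma cleaner since one compares $d(Y)$ directly to $d(\rAnull)$.
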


We make a few comments on the statistical guarantee of our algorithm.
Our error rate is optimal up to the $\log(1/\eta)$ factor, as any algorithm must incur an error of $\Omega\paren{\max\{\eta \sqrt{\dnull}, \eta\}}$ \cite{acharya2022robust}.
In comparison, the error rate achieved by \cite{acharya2022robust} is $O\bigparen{ [\sqrt{\log(n) / n}  + \eta\sqrt{\log(1/\eta)} ] \cdot \sqrt{\dnull} + \eta \log n}$, which is worse than ours in every parameter regime.
The error rate achieved by \cite{chen2024privateNeurips} is $O(\dnull)$, which is worse when $\eta$ is small; in particular, it does not recover the non-robust case when $\eta=0$.

The condition on $\dnull$ can be relaxed to $\dnull \ge c$ for arbitrary positive constant $c$.
We remark that the condition $\dnull \ge \Omega(1)$ is information-theoretically necessary if we want non-trivial guarantees for constant $\eta$, as when $\dnull=o(1)$ most nodes will be isolated with high probability and the adversary can erase all edges, removing all information about $\dnull$. 

We leave it as an open question whether the factor of $\log(1/\eta)$ is inherently necessary, at least for polynomial-time algorithms.
We remark that the lower bound in \cite{acharya2022robust} is for \emph{oblivious} adversaries, that are not allowed to see the uncorrupted graph before choosing their corruptions.
Whereas we consider \emph{adaptive} adversaries, that have full knowledge of the underlying graph.
For other robust statistical inference problems, there are known separations between these models.
In particular, for robustly estimating the mean of a high-dimensional Gaussian, there are polynomial-time algorithms that achieve error $O(\eta)$ against oblivious adversaries \cite{diakonikolas2018robustly}.\footnote{Formally, they work for the so-called Huber contamination model.}
But against adaptive adversaries, there are statistical query lower bound suggesting that obtaining error better than $\Omega(\eta \sqrt{\log(1/\eta)})$ takes super-polynomial time \cite{diakonikolas2017statistical}.

As observed in \cite{acharya2022robust} the key property that makes the edge density estimation problem challenging is the dependencies among the degrees of vertices.
Indeed, consider the related, but much simpler, problem, in which we are given $n$ points that are an $\eta$-corruption of i.i.d. samples from a Binomial distribution with parameters $n$ and $\dnull/n$.
In this setting, the marginal distribution of each uncorrupted sample is the same as in our setting, but there are no dependencies between samples.
It is not too hard to show that in this setting, the median estimator obtains error $O(\eta \sqrt{\dnull})$ with probability at last $1-\exp(-\eta^2 n)$ (for completeness, we give a proof of this fact in \cref{sec:robust-binomial}).
Our main theorem shows that we can obtain the same guarantees up to a factor of $\sqrt{\log(1/\eta)}$ in the graph setting (in most parameter regimes).

\begin{comment}
\Snote{ToDo: Compare with robust binomial estimation}
It might also be interesting to compare our problem with the robust binomial mean estimation problem, since all node degrees in $\bbG(n, \dnull/n)$ are identically distributed as the binomial distribution $\Bin(n-1, \dnull/n)$ which has mean $(1-1/n)\, \dnull$ and standard deviation $O(\sqrt{\dnull})$.
For the latter problem, some standard quantile-based one-dimensional estimator should be able to achieve error rate $O\bigparen{ \sqrt{\log(n) / n} \cdot \sqrt{\dnull} + \eta \cdot \sqrt{\dnull}}$. \Hnote{i'm not very sure about this error rate, since the binomial distribution is a mix of subgaussian and subexponential.}
However, there are two significant differences that make our problem more difficult.
One is that node degrees are not independent; the other is that the adversary only needs to corrupt one node in order to corrupt all node degrees.
Despite this, we could get almost the same guarantee by additionally exploiting the graph structure.
\end{comment}
  \subsection{Notation}
  \label{sec:notation}
  We introduce some notations used throughout this paper.
% \subsection{Notations}
% \paragraph{Asymptotic Notation}
We write \( f \lesssim g \) to denote the inequality \( f \leq C \cdot g \) for some absolute constant \( C > 0 \).
We also use the standard asymptotic notations \( O(f) \) and \( \Omega(f) \) for upper and lower bounds, respectively.
% \paragraph{Random Variables}\Jnote{Do we still need this?}
% \Hnote{i used boldface symbols for random variables, but i think we are not doing this consistently throughout the paper.}
Random variables are denoted using boldface symbols, e.g., \( \bm{X}, \bm{Y}, \bm{Z} \).
% \paragraph{Matrices and Vectors}
For a matrix \( M \), we use \( \| M \|_{\mathrm{op}} \) for its spectral norm and \(\norm{M}_F\) for its Frobenius norm, and let \( d(M) \) denote its average row/column sum, i.e. $d(M) \seteq \sum_{i,j} M_{ij} / n$.
Let \( \one \) and \( \zero \) denote the all-one and all-zero vectors, respectively. Their dimensions will be clear from the context.
We use $\norm{\cdot}_2$ for the $2$-norm of vectors.
For any matrices (or vectors) $M, N$ of the same shape, we use $M \odot N$ to denote the element-wise product (aka Hadamard product) of $M$ and $N$.
% \paragraph{Graphs and Matrices}
We use \( G \) to denote a graph and \( A = A(G) \) for its adjacency matrix, interchangeably, when the context is clear.
Given a graph $G$ and a subset $S$ of nodes, let $e_G(S)$ denote the number of edges in the subgraph induced by $S$ and let $e_G(S, \bar{S})$ denote the number of edges in the cut $(S, \bar{S})$. When the graph $G$ is clear from the context, we might drop the subscript and write $e(S)$ and $e(S, \bar{S})$.
  \subsection{Organization}
  \label{sec:organization}
  The rest of the paper is organized as follows.
In \cref{sec:techniques}, we give a technical overview of our results.
In \cref{sec:main-alg}, we present our main algorithm and the detailed proofs.
In \cref{sec:sos-background}, we provide some sum-of-squares background, including basic sum-of-squares proofs used in our paper.
In \cref{sec:concentration-inequalities}, we prove concentration inequalities that are used in our proofs.
In \cref{sec:robust-binomial}, we prove statistical guarantees of median for robust binomial mean estimation.

\section{Techniques}
\label{sec:techniques}
\begin{comment}
\Snote{Structure}
\begin{enumerate}
  \item Inefficient algorithm based on goodness condition for subsets \Hnote{is there a plan to include the inefficient algorithm in the technical sections? or we only somehow describe the inefficient algorithm here in the section of techniques?} \Snote{In the techniques should be enough}
  \item Make efficient by certificates of goodness condition
  \item explain how to construct certificates
  \item comparison to previous approaches, can seen to implicitly construct goodness certificates (but use more complicated ideas), key issues: Try to enforce concentration of degrees for all node -> only works in dense regime
  \item also mention that we don't require a warm start (did all efficient algorithms need this, or only the one by Hongjie, Yiding, and Jingqiu?) \Ynote{It kinda depends on what warm start means. The one by Gautham also has multiple steps (coarse spectral + fine trimming).}
  \item Discuss optimal breakdown point somewhere
\end{enumerate}
\end{comment}

Our algorithm follows the so-called \emph{proofs-to-algorithms} framework based on the sum-of-squares (SoS) hierarchy of semidefinite programs.
This framework has successfully been applied to a wide range of robust estimation tasks such as robustly estimating the mean and higher-order moments, robust linear regression, learning mixture models, and many more \cite{KSS18,hopkins2018mixture,klivans2018efficient,bakshi2021robust}.
We refer to \cite{raghavendra2018high} for an overview.
In this framework, one first constructs a proof of identifiability of the model parameters.
This already leads to an inefficient algorithm.
If additionally, this proof is captured by the SoS framework, we directly obtain an efficient algorithm with the same error guarantees.

Our work thus consists of two parts: First, constructing a ``simple'' proof of identifiabilty and second, showing that it can be made efficient using the sum-of-squares hierarchy.
We discuss the proof of identifiability in \cref{sec:alg_ineff} and then construct the necessary SoS proof in \cref{sec:alg_eff} .
In \cref{sec:prev_works} we discuss how our approach relates to prior works.
Compared to previous works, our proof is surprisingly simple, and we will be able to discuss it almost entirely in this section.
We also view this as a strength of our work.

\subsection{An inefficient algorithm via identifiability}
\label{sec:alg_ineff}

The first part of our results is to find a proof of identifiability that lends itself to the ``proofs-to-algorithms'' paradigm.
This approach is different from previous works and a key part of our work.
In particular, it requires to identify a certain ``goodness'' condition that captures the essence of the problem.
The inefficient algorithm based on this identifiability argument already surpasses the state-of-the-art among inefficient algorithms.

Denote by $\Gnull$ the uncorrupted graph and denote by $d(\Gnull)$ its empirical average degree.
For brevity, denote $\deltaerr \coloneqq  \eta \sqrt{\log(1/\eta)}\sqrt{\dnull} + \eta \log(1/\eta)$.
Since with probability $1-1/\poly(n)$, it holds that $\abs{d(\Gnull) - \dnull} \leq O( \sqrt{ {\log (n)} / n } \cdot \sqrt{\dnull})$, it is enough to estimate $d(\Gnull)$ up to error $O(\deltaerr)$.
For the rest of this section we will focus on this task.

Our proof of identifiability follows the same line of reasoning as in many recent works on (algorithmic) robust statistics:
If two datasets from some parametric distribution have first, large overlap and second, both satisfy an appropriate ``goodness'' condition, their underlying parameters must be close.
The ``goodness'' condition also needs to hold for the uncorrupted dataset with high probability.
This approach underlies algorithms for robust mean estimation, clustering mixture models, robust linear regression, and more.\footnote{In many of these cases the goodness condition is bounded moments, or related conditions.}
This immediately yields an inefficient algorithm: 
Enumerate over all possible alterations of the data that still have large overlap with the input and check if they satisfy the condition.
If yes, output an empirical estimator for the parameter we wish to estimate, e.g., the empirical mean.

In our setting, the datasets are graphs on $n$ nodes and large overlap refers to one graph can be obtained by arbitrarily modifying the edges incident to at most $\eta n$ nodes.
%\Hnote{i think we'd better be consistent throughout the paper about ``vertex'' vs ``node''. i suggest we stick to ``node'' as the literature uses ``node corruption''.} \Snote{Good point, then let's do node} 
We refer to two such graphs as \emph{$\eta$-close}.

\begin{comment}
Our algorithm will follow the sum-of-squares proofs to algorithms paradigm.
On a high level, the idea is as follows: We first give a statistical proof of identifiability of the model. That is, we show that there is a set of \emph{goodness} conditions, such that
\begin{enumerate}
  \item The uncorrupted graph satisfies the goodness condition.
  \item Given the corrupted graph, we can compute a second graph satisfying the goodness conditions.
  \item For two graphs both satisfying the goodness conditions, their average edge densities must be close. \Hnote{i think we should also somehow mention these two graphs should also be close in node distance.}
\end{enumerate}
If we can phrase the second part as a set of polynomial constraints and the last part has a proof in the sum-of-squares framework this can be made algorithmic.
For now, we will put computational complexity to the side and focus on finding a ``simple'' identifiability.
We will then discuss how to modify it to make this efficient.
\end{comment}

\paragraph{Concentration of edges incident to small sets as goodness condition.}

We next describe our goodness condition and how it leads to a proof of identifiability, and thus an inefficient algorithm.
The idea behind our goodness condition is strikingly simple:
Let $G$ and $G'$ be two graphs that are $\eta$-close and let $S$ be the set of node on which they disagree.
We denote by $e_G(S), e_G(S,\bar{S})$ the number of edges of the subgraph induced by $S$ and the number of edges in the cut induced by $S$ (i.e., between $S$ and $\bar{S}$), respectively.
When the graph $G$ is clear from context, we omit the subscript.
Consider the difference in their empirical average degree,
\[
    \tfrac 1 2 (d(G) - d(G')) =  \tfrac 1 n \Paren{e_G(S) + e_G(S,\bar{S}) - e_{G'}(S) - e_{G'}(S,\bar{S})}\,.
\]
Note that the difference only depends on edges incident to the set $S$.
%\Hnote{the middle part (between the two equal signs) doesn't look very nice, mostly because the spacing does not look proper. as this part is not very necessary, mayb we can just remove it?}
%Consider the task of estimating the number of edges in $G$ instead.
%Then, the difference in this number stems only from the edges inside $S$ and in the cut induced by $S$ respectively.
Now, for $G$ coming from $\bbG(n,\dnull / n)$, we know that this number is tightly concentrated around $\tfrac{\dnull}{n} \binom{\card{S}}{2} + \tfrac{\dnull}{n} \card{S}(n-\card{S})$.
In particular, this holds for \emph{any} set $S$ of this size.
If the same were true also for both $G'$, the expectation terms would cancel out and only the fluctuation remains, which would indeed be small enough.
Thus, our goodness condition is exactly requiring this concentration property, replacing $\dnull$ by the empirical average degree of the graph.\footnote{This is reminiscent of the notion of \emph{resilience}, which is the goodness condition underlying robust mean estimation for distributions with bounded moments.}
%Thus, we will search for a graph $G'$, that differs in at most $O(\eta n)$ nodes from $G$ and which also satisfies this property, but replacing $p$ by the empirical degree of $G'$ (this difference will be a lower order term as we will see).

Formally, denote by $N(S) = \binom{\card{S}}{2} + \card{S}(n-\card{S})$, the maximum number of possible edges in the subgraph and cut induced by $S$.
We require the following.
\begin{definition}[\emph{$\delta$-good} graphs]
  Let $G$ be a graph on $n$ nodes.
  We say that $G$ is \emph{$\delta$}-good, if for every subset $S \sse [n]$ of size at most $\eta n$, it holds that
  \begin{align}
    \Abs{e_G(S) + e_G(S,\bar{S}) - \tfrac{d(G)} n \cdot N(S)} \leq \delta \cdot \eta n \label{eq:goodness}\,.
  \end{align} 
  The parameter $\delta$ may depend on $G$.
\end{definition}
%Note that the $\eta n^2$ term corresponds to $N(S)$, up to constant factors.
Denote by $\delta(G) = O(\sqrt{\log(1/\eta)}\sqrt{d(G)} + \log(1/\eta))$.\footnote{For simplicity, the reader may ignore the second part on a first read.}
By a Chernoff bound and a union bound over all sets of size at most $\eta n$, it can be shown that if $\rG \sim \bbG(n,\dnull/n)$ then $\rG$ is $\delta(\rG)$-good with probability at least $1 - 1/\poly(n)$ (cf.~\cref{lem:degree-subset-sum}).
Further, $\eta \delta(\rG) = O(\deltaerr)$ with at least the same probability (in the latter $d(\rG)$ is replaced by $\dnull$).

\paragraph{Identifiability.}
We next claim that this leads to identifiability:
In particular, if $G$ and $G'$ are $2\eta$-close and $\delta(G)$ and $\delta(G')$ good then $\abs{d(G) - d(G')} \leq O(\deltaerr)$.
Note that this immediately gives an inefficient algorithm with error rate $O(\deltaerr)$:
Simply enumerate over all graphs $G$ that are $\eta$-close to the input and if one of them is good, output the empirical average degree.
Since this search includes the uncorrupted graph, this is successful with probability at least $1 - 1/\poly(n)$.
Further, any ``good'' graph that we find is $2\eta$-close to the uncorrupted graph.
It then follows that
%Now the inefficient algorithm is as follows: Given an $\eta$-node corruption $G'$ of $G$, we find a graph $\tilde{G}$ that agrees with $G'$ on at least a $(1-\eta)$-fraction of the nodes and that is $O(\log(1/\eta) \cdot \sqrt{p/n})$-good.\footnote{Note that we can check this condition by replacing $p$ by $\hat{p} = \Theta(p)$.}
%Since this search includes $G$, this is successful with probability at least $1-\tfrac 1 {\poly(n)}$.
%Given such $\tilde{G}$, we output its empirical edge density $p_{\tilde{G}}$.
%By construction, $\tilde{G}$ and $G^\circ$ disagree on a set $S$ of at most $2\eta n$ node.
%Then, since both of them are $O(\log(1/\eta) \cdot \sqrt{p/n})$-good, it follows that
\begin{comment}
\begin{align}
    &\tfrac 1 2 \Abs{d(G) - d(G')} = \tfrac 1 n \Abs{e_G(S) + e_G(S,\bar{S}) - e_{G'}(S) - e_{G'}(S,\bar{S})} \nonumber \\
    &\leq \tfrac 1 n \Abs{e_G(S) + e_G(S,\bar{S}) - \tfrac {d(G')} n \cdot N(S)} + \tfrac 1 n \Abs{e_G(S) + e_G(S,\bar{S}) - \tfrac {d(G)} n \cdot N(S)} + \Abs{d(G) - d(G')} \cdot \frac {N(S)} {n^2} \label{eq:ident} \,.
\end{align}
\end{comment}
\begin{align}
    \tfrac 1 2 \Abs{d(G) - d(G')} 
    &= \tfrac 1 n \Abs{e_G(S) + e_G(S,\bar{S}) - e_{G'}(S) - e_{G'}(S,\bar{S})} \nonumber \\
    &\le \tfrac 1 n \Abs{e_G(S) + e_G(S,\bar{S}) - \tfrac {d(G')} n \cdot N(S)} + \tfrac 1 n \Abs{e_G(S) + e_G(S,\bar{S}) - \tfrac {d(G)} n \cdot N(S)} \nonumber \\
    &\phantom{\le} + \Abs{d(G) - d(G')} \cdot \frac {N(S)} {n^2} \label{eq:ident} \,.
\end{align}
Since $S$ has size at most $2\eta$, it follows that $\tfrac 1 2 - \tfrac {N(S)} {n^2} \geq \tfrac1 2 (1-2\eta)^2$.
Rearranging and applying our goodness condition yields that 
\begin{align*}
    (1-2\eta)^2 \, \Abs{d(G) - d(G')} &\leq 2 \eta\delta(G) + 2 \eta\delta(G') \\
    &= O\Paren{\eta \sqrt{\log(1/\eta)}} \cdot \Brac{\sqrt{d(G)} + \sqrt{d(G')}} + O\Paren{\eta \log(1/\eta)}\,,
\end{align*}
which is the guarantee we aimed for up to the $\sqrt{d(G')}$ term.
Let $\gamma = \tfrac {100}{(1-2\eta)^4}$.
Using the AM-GM inequality and $d(G) \geq 1$\footnote{Any other constant, potentially smaller than 1, also works.}, it follows that
\[
    d(G') = \frac{d(G') - d(G)}{\sqrt{\gamma d(G)}} \cdot \sqrt{\gamma d(G)} + d(G) \leq \frac{(d(G') - d(G))^2} {\gamma} + (1+\gamma) d(G) \,.
\]
Taking square roots and using that $\sqrt{x+y} \leq \sqrt{x} + \sqrt{y}$, it follows that
\[
    \sqrt{d(G')} \leq \frac {(1-2\eta)^2}{10} \Abs{d(G) - d(G')} + \frac {20}{(1-2\eta)^2} \sqrt{d(G)} \,.
\]
Plugging this back into \cref{eq:ident} yields
\begin{align*}
    \Abs{d(G) - d(G')} \lesssim \frac{\eta \sqrt{\log(1/\eta)} \cdot \sqrt{d(G)}}{(1-2\eta)^4} + \frac{\eta \log(1/\eta)}{(1-2\eta)^2} \,. 
\end{align*}
For any $\eta$ strictly bounded away from 1/2, this is indeed $O(\deltaerr)$, also showing that our breakdown point is 1/2.

\subsection{An efficient algorithm via sum-of-squares}
\label{sec:alg_eff}
We next describe how to design an efficient algorithm using the sum-of-squares framework.
This algorithm will inherit the error rate and optimal breakdown point of the inefficient algorithm, proving \cref{thm:main-theorem}.

A key part of the analysis of the inefficient algorithm was to show that the goodness condition holds with high probability for the uncorrupted graph,
such that we can guarantee the exhaustive search over all $\eta$-close graphs to our input is successful.
On a high level, if we additionally require that there is a \emph{certificate} of goodness in the uncorrupted case, we can replace the exhaustive search of the inefficient algorithm by an SDP that we can solve in polynomial time.
Existence of the certificate then implies that this SDP is feasible.
More formally, we will require that there is a constant-degree SoS proof of this fact.

It turns out that, given this certificate, we can reuse the identifiability proof above in an almost black-box way to obtain an algorithm with the same error rate.
For this section, we will thus focus mainly on showing existence of this certificate.
Towards the end, we will briefly explain how to use this to construct a sum-of-squares recovery algorithm.
This section does not assume in-depth knowledge of the SoS proof system.
We refer to \cref{sec:sos-background} for a formal treatment.

In order to describe the certificate, we will first present an algebraic formulation of the goodness condition.
Let $\rG \sim \bbG(n,\dnull/n)$ be a graph with node set $[n]$ and $\rA$ be its adjacency matrix.
For a set $S$, let $w^{(S)} \in \{0,1\}^n$ be its indicator vector.
Since it will be important later on, we will slightly switch notation and denote the empirical average degree by $d(\rA)$, instead of $d(\rG)$,
to make clear that it is a function of $\rA$ as well.
We also denote by $d_v(\rA)$ the degree of node $v$.
Note that
\[
    e_{\rG}(S) + e_{\rG}(S,\bar{S}) = \sum_{v \in S} d_{\rG}(v) \,-\, e(S) = \sum_{v \in [n]} w^{(S)}_v d_v(\rA) \,-\, \tfrac 1 2 (w^{(S)})^\top \rA w^{(S)} \,.
\]
Let $p_1(w) = \sum_{v \in [n]} w_v (d_v(\rA) - d(\rA))$ and $p_2(w) = w^\top (\rA - \tfrac{d(\rA)}{n}\op{\one}) w$.
Then, the goodness condition holds if and only if for all $S$ of size at most $\eta n$,
\begin{align}
     \Paren{p_1(w^{(S)}) - \tfrac 1 2 p_2(w^{(S)})}^2
        \lesssim \delta(\rA)^2 \cdot (\eta n)^2\label{eq:goodness_exact_sos}\,,
\end{align}
where we defined $\delta(\rA) = O(\sqrt{\log(1/\eta)}\sqrt{d(\rA)} + \log(1/\eta))$ as before.\footnote{Again, since it will be useful later on, we switched notation from $\delta(\rG)$ to $\delta(\rA)$.}

The certificate is a strengthening of this fact.
In particular, consider the following set of polynomial equations in variables $w_1, \ldots, w_n$: $\cA_{\text{label}}(w; \eta) \coloneqq \{\forall v \colon w_v^2 = w_v\,, \sum_{v} w_v \leq \eta n\}$ which contains all the $w^{(S)}$.
We require that there is a so-called \emph{SoS proof} of \cref{eq:goodness_exact_sos}, in variables $w$.
In particular, we require that the difference between right-hand side and left-hand side can be expressed as a polynomial of the form
\[
    \sum_{r} s_r(w)^2 + \sum_{i,r'} p_{i,r'} (w_i^2 -w_i) + \sum_{r''} s_{r''}(w)^2 \Paren{\eta n - \sum_{i=1}^n w_i} \,,
\]
where $s_r, p_{i,r'}, s_{r''}$ are constant-degree polynomials in $w$.
We denote this by $\cAlabel \sststile{O(1)}{w} \paren{p_1(w^{(S)}) - \tfrac 1 2 p_2(w^{(S)})}^2\lesssim \delta(\rA)^2 \cdot (\eta n)^2 $.
All of the proofs in this section will be of constant-degree, so we will drop the $O(1)$ subscript.

We will construct this in two parts.
Since the SoS proof system captures the fact that $(a + b)^2 \leq 2a^2 + 2b^2$, it follows that $\cAlabel \sststile{}{w} (p_1(w) - \tfrac 1 2 p_2(w))^2 \leq 2 p_1(w)^2 + p_2(w)^2$.
Since SoS proofs obey composition, it is enough to certify that both $p_1^2(w)$ and $p_2(w)^2$ are at most $O(1) \cdot \delta(\rA)^2 \cdot (\eta n)^2$.

\paragraph{Upper bounding $p_1$ via concentration of degrees on small sets.}

We start by bounding $p_1$ in SoS.
Note that if $w$ would correspond to a fixed point in $\cAlabel(w\;\eta)$, i.e., the indicator of a set of size at most $\eta n$, this would immediately follow from the standard goodness condition.

We construct the necessary SoS proof case by showing that something more general is true:
Given $n$ numbers $a_1, \ldots, a_n \in \R$ such that for all $S \sse [n]$ of size at most $\eta n$ it holds that $(\sum_{v \in S} a_v)^2 \leq B$, there exist a constant-degree SoS proof of this fact.
I.e.,
\[
    \cAlabel(w,\eta) \sststile{O(1)}{w} \Paren{\sum_{v=1}^n w_v a_v}^2 \leq B \,.
\]

On a high level, this follows by comparing the set indicated by the $w_v$ variables to the set of the $\eta n$ largest $a_v$'s.
The proof only uses elementary arguments inside SoS, such as that $w_v^2 = w_v$ implies that $0 \leq w_v \leq 1$.
We give the proof in \cref{lem:sos-subset-sum}.
The bound on $p_1$ follows as a direct corollary by picking $a_v = (d_v(\rA) - d(\rA))$ for $v \in V$ (where we identify $V$ with $[n]$).

\paragraph{Upper bounding $p_2$ via spectral certificates.}

We next turn to upper bounding $p_2$.
A common strategy to show SoS upper bounds on quadratic forms is using the operator norm.
Indeed, for a matrix $M$, the polynomial $\norm{x}^2 \norm{M} - x^\top M x$ is a sum-of-squares in $x$ since the matrix $\norm{M} \cdot I - M$ is positive semidefinite and can hence be written as $LL^\top$.
Thus, $\norm{x}^2 \norm{M} - x^\top M x = \norm{Lx}^2$.

Unfortunately, applying this naively to bound $p_2$ does not work.
Indeed, using that $\sum_{i=1}^n w_i \leq \eta n$ implies $\norm{w}^2 \leq \eta n$, we can bound
\[
p_2(w) = w^\top \Paren{\rA - \tfrac{d(\rA)}{n} \op{\one}} w \leq \Norm{\rA - \tfrac{d(\rA)}{n} \op{\one}} \cdot \eta n \,.
\]
So we would need the operator norm to be on the order of $O(\sqrt{\log(1/\eta)} \sqrt{d(\rA)} + \log(1/\eta))$.
While this is true in the dense case, when $\dnull \geq \Omega(\log n)$ \cite{MR2155709}, this completely fails in the sparse case:
In particular when $d = O(1)$, the operator norm is scales as $\sqrt{\tfrac{\log n}{\log \log n}} \gg \sqrt{d(\rA)}$ \cite{krivelevich2003largest}.
As a consequence, the error guarantees we would obtain would (roughly) be of the form $\abs{\hat{d} - \dnull} \lesssim \sqrt{\tfrac{\log n}{\log \log n}}$ which is very far from optimal when $\dnull = O(1)$. 

The reason for this failure is that the spectral norm is dominated by outlier nodes that have much higher degrees than $\dnull$.
Yet, there is reason for hope: The associated eigenvectors are very localized and have small correlation with the vectors of the form $w$ that we care about.
Indeed, we can apply a diagonal rescaling to the centered adjacency matrix that downweighs the affect of such outlier nodes to obtain a more benign spectral.
Concretely, variants of the results in \cite{le2016estimating} (see also \cref{lem:normalized_adjacency_matrix_spectral_norm_concentration} on how we need to adapt their results), show the following:
Let $\rD \in \R^{n \times n}$ be the diagonal matrix, such that $\rD_{vv} = \max\{1, d_v(\rA)/(2\dnull)\}$.
Then, with probability at least $1- 1/\poly(n)$, it holds that $\norm{\rD^{-1/2}(\rA - \tfrac{d(\rA)}{n} \op{\one})\rD^{-1/2}} \lesssim \sqrt{d(\rA)}$.\footnote{Instead of rescaling, an alternative approach to ensure better spectral behavior is to delete high-degree nodes \cite{MR2155709}. We believe that this approach would likely yield an alternative certificate.}

While promising, this does not come for free and we still have to ensure, that the resulting reweighing of our $w$ variables does not increase $\norm{w}^2$ too much.
In particular, so far we can show
\begin{align}
    \cAlabel(w;\eta) \sststile{}{w} p_2(w) &= w^\top \Paren{\rA - \tfrac{d(\rA)}{n} \op{\one}} w \nonumber \\
    &= \Paren{\rD^{1/2}w}^\top \Paren{\rD^{-1/2}\Paren{\rA - \tfrac{d(\rA)}{n} \op{\one}}\rD^{-1/2}} \Paren{\rD^{1/2}w} \nonumber\\
    &\leq \norm{\rD^{-1/2}(\rA - \tfrac{d(\rA)}{n} \op{\one})\rD^{-1/2}} \cdot \norm{\rD^{1/2} w}^2 \lesssim \sqrt{d(\rA)} \cdot \norm{\rD^{1/2} w}^2 \label{eq:scaled_w}\,.
\end{align}
It remains to bound $\norm{\rD^{1/2} w}^2$.
For this, we will use similar techniques as in the upper bound of $p_1$.
Indeed, expanding the squared-norm and using that $w_v^2 = w_v$, we obtain
\begin{align*}
    \cAlabel(w;\eta) \sststile{}{w} \norm{\rD^{1/2} w}^2 &= \sum_{v \in V} w_v \rD_{vv}  = \sum_{v \in V} w_v \max\left\{1, \frac {d_v(\rA)}{2\dnull}\right\} \leq \sum_{v \in V} w_v + \frac 1 {2\dnull}\sum_{v \in V} w_v d_v(\rA) \\
    &\leq 2\eta n + \frac{1}{2\dnull} \sum_{v \in V} (d_v(\rA) - \dnull)\,.
\end{align*}
Note that the last sum is exactly the same as in $p_1$, except that $d(\rA)$ is replaced by $\dnull$.
This is a minor difference and we can still apply the same techniques and conclude that the above is upper bounded as
\[
    \eta n \Paren{ \frac{\log(1/\eta)}{\dnull} + \frac{\sqrt{\log(1/\eta)}}{\sqrt{\dnull}}} \lesssim \eta n \Paren{\frac{\log(1/\eta)}{\sqrt{\dnull}} + \sqrt{\log(1/\eta)}}  \,.
\]

Plugging this back into \cref{eq:scaled_w} and using that $\dnull \approx d(\rA)$, it follows that
\[
    \cAlabel(w;\eta) \sststile{}{w} p_2(w) \lesssim  \delta(G) \cdot \eta n \,.
\]
Formally, we require an SoS bound on $p_2(w)^2$ instead.
This can be obtained by using that similarly to the upper bound, we also have that $\sststile{}{x} x^\top M x \geq -\norm{M} \norm{x}^2$.
Finally, we can use that $\{-C \leq x \leq C\}\sststile{}{x} x^2 \leq C^2$ (cf. \cref{lem:preliminary-SOS-abs-to-square}).

\paragraph{Short certificates lead to efficient algorithms.}

We briefly describe how to turn this into an efficient algorithm.
See \cref{sec:main-alg} for full details.
In particular, consider the following constraint system in scalar-valued variables $z_v$ and matrix valued variable $Y$.
$A$ is the adjacency matrix of the (corrupted) input graph.
$Y$ is the ``guess'' of SoS for the uncorrupted graph.
All but the last constraint ($\exists$ ...)  ensure that $Y$ is $\eta$-close to the input graph (by $\odot$ we denote entrywise multiplication).
The last constraint ensures that it satisfies the goodness condition.
\begin{gather}
  \cA(Y,z; A,\eta) \seteq 
  \left. 
    \begin{cases}
     z \odot z = z \\
        \iprod{z, \one} \leq \eta n \\
      0 \leq Y \leq \one \one^{\top},\, Y=Y^\top \\
      Y \odot (\one - z) (\one - z)^\top = A \odot (\one - z) (\one - z)^\top \\
      \exists \text{ SoS proof,} \\
      \cA_{\text{label}}(w; 2 \eta) \sststile{4}{w} \iprod{Y-\frac{d(Y)}{n} \one \one^{\top}, 2 w \one^{\top}-w w^{\top}}^2 \lesssim \delta(Y)^2 (\eta n)^2
    \end{cases}
  \right\} \,. \label{poly_sys:graph_agreement}
\end{gather}
The constraint $0 \leq Y \leq \op{\one}$ is meant entry-wise.
We remark that the last constraint is formally not a polynomial inequality but can be written as such using auxiliary variables.
Writing this formally would very much obfuscate what is happening and we omit the details for clarity.
This technique is by now standard and we refer to, e.g., \cite{fleming2019semialgebraic} for more detail.
On a high level, one uses auxiliary variables to search for the coefficients of the SoS proof.

The arguments in the previous section show that the uncorrupted graph corresponds to a feasible solution (cf. \cref{lem:sos-feasibility}).
Further, an SoS version of the proof of identifiability we have given before shows that (cf. \cref{lem:sos-utility} for the full version)
\[
    \cA(Y,z; A,\eta) \sststile{}{Y,z}  \Paren{d(Y) - \dnull}^2
        \lesssim \eta^2 \log \Paren{1 / \eta} \dnull + \eta^2 \log^2 \Paren{1 / \eta} \,.
\]
It then follows by known results (see, e.g., \cite[Theorem 2.1]{raghavendra2018high}) that in time $n^{O(1)}$ we can compute an estimator $\hat{d}$ such that
\[
    \Abs{\hat{d} - \dnull}
        \lesssim \eta \sqrt{\log \paren{1 / \eta}} \sqrt{\dnull}  + \eta \log \paren{1 / \eta} \,.
\]

\subsection{Relation to previous works}
\label{sec:prev_works}

While previous works do not exactly follow the general strategy to show identifiability here, they can be interpreted as implicitly trying to exploit the existence of such certificates.
The crucial difference is how they try to upper bound the deviation of degrees over small subsets.
In particular, they establish this by requiring that \emph{all} degrees are tightly concentrated around $\dnull$.
This only holds when $\dnull \gtrsim \log n$ and fails in the sparse regime.
In this work, we showed that this strong concentration is not necessary.
Intuitively, it is not necessary since the outlier degrees in the sparse case are washed out when averaging over the set $S$.

Besides the efficient algorithm, we also view the fact that you can apply the proofs-to-algorithms framework to this problem as a contribution of the paper.
As we established, this leads to a very clean analysis and improved guarantees, that in particular also work in the sparse case and achieve optimal breakdown point.

\section{Robust edge density estimation algorithm}
\label{sec:main-alg}
In this section, we show that there exists an algorithm that achieves the following guarantee.

\begin{theorem}
\label[theorem]{thm:main_algo}
    Given corruption rate $\eta \in [0, \frac{1}{2})$ and the $\eta$-corrupted adjacency matrix $A$ of an \ER random graph $\rAnull \sim \bbG(n, \dnull/n)$ where $\dnull \geq c$ for any constant $c>0$, there exists a polynomial-time algorithm that outputs estimator $\hat{d}$ that satisfies
    \begin{equation*}
        \Abs{\hat{d} - \dnull}
        \lesssim \sqrt{\frac{\log(n) \cdot \dnull}{n}} + \frac{\eta \sqrt{\log \Paren{1 / \eta} \cdot \dnull}}{(1 - 2 \eta)^4} + \frac{\eta \log \Paren{1 / \eta}}{(1 - 2 \eta)^2} \,,
    \end{equation*}
    with probability $1-1/\poly(n)$.
\end{theorem}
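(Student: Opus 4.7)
The plan is to follow the sum-of-squares proofs-to-algorithms paradigm sketched in \cref{sec:alg_eff}. The algorithm solves the constant-degree SoS relaxation of the polynomial system $\cA(Y,z;A,\eta)$ in \eqref{poly_sys:graph_agreement} and returns $\hat{d}$ equal to the pseudo-expectation of $d(Y)$. The theorem then reduces to three ingredients: (i) feasibility, i.e.\ the uncorrupted pair $(\rAnull, z^\star)$, with $z^\star$ the indicator of the corrupted nodes, satisfies $\cA(\cdot,\cdot;A,\eta)$ with probability $1-1/\poly(n)$; (ii) an SoS utility bound of the form $\cA(Y,z;A,\eta) \sststile{}{Y,z} \bigparen{d(Y) - d(\rAnull)}^2 \lesssim \bigparen{\eta^2 \log(1/\eta)\dnull + \eta^2 \log^2(1/\eta)}/(1-2\eta)^{O(1)}$; and (iii) standard pseudo-expectation rounding that turns this into a pointwise bound on $\Abs{\hat{d} - d(\rAnull)}$. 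The missing $\sqrt{\log(n)\dnull/n}$ term in the theorem is then added via the triangle inequality with the standard Chernoff concentration $\Abs{d(\rAnull)-\dnull} \lesssim \sqrt{\log(n)\dnull/n}$.

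For feasibility, the only nontrivial constraint is the existence of a constant-degree SoS proof of $\delta(\rAnull)$-goodness. Writing the goodness polynomial as $p_1(w) - \tfrac12 p_2(w)$ exactly as in \cref{sec:alg_eff} and using the SoS fact $(a-b)^2 \leq 2(a^2+b^2)$, it suffices to bound $p_1(w)^2$ and $p_2(w)^2$ separately. For $p_1$, I apply the general subset-sum SoS lemma (\cref{lem:sos-subset-sum}) to $a_v = d_v(\rAnull) - d(\rAnull)$, with the scalar bound supplied by a Chernoff plus union bound over subsets of size $\leq \eta n$ (\cref{lem:degree-subset-sum}). For $p_2$, I carry out the spectral argument: write $p_2(w) = (\rD^{1/2} w)^\top M (\rD^{1/2} w)$ with $M = \rD^{-1/2}\bigparen{\rAnull - \tfrac{d(\rAnull)}{n}\op{\one}}\rD^{-1/2}$, bound it in SoS by $\Normop{M}\cdot \norm{\rD^{1/2} w}^2_2$ via the PSD square-root trick, invoke the adapted sparse spectral-concentration estimate (\cref{lem:normalized_adjacency_matrix_spectral_norm_concentration}, following \cite{le2016estimating}) to get $\Normop{M} \lesssim \sqrt{\dnull}$, and finally bound $\norm{\rD^{1/2} w}^2_2 = \sum_v w_v \rD_{vv}$ by a second invocation of \cref{lem:sos-subset-sum}. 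Passing from $p_2(w)$ to $p_2(w)^2$ uses the symmetric lower bound $p_2(w) \geq -\Normop{M}\cdot \norm{\rD^{1/2} w}^2_2$ together with \cref{lem:preliminary-SOS-abs-to-square}.

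For utility, I lift the identifiability calculation from \cref{sec:alg_ineff} into SoS. With $w$ supported on the union of $\{v: z_v=1\}$ (the SoS-guessed corrupted set) and the true corrupted set $S^\star$ (which enters as data, not as an SoS variable), one has $w \in \cAlabel(\cdot; 2\eta)$, and on the complement of this support $Y$ and $\rAnull$ agree by the gluing constraint $Y \odot (\one-z)(\one-z)^\top = A \odot (\one-z)(\one-z)^\top$. Unfolding $d(Y) - d(\rAnull)$ as the difference of edge counts on the subgraph and cut induced by $w$, and applying both goodness certificates (for $\rAnull$ from feasibility, for $Y$ from the last constraint in $\cA$), yields the SoS inequality $(1-2\eta)^2 \Abs{d(Y) - d(\rAnull)} \lesssim \eta\bigparen{\delta(Y) + \delta(\rAnull)}$. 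The final AM-GM plus $\gamma$-splitting step from \cref{sec:alg_ineff}, needed to absorb the $\sqrt{d(Y)}$ factor inside $\delta(Y)$ into the left-hand side, is a direct algebraic rearrangement of the squared relation that stays entirely inside constant-degree SoS.

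I expect the main obstacle to be the feasibility step, and specifically the SoS certificate for $p_2$ in the sparse regime $\dnull = o(\log n)$, where the naive spectral bound $\Normop{\rAnull - \tfrac{d(\rAnull)}{n}\op{\one}} = \Theta\bigparen{\sqrt{\log n / \log\log n}}$ completely destroys the target rate. The rescaling by $\rD$ repairs the operator norm but a priori inflates the $w$-side by $\norm{\rD^{1/2} w}^2_2$; the crux is that \emph{both} the spectral improvement and the control of $\norm{\rD^{1/2} w}^2_2$ can be enforced simultaneously in constant-degree SoS, the latter again reducing to \cref{lem:sos-subset-sum}. Once these two SoS facts are in hand, the rest of the proof is routine bookkeeping.
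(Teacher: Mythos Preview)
Your proposal is correct and follows essentially the same approach as the paper: the algorithm, the feasibility argument via \cref{lem:sos-subset-sum} for $p_1$ and the rescaled-spectral bound (\cref{lem:normalized_adjacency_matrix_spectral_norm_concentration}) plus \cref{lem:sos-subset-sum} for $p_2$, the SoS utility argument lifting the identifiability calculation of \cref{sec:alg_ineff} (with $w=\one-(\one-z)\odot(\one-z^\star)$), and the final pseudo-expectation/Jensen step combined with the triangle inequality against \cref{lem:average_degree_concentration} all match the paper's proof. The only cosmetic difference is that the paper works directly with the squared identity $(1-2\eta)^4\bigparen{d(Y)-d(Y^*)}^2 \leq \ldots$ rather than passing through an absolute-value formulation, which is precisely the ``squared relation'' you allude to.
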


\begin{remark}
    When $\eta \leq \frac{1}{2}- \eps$ for any constant $\eps \in (0, 1/2]$, the error bound becomes
    \begin{equation*}
        \Abs{\hat{d} - \dnull}
        \lesssim \sqrt{\frac{\log(n) \cdot \dnull}{n}} + \eta \sqrt{\log \Paren{1 / \eta} \cdot \dnull} + \eta \log \Paren{1 / \eta} \,.
    \end{equation*}
\end{remark}

The main idea of the algorithm follows the general paradigm of robust statistics via sum-of-squares: given observation of the $\eta$-corrupted graph $A$, find a graph $Y$ that differs from $A$ by at most $\eta n$ vertices and satisfies a set of properties $\cP$ of the uncorrupted \ER graph $\rAnull \sim \bbG(n, \dnull/n)$, then output the average degree $d(Y)$ as the estimator.

The crux of the algorithm is to determine the properties $\cP$ that $Y$ satisfies such that:
\begin{itemize}
    \item $\cP$ is sufficient to show that $d(Y)$ is provably close to the $\dnull$.
    \item $\cP$ is efficiently certifiable in SoS.
\end{itemize}

In our algorithm, this paradigm is implemented by the following three polynomial systems with variables $Y=(Y_{ij})_{i,j\in[n]}$ and $z=(z_i)_{i\in[n]}$ and with inputs the corruption rate $\eta$ and $\eta$-corrupted adjacency matrix $A$.

\paragraph{Integrality and size constraint of labeling}
The first polynomial system guarantees that $z$ is integral and sums up to at most $\eta n$.
\begin{gather}
  \cA_{\text{label}}(z; \eta) \seteq 
  \left\{
      z \odot z = z,\, \iprod{\one, z} \le \eta n
  \right\} \,. \label{poly_sys:label}
\end{gather}

\paragraph{Graph agreement constraint}
The second polynomial system is to certify that $Y$ differs from the observed $\eta$-corrupted graph $A$ by at most $\eta n$ vertices.
\begin{gather}
  \cA_{\text{graph}}(Y,z; A,\eta) \seteq 
  \left. 
    \begin{cases}
      0 \leq Y \leq \one \one^{\top},\, Y=Y^\top \\
      Y \odot (\one - z) (\one - z)^\top = A \odot (\one - z) (\one - z)^\top
    \end{cases}
  \right\} \,. \label{poly_sys:graph_agreement}
\end{gather}

\paragraph{Sum-of-squares certification of degree concentration}
The third polynomial system is the property $\cP$ that is efficiently certifiable and guarantees that $d(Y)$ will be provably close to $\dnull$.
\begin{gather}
    \cA_{\text{degree}}(Y; \eta) \seteq
    \left. 
    \begin{cases}
      \exists \text{ SoS proof in $w$ that,} \\
      \cA_{\text{label}}(w; 2 \eta) \sststile{4}{w} \iprod{Y-\frac{d(Y)}{n} \one \one^{\top}, 2 w \one^{\top}-w w^{\top}}^2 \leq C_1(\eta) \cdot n^2 \cdot d(Y) + C_2(\eta) \cdot n^2
    \end{cases}
  \right\} \,,\label{poly_sys:sos_certificate}
\end{gather}
where $C_1(\eta) = C_{\text{deg}} \eta^2 \log \Paren{1 / \eta}$, $C_2(\eta) = C_{\text{deg}} \eta^2 \log^2 \Paren{1 / \eta}$, and $C_{\text{deg}}$ is a universal constant.
Note that the SoS certificate constraint $\cA_{\text{degree}}(Y; \eta)$ can be formally modeled using polynomial inequalities by introducing auxiliary variables for the SoS proof coefficients. This is a standard technique and we refer the reader to \cite{KSS18, fleming2019semialgebraic} for more detailed discussions.

For the convenience of notation, we will consider the following combined polynomial system in remaining of the section
\begin{equation}
\label{poly_sys:sos_system}
    \cA(Y, z; A, \eta)
    \seteq \cA_{\text{label}}(z; \eta) \,\cup\, \cA_{\text{graph}}(Y,z; A,\eta) \,\cup\, \cA_{\text{degree}}(Y, R; \eta) \,.
\end{equation}
We will show that $\cA(Y, z; A, \eta)$ is both feasible and provides meaningful guarantees.

\paragraph{Feasibility}
It can be shown that $\cA(Y, z; A, \eta)$ is satisfied by uncorrupted \ER random graphs with high probability. More concretely, we will prove the following feasibility lemma in \cref{sec:feasibility}.
\begin{lemma}[Feasibility] \label[lemma]{lem:sos-feasibility}
    Let $\rAnull \sim \bbG(n, \dnull/n)$ with $\dnull \ge c$ for any constant $c>0$.
    Let $\eta \in [0, 1/2)$ and $\rA$ be an $\eta$-corrupted version of $\rAnull$.
    With probability $1-1/\poly(n)$, there exists $\rz^\circ \in \{0,1\}^n$ such that $(Y, z) = (\rAnull, \rz^\circ)$ is a feasible solution to $\cA(Y, z; \rA, \eta)$.
\end{lemma}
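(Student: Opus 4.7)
The plan is to take $\rz^\circ \in \{0,1\}^n$ to be the indicator vector of the at most $\eta n$ vertices corrupted by the adversary, set $(Y, z) = (\rAnull, \rz^\circ)$, and verify the three subsystems of $\cA(Y, z; \rA, \eta)$ separately. For $\cA_{\text{label}}(\rz^\circ; \eta)$: $\rz^\circ \odot \rz^\circ = \rz^\circ$ by $\{0,1\}$-valuedness and $\iprod{\one, \rz^\circ} \le \eta n$ by construction. For $\cA_{\text{graph}}(\rAnull, \rz^\circ; \rA, \eta)$: $\rAnull$ is symmetric with entries in $\{0,1\}$, and by the node corruption model (\cref{model:corrupted-er}) the adversary only modifies entries of $\rA$ involving corrupted vertices, so $\rA$ and $\rAnull$ agree on all entries $(i,j)$ with $\rz^\circ_i = \rz^\circ_j = 0$, which is exactly the support of $(\one - \rz^\circ)(\one - \rz^\circ)^\top$.

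The heart of the argument is to verify $\cA_{\text{degree}}(\rAnull; \eta)$, i.e., to produce a degree-$4$ SoS proof in $w$ of the stated inequality, which I would construct along the lines sketched in \cref{sec:alg_eff}. Expand $\iprod{\rAnull - \tfrac{d(\rAnull)}{n}\one\one^\top,\, 2w\one^\top - ww^\top} = 2 p_1(w) - p_2(w)$, where $p_1(w) = \sum_v w_v (d_v(\rAnull) - d(\rAnull))$ and $p_2(w) = w^\top(\rAnull - \tfrac{d(\rAnull)}{n}\one\one^\top)w$. The SoS identity $(2a - b)^2 \le 8 a^2 + 2 b^2$ reduces the task to producing SoS bounds of the form $p_1(w)^2, p_2(w)^2 \lesssim \delta(\rAnull)^2 (\eta n)^2$ with $\delta(\rAnull)^2 \lesssim \log(1/\eta)\, d(\rAnull) + \log^2(1/\eta)$, so that the final right-hand side collapses into the required form $C_1(\eta) n^2 d(\rAnull) + C_2(\eta) n^2$.

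For $p_1$, I would invoke the generic subset-sum SoS lemma (\cref{lem:sos-subset-sum}) with $a_v = d_v(\rAnull) - d(\rAnull)$; the required scalar bound $(\sum_{v \in S} a_v)^2 \lesssim \delta(\rAnull)^2 (\eta n)^2$ for all $S$ with $|S| \le 2\eta n$ follows with probability $1 - 1/\poly(n)$ from edge-subset concentration (\cref{lem:degree-subset-sum}) combined with a union bound. For $p_2$, I would introduce the rescaling $\rD_{vv} = \max\{1, d_v(\rAnull)/(2\dnull)\}$ and invoke \cref{lem:normalized_adjacency_matrix_spectral_norm_concentration} to obtain $\Norm{\rD^{-1/2}(\rAnull - \tfrac{d(\rAnull)}{n}\one\one^\top)\rD^{-1/2}}_{\op} \lesssim \sqrt{d(\rAnull)}$ with high probability. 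Rewriting $p_2(w) = (\rD^{1/2} w)^\top \bigl(\rD^{-1/2}(\cdot)\rD^{-1/2}\bigr)(\rD^{1/2}w)$ then yields $\pm p_2(w) \lesssim \sqrt{d(\rAnull)} \cdot \Norm{\rD^{1/2} w}^2$ in SoS, and $\Norm{\rD^{1/2}w}^2 = \sum_v w_v \rD_{vv} \le \eta n + \tfrac{1}{2\dnull}\sum_v w_v d_v(\rAnull)$ is again controlled by \cref{lem:sos-subset-sum}. Squaring the resulting two-sided bound $|p_2(w)| \lesssim \delta(\rAnull) \cdot \eta n$ via the standard device in \cref{lem:preliminary-SOS-abs-to-square} delivers the desired $p_2(w)^2$ bound.

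The main obstacle will be the bookkeeping: composing these SoS pieces so that the final constants match the exact form $C_1(\eta) = C_{\text{deg}} \eta^2 \log(1/\eta)$ and $C_2(\eta) = C_{\text{deg}} \eta^2 \log^2(1/\eta)$ prescribed in $\cA_{\text{degree}}$, and ensuring that the three underlying high-probability events (edge-subset concentration, $d(\rAnull) = (1 \pm o(1))\dnull$, and normalized spectral concentration) hold simultaneously with probability $1 - 1/\poly(n)$ under the assumption $\dnull \ge c$. Since $\delta(\rAnull)^2$ naturally splits into a $d(\rAnull)$-proportional part and a $\dnull$-independent part, after absorbing absolute constants into $C_{\text{deg}}$ the certificate takes the required form.
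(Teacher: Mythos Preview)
Your proposal is correct and matches the paper's proof almost exactly: the paper reduces feasibility to establishing the SoS certificate in $\cA_{\text{degree}}$ (\cref{lem:sos-certificate}), splits the inner product into the same $p_1$ and $p_2$, and bounds each via the same combination of \cref{lem:sos-subset-sum}, \cref{lem:degree-subset-sum}, and \cref{lem:normalized_adjacency_matrix_spectral_norm_concentration}. The only small deviation is that \cref{lem:normalized_adjacency_matrix_spectral_norm_concentration} is stated with centering by $\E\rAnull$ rather than $\tfrac{d(\rAnull)}{n}\one\one^\top$, so the paper first splits $p_2(w) = w^\top(\rAnull - \E\rAnull)w + w^\top(\E\rAnull - \tfrac{d(\rAnull)}{n}\one\one^\top)w$ and bounds the second piece directly; you would need to insert this one easy extra step.
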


\paragraph{Utility}
It can be shown that any pairs of feasible solutions $(Y^*, z^*)$ and $(Y, z)$ to low-degree SoS relaxation of $\cA(Y, z; A, \eta)$ satisfy that $|d(Y)-d(Y^*)|$ is sufficiently small. More concretely, we will prove the following utility lemma in \cref{sec:utility}.
\begin{lemma}[Utility]
\label[lemma]{lem:sos-utility}
    Given $A \in \{0, 1\}^{n \times n}$ and constant $\eta$ such that $\eta \in [0, 1/2)$, if $(Y^*, z^*)$ is a feasible solution to $\cA(Y, z; A, \eta)$ and $d(Y^*) \geq C_{Y^*}$ for any constant $C_{Y^*}>0$, then
    \begin{equation*}
        \cA
        \sststile{8}{Y, z} \Paren{d(Y) - d(Y^*)}^2
        \lesssim \frac{\eta^2 \log \Paren{1 / \eta}}{(1 - 2 \eta)^8} d(Y^*) + \frac{\eta^2 \log^2 \Paren{1 / \eta}}{(1 - 2 \eta)^4} \,.
    \end{equation*}
\end{lemma}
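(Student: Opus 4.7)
The plan is to mirror the identifiability argument from Section~\ref{sec:alg_ineff} inside the sum-of-squares proof system, leveraging the degree-concentration certificate built into $\cA_{\text{degree}}(Y;\eta)$ for the SoS variable $Y$ together with the analogous (fixed) certificate guaranteed for $Y^*$ by feasibility.

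The key step is to introduce a polynomial indicator $w \coloneqq z + z^* - z \odot z^*$ for the ``disagreement'' set on which $Y$ and $Y^*$ may differ. A short constant-degree SoS argument, using $z_i^2 = z_i$ from $\cA_{\text{label}}(z;\eta)$ together with the numerical facts that $z^*$ is $0/1$ and $\iprod{\one, z^*}\le \eta n$, shows $w_i^2 = w_i$ and $\iprod{\one,w}\le 2\eta n$ (for the latter, $\sum_i z_i z_i^* = \sum_i z_i^2 z_i^* \ge 0$ as a nonnegative combination of squares). Hence $\cA_{\text{label}}(w;2\eta)$ is an SoS consequence of $\cA$, which lets me invoke the certificate from $\cA_{\text{degree}}(Y;\eta)$ at this $w$ and also specialize the fixed SoS proof for $Y^*$ at the same $w$.

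I then establish the identity linking $d(Y) - d(Y^*)$ to the two certificate quantities. The graph-agreement constraints force $(Y - Y^*)\odot (\one-w)(\one-w)^\top = 0$; combined with symmetry of $Y - Y^*$ this gives $\Iprod{Y - Y^*,\, 2w\one^\top - ww^\top} = n \Paren{d(Y) - d(Y^*)}$. Inserting $\tfrac{d(Y)}{n}\one\one^\top$ and $\tfrac{d(Y^*)}{n}\one\one^\top$ and computing $\Iprod{\one\one^\top,\, 2w\one^\top - ww^\top} = 2ns - s^2$ with $s \coloneqq \iprod{\one,w}$ and rearranging yields
\[
    \tfrac{(n-s)^2}{n}\Paren{d(Y) - d(Y^*)}
    = \Iprod{Y - \tfrac{d(Y)}{n}\one\one^\top,\, 2w\one^\top - ww^\top}
    - \Iprod{Y^* - \tfrac{d(Y^*)}{n}\one\one^\top,\, 2w\one^\top - ww^\top}.
\]
Squaring, using $(a-b)^2 \le 2a^2 + 2b^2$ and the two SoS certificates, and then invoking the SoS inequality $(n-s)^4 \ge (1-2\eta)^4 n^4$ (which factors as $a^4 - b^4 = (a-b)(a+b)(a^2+b^2)$ with $a=n-s$, $b=(1-2\eta)n$; both linear factors are nonnegative since $s \le 2\eta n$) multiplied by the square $X^2 \coloneqq (d(Y)-d(Y^*))^2 \ge 0$, I obtain after dividing by $n^2$
\[
    (1-2\eta)^4 X^2 \le 2\, C_1(\eta)\Paren{d(Y) + d(Y^*)} + 4\, C_2(\eta).
\]

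To finish, write $d(Y) + d(Y^*) = 2 d(Y^*) + X$ and apply the SoS-valid AM-GM $2C_1(\eta) X \le \tfrac{(1-2\eta)^4}{2} X^2 + \tfrac{2 C_1(\eta)^2}{(1-2\eta)^4}$, which is simply the expansion of a perfect square in $X$. Absorbing the $X^2$ term into the left-hand side and solving gives $X^2 \lesssim \tfrac{C_1(\eta)}{(1-2\eta)^4} d(Y^*) + \tfrac{C_1(\eta)^2}{(1-2\eta)^8} + \tfrac{C_2(\eta)}{(1-2\eta)^4}$. The residual middle term merges into the first using $d(Y^*) \ge C_{Y^*}$ and $C_1(\eta) = O(1)$ over $\eta \in [0,1/2)$, yielding exactly the stated bound.

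The main obstacle I anticipate is bookkeeping of SoS degrees and of the auxiliary-variable encoding of $\cA_{\text{degree}}$: in particular, substituting the degree-$1$-in-$z$ polynomial $w$ into a degree-$4$-in-$w$ SoS proof, combining with the degree-$4$-in-$z$ multiplier $(n-s)^4 - (1-2\eta)^4 n^4$ and the degree-$2$-in-$Y$ identity, must be checked to fit within the degree-$8$ budget in $(Y,z)$ declared in the lemma.
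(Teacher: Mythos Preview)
Your proposal is correct and follows essentially the same route as the paper: define $w = z + z^* - z\odot z^*$, verify $\cA_{\text{label}}(w;2\eta)$ inside SoS, use the graph-agreement constraints to get the key identity $\tfrac{(n-s)^2}{n}\Paren{d(Y)-d(Y^*)} = \Iprod{Y-\tfrac{d(Y)}{n}\one\one^\top,2w\one^\top-ww^\top} - \Iprod{Y^*-\tfrac{d(Y^*)}{n}\one\one^\top,2w\one^\top-ww^\top}$, square, invoke both certificates, and then absorb the $d(Y)$ contribution by an AM--GM step.

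The only notable difference is the final absorption: the paper bounds $d(Y)$ via the multiplicative AM--GM $d(Y) \le \tfrac{X^2}{2C_\eta C_{Y^*}} + (\tfrac{C_\eta}{2}+1)d(Y^*)$ with a tuned $C_\eta \propto C_1(\eta)/(1-2\eta)^4$, whereas you write $d(Y)+d(Y^*)=2d(Y^*)+X$ and absorb the linear term $2C_1(\eta)X$ directly via the square $(aX-b)^2 \ge 0$. Your route is arguably cleaner and produces the same $(1-2\eta)^{-8}$ dependence once you weaken your first term from $(1-2\eta)^{-4}$ to $(1-2\eta)^{-8}$ to swallow the residual $\tfrac{C_1(\eta)^2}{(1-2\eta)^8}$ term (which is indeed dominated by $\tfrac{C_1(\eta)}{(1-2\eta)^8}d(Y^*)$ since $C_1(\eta)=O(1)$ and $d(Y^*)\ge C_{Y^*}$).
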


Now, we are ready to describe our algorithm that satisfy \cref{thm:main_algo}. The algorithm computes the degree-$8$ pseudo-expectation $\tilde{\E}$ of $\cA(Y, z; \rA, \eta)$ by solving the level-$8$ SoS relaxation of the integer program in \cref{poly_sys:sos_system}, then it outputs $\tilde{\E}[d(Y)]$ as the estimator $\hat{d}$.

\begin{algorithmbox}[Robust edge density estimation]
    \label[algorithm]{alg:robust-edge-density}
    \mbox{}\\
    \textbf{Input:}
        $\eta$-corrupted adjacency matrix $A$ and corruption fraction $\eta$.

    \textbf{Algorithm:}
    Obtain degree-$8$ pseudo-expectation $\tilde{\E}$ by solving level-$8$ sum-of-squares relaxation of program $\cA(Y, z; A, \eta)$ (defined in \cref{poly_sys:sos_system}) with input $A$ and $\eta$.
    
    \textbf{Output:} $\tilde{\E}[d(Y)]$.
\end{algorithmbox}

\begin{proof}[Proof of \cref{thm:main_algo}]
    We will show that \cref{alg:robust-edge-density} satisfies the guarantees of \cref{thm:main_algo}.
    
    By \cref{lem:sos-feasibility}, we know that, with probability $1-1/\poly(n)$, $\cA(Y, z; A, \eta)$ is satisfied by $(Y, z) = (\rAnull, \rz^\circ)$ where $\rAnull \sim \bbG(n, \dnull/n)$ is the uncorrupted graph and $\rz^\circ$ is the set of corrupted vertices.

    By \cref{lem:average_degree_concentration}, the average degree of $\rAnull \sim \bbG(n, \dnull/n)$ satisfies, with probability $1-1/\poly(n)$,
    \begin{equation}
    \label{eq:main-thm-2}
        |d(\rAnull)-\dnull|
        \lesssim \sqrt{\frac{\dnull \log(n)}{n}} \,.
    \end{equation}
    Therefore, $d(\rAnull) \geq \dnull - \sqrt{\frac{\dnull \log(n)}{n}} \geq c-o(1) \geq C_{\rAnull}$.
    
    Now, we can apply \cref{lem:sos-utility} with $(Y^*, z^*) = (\rAnull, \rz^\circ)$ and $C_{Y^*} = C_{\rAnull}$. This implies that the degree-$8$ pseudo-expectation $\tilde{\E}$ obtained in \cref{alg:robust-edge-density} satisfies
    \begin{equation*}
        \tilde{\E}[\Paren{d(Y) - d(\rAnull)}^2]
        \lesssim \frac{\eta^2 \log \Paren{1 / \eta}}{(1 - 2 \eta)^8} d(\rAnull) + \frac{\eta^2 \log^2 \Paren{1 / \eta}}{(1 - 2 \eta)^4} \,.
    \end{equation*}
    By Jensen's inequality for pseudo-expectations, it follows that
    \begin{equation*}
        \Paren{\tilde{\E}[d(Y)] - d(\rAnull)}^2
        \lesssim \frac{\eta^2 \log \Paren{1 / \eta}}{(1 - 2 \eta)^8} d(\rAnull) + \frac{\eta^2 \log^2 \Paren{1 / \eta}}{(1 - 2 \eta)^4} \,.
    \end{equation*}
    Since the estimator output by \cref{alg:robust-edge-density} is $\hat{d} = \tilde{\E}[d(Y)]$, it follows that
    \begin{align*}
        \Abs{\hat{d} - d(\rAnull)}
        \lesssim & \sqrt{\frac{\eta^2 \log \Paren{1 / \eta}}{(1 - 2 \eta)^8} d(\rAnull) + \frac{\eta^2 \log^2 \Paren{1 / \eta}}{(1 - 2 \eta)^4}} \\
        \leq & \frac{\eta \sqrt{\log \Paren{1 / \eta}}}{(1 - 2 \eta)^4} \sqrt{d(\rAnull)} + \frac{\eta \log \Paren{1 / \eta}}{(1 - 2 \eta)^2} \,.
    \end{align*}
    By \cref{eq:main-thm-2}, we have $d(\rAnull) \leq \sqrt{\frac{\dnull \log(n)}{n}} + \dnull \leq 2 \dnull$ with probability $1-1/\poly(n)$. Therefore,
    \begin{equation*}
        \Abs{\hat{d} - d(\rAnull)}
        \lesssim \frac{\eta \sqrt{\log \Paren{1 / \eta}}}{(1 - 2 \eta)^4} \sqrt{\dnull} + \frac{\eta \log \Paren{1 / \eta}}{(1 - 2 \eta)^2} \,.
    \end{equation*}
    By triangle inequality and \cref{eq:main-thm-2}, we have
    \begin{equation*}
        |\hat{d} - \dnull|
        \leq |\hat{d} - d(\rAnull)| + |d(\rAnull) - \dnull|
        \lesssim \sqrt{\frac{\log(n) \cdot \dnull}{n}} + \frac{\eta \sqrt{\log \Paren{1 / \eta} \cdot \dnull}}{(1 - 2 \eta)^4} + \frac{\eta \log \Paren{1 / \eta}}{(1 - 2 \eta)^2} \,.
    \end{equation*}
    By union bound on failure probability of \cref{lem:sos-feasibility} and \cref{lem:average_degree_concentration}, \cref{alg:robust-edge-density} succeeds with probability $1-1/\poly(n)$. This finishes the proof for the error guarantee.

    For time complexity of \cref{alg:robust-edge-density}, since \cref{poly_sys:sos_system} has polynomial bit complexity, solving the level-$8$ SoS relaxation of \cref{poly_sys:sos_system} and evaluating $\tilde{\E}[d(Y)]$ can be done in polynomial time. Therefore, \cref{thm:main_algo} has polynomial runtime, which finishes the proof.
\end{proof}

\subsection{SoS feasibility}
\label{sec:feasibility}
In this section, we prove \cref{lem:sos-feasibility}, which is a direct corollary of the following lemma.

\begin{lemma} \label[lemma]{lem:sos-certificate}
    For any constants $c, C > 0$, the following holds.
    Let $\rA \sim \bbG(n, \dnull/n)$ with $\dnull \ge c$. 
    Let $d(\rA) = \frac{1}{n} \sum_{i,j} \rA_{ij}$ and $\gamma \in [0, 1]$.
    Then with probability at least $1 - n^{-C}$, we have $\{w \odot w = w, \; \iprod{\one, w} \le \gamma n\} \sststile{4}{w}$
    \begin{align*}
        \Iprod{\rA - \frac{d(\rA)}{n} \op{\one}, \; 2 w \one^\top - \op{w}}^2
        \le
        C' \cdot \Paren{\gamma^2 \log(e/\gamma)\, n^2 \cdot d(\rA) + \gamma^2 \log^2(e/\gamma) n^2} \,,
    \end{align*}
    where $C'$ only depends on $c$ and $C$.
\end{lemma}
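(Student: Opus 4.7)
The plan is to decouple the inner product into its ``linear'' and ``quadratic'' parts in $w$ and certify each separately inside SoS. Expanding and using that $\rA$ has zero diagonal, one has the polynomial identity
\[
    \iprod{\rA - \tfrac{d(\rA)}{n}\one\one^\top,\; 2w\one^\top - ww^\top} \;=\; 2P_1(w) - P_2(w)\,,
\]
where $P_1(w) := \sum_{v} w_v(d_v(\rA) - d(\rA))$ and $P_2(w) := w^\top\bigl(\rA - \tfrac{d(\rA)}{n}\one\one^\top\bigr)w$. Combined with the SoS inequality $(2a - b)^2 \le 8a^2 + 2b^2$, it suffices to produce degree-$4$ SoS proofs of $\cA_{\text{label}}(w;\gamma)\sststile{4}{w} P_i(w)^2 \lesssim B_i$ for $i \in \{1,2\}$, where each $B_i$ matches (up to constants) the right-hand side of the claimed bound.

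The first bound ($P_1$) I would obtain from the SoS subset-sum lemma (\cref{lem:sos-subset-sum}) applied with $a_v = d_v(\rA) - d(\rA)$. The required hypothesis is a uniform pointwise bound $\bigl(\sum_{v \in S} a_v\bigr)^2 \lesssim B_1$ over all $S$ with $|S| \le \gamma n$. The centered sum is a linear functional of the independent Bernoulli edge indicators $\{\rA_{ij}\}_{i<j}$, with bounded coefficients and variance $O(|S|\,\dnull)$. Bernstein's inequality then yields subgaussian tails up to scale $\sqrt{|S|\,\dnull}$ and subexponential tails beyond, and a union bound over the $\binom{n}{\gamma n} \le (e/\gamma)^{\gamma n}$ such sets shows that, with probability $1 - n^{-C}$, simultaneously for all $S$ of size at most $\gamma n$, $\bigl|\sum_{v \in S}(d_v(\rA) - d(\rA))\bigr| \lesssim \gamma n\sqrt{d(\rA)\log(e/\gamma)} + \gamma n \log(e/\gamma)$. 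Squaring this and invoking \cref{lem:sos-subset-sum} gives $B_1$ of the required form.

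The second bound ($P_2$) is obtained via a diagonally-rescaled spectral certificate. Let $\rD$ be the diagonal matrix with $\rD_{vv} = \max\{1, d_v(\rA)/(2\dnull)\}$. By \cref{lem:normalized_adjacency_matrix_spectral_norm_concentration}, with probability $1 - n^{-C}$ we have $\Normop{\rD^{-1/2}(\rA - \tfrac{d(\rA)}{n}\one\one^\top)\rD^{-1/2}} \lesssim \sqrt{d(\rA)}$, which is a degree-$2$ SoS inequality in an auxiliary vector $x$. Substituting $x = \rD^{1/2}w$ converts this into $\{w\odot w = w\} \sststile{2}{w} P_2(w) \lesssim \sqrt{d(\rA)}\cdot\|\rD^{1/2}w\|^2$. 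Since $\rD_{vv}$ is a scalar and $w_v^2 = w_v$, one has $\|\rD^{1/2}w\|^2 = \sum_{v}\rD_{vv} w_v \le \sum_v w_v + \tfrac{1}{2\dnull}\sum_v w_v d_v(\rA)$ using $\rD_{vv} - 1 \le d_v(\rA)/(2\dnull)$. A second application of \cref{lem:sos-subset-sum} with $a_v = d_v(\rA) - \dnull$ bounds the right-hand side by $\lesssim \gamma n\bigl(1 + \sqrt{\log(e/\gamma)/\dnull} + \log(e/\gamma)/\dnull\bigr)$. Multiplying by $\sqrt{d(\rA)} \asymp \sqrt{\dnull}$ yields a one-sided linear bound on $P_2(w)$; combined with the matching lower bound obtained by applying the same spectral argument to $-(\rA - \tfrac{d(\rA)}{n}\one\one^\top)$ and with \cref{lem:preliminary-SOS-abs-to-square}, this promotes to an SoS bound $P_2(w)^2 \lesssim B_2$ of the required form.

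Combining the two squared bounds proves the lemma, with failure probability controlled by a union bound over the three random events. The main technical obstacle is the spectral step: in the sparse regime $\dnull = O(1)$, the unrescaled spectral norm $\|\rA - \tfrac{d(\rA)}{n}\one\one^\top\|_{\mathrm{op}}$ scales as $\sqrt{\log n / \log\log n}$, far exceeding $\sqrt{d(\rA)}$, so the reweighting by $\rD$ is indispensable. The concomitant worry that $\|\rD^{1/2}w\|^2$ could blow up because of a few vertices with $\rD_{vv} \gg 1$ is handled by exactly the same subset-sum mechanism used for $P_1$: the constraint $\iprod{\one,w} \le \gamma n$ forces $w$ to concentrate on a small set, over which the excess degree is Chernoff-controlled, so that the few extreme-degree vertices cannot inflate the reweighted norm beyond the target scale.
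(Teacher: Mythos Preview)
Your proposal is correct and follows essentially the same approach as the paper: the same $2P_1 - P_2$ decomposition, the same SoS subset-sum certificate for $P_1$, and the same diagonally-rescaled spectral certificate for $P_2$ combined with a second subset-sum bound on $\|\rD^{1/2}w\|^2$. The only cosmetic difference is that the paper centers at $\E\rA$ (respectively $\E d(\rA)$) first and handles the residual $\E\rA - \tfrac{d(\rA)}{n}\one\one^\top$ separately, whereas you invoke \cref{lem:normalized_adjacency_matrix_spectral_norm_concentration} directly on $\rA - \tfrac{d(\rA)}{n}\one\one^\top$; since that lemma is stated for $\rA - \E\rA$, you would in practice need the same splitting, but the extra term is trivially bounded.
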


\begin{proof}
    Note
    \begin{align*}
        \Bigiprod{\rA - \frac{d(\rA)}{n} \op{\one}, \; 2 w \one^\top - \op{w}}
        =
        2 \cdot \Bigiprod{w, \rA\one - d(\rA)\cdot\one} - w^\top \Bigparen{\rA - \frac{d(\rA)}{n}\op\one} w \,.
    \end{align*}
    We will bound $\iprod{w, \rA\one - d(\rA)\cdot\one}$ and $w^\top \paren{\rA - \frac{d(\rA)}{n}\op\one} w$ separately.
    By \cref{lem:average_degree_concentration}, we have 
    \begin{equation*}
        \bigabs{d(\rA) - \E d(\rA)} 
        \lesssim \max\Set{\frac{\log n}{n}, \sqrt{\frac{\log n}{n}} \cdot \sqrt{\dnull}}
    \end{equation*}
    with probability $1 - 1/\poly(n)$.
    (Since $\dnull \ge \Omega(1)$, this bound can be simplified to $\abs{d(\rA) - \E d(\rA)} \lesssim \sqrt{\log(n)/n} \cdot \sqrt{\dnull}$.)
    We will condition our following analysis on this event.

    \paragraph{Bounding $\iprod{w, \rA\one - d(\rA)\cdot\one}$}
    Note
    \begin{equation} \label{eq:sos-feasibility-1-0}
        \Iprod{w, \rA \one - d(\rA) \cdot \one}
        = \Iprod{w, \rA \one - \E d(\rA) \cdot \one} + \bigparen{\E d(\rA) - d(\rA)} \cdot \Iprod{\one, w} \,,
    \end{equation}
    where the second term on the right side can be easily bounded as follows, 
    $\{\zero \le \iprod{\one, w} \le \gamma n \} \; \sststile{1}{}$
    \begin{equation} \label{eq:sos-feasibility-1-1}
        \bigabs{ (\E d(\rA) - d(\rA)) \cdot \Iprod{\one, w} }
        \le \bigabs{\E d(\rA) - d(\rA)} \cdot \gamma n
        \lesssim \gamma \cdot \max\Set{\log n, \sqrt{n \log(n) \, \dnull}} \,.
    \end{equation}
    Now we bound the first term on the right side of \cref{eq:sos-feasibility-1-0}.
    For $i \in [n]$, let $\rv{d}_i$ be the degree of node $i$ in $\rA$.
    Then
    \[
        \Iprod{w, \rA \one - \E d(\rA) \cdot \one}
        = \sum_{i} w_i (\rv{d}_i - \E d(\rA)) \,.
    \]
    For a subset $S \subseteq [n]$, we have $\sum_{i \in S} \rv{d}_i =  \rv{e}(S, \bar{S}) + 2 \rv{e}(S)$, which implies
    \[
        \sum_{i \in S} (\rv{d}_i - \E d(\rA)) 
        =  \rv{e}(S, \bar{S}) + 2 \rv{e}(S) - \E \rv{e}(S, \bar{S}) - 2 \E \rv{e}(S) \,.
    \]
    Hence, by \cref{lem:degree-subset-sum}, the following holds with probability $1-1/\poly(n)$: 
    For every subset $S \subseteq [n]$ with $|S| \le \gamma n$,
    \begin{align*}
        \Bigabs{ \sum_{i \in S} (\rv{d}_i - \E d(\rA)) }
        &\le \Bigabs{\rv{e}(S, \bar{S}) - \E \rv{e}(S, \bar{S})} + 2\Bigabs{\rv{e}(S) - \E \rv{e}(S)} \\
        &\lesssim \gamma \log(e/\gamma)\, n + \gamma \sqrt{\log(e/\gamma)}\, n \sqrt{\dnull} \,.
    \end{align*}
    By \cref{lem:sos-subset-sum}, the above is also true in SoS, i.e., $\{\zero \le w \le \one, \; \iprod{\one, w} \le \gamma n\} \sststile{1}{}$
    \begin{equation} \label{eq:sos-feasibility-1-2}
        \Bigabs{\sum_{i} w_i (\rv{d}_i - \E d(\rA))} 
        \lesssim 
        \gamma \log(e/\gamma)\, n + \gamma \sqrt{\log(e/\gamma)}\, n \sqrt{\dnull} \,.
    \end{equation}
    Therefore, putting \cref{eq:sos-feasibility-1-1} and \cref{eq:sos-feasibility-1-2} together, we have $\{\zero \le w \le \one, \; \iprod{\one, w} \le \gamma n\} \; \sststile{1}{}$
    \begin{align} \label{eq:sos-feasibility-1}
        % \Set{
        %     \zero \le w \le \one, \;
        %     \iprod{\one, w} \le \gamma n
        % }
        % \sststile{1}{}
        {
            \bigabs{\Iprod{w, \rA\one - d(\rA)\cdot\one}} 
            \lesssim 
            \gamma \log(e/\gamma)\, n + \gamma \sqrt{\log(e/\gamma)}\, n \sqrt{\dnull}
        }
    \end{align}
    with probability $1-1/\poly(n)$.
    % For any $\dnull \gtrsim \log^2(n) / n^2$, the bound in \cref{eq:sos-feasibility-1} can be further simplified to $\gamma \sqrt{\log(e/\gamma)}\, n \sqrt{\dnull}$.

    \paragraph{Bounding $w^\top \paren{\rA - \frac{d(\rA)}{n}\op\one} w$}
    Note
    \begin{align} \label{eq:sos-feasibility-2-0}
        w^\top \Bigparen{\rA - \frac{d(\rA)}{n}\op\one} w
        = w^\top \Bigparen{\rA - \E\rA} w + w^\top \Bigparen{\E\rA - \frac{d(\rA)}{n}\op\one} w \,.
    \end{align}
    
    We bound the first term on the right hand side of \cref{eq:sos-feasibility-2-0}.
    Let $\tilde{\rv{D}}$ be the $n$-by-$n$ diagonal matrix whose $i$-th diagonal entry is $\max\{1, \rv{d}_i \big/ 2\E d(\rA)\}$.
    As
    \[
        w^\top (\rA - \E\rA) w
        = (\tilde{\rv{D}}^{1/2} w)^\top \; \tilde{\rv{D}}^{-1/2} (\rA - \E\rA) \tilde{\rv{D}}^{-1/2} \; (\tilde{\rv{D}}^{1/2} w) \,,
    \]
    then
    \[
        \Abs{w^\top (\rA - \E\rA) w}
        \le
        \Normop{\tilde{\rv{D}}^{-1/2} (\rA - \E\rA) \tilde{\rv{D}}^{-1/2}} \, \Norm{\tilde{\rv{D}}^{1/2} w}_2^2 \,.
    \]
    By \cref{lem:normalized_adjacency_matrix_spectral_norm_concentration}, the following holds with probability $1-1/\poly(n)$:
    \[
        \Normop{\tilde{\rv{D}}^{-1/2} (\rA - \E\rA) \tilde{\rv{D}}^{-1/2}}
        \lesssim
        \sqrt{\dnull} \,.
    \]
    Since
    \begin{align*}
        \Set{\zero \le w \le \one}
        \; \sststile{2}{} \;
        \Norm{\tilde{\rv{D}}^{1/2} w}_2^2
        &= \sum_i \max\Set{1, \frac{\rv{d}_i}{2\E d(\rA)}} w_i^2 \\
        &\le \sum_i w_i + \sum_i \frac{\rv{d}_i}{2\E d(\rA)} w_i \\
        &= \frac{3}{2} \sum_i w_i + \frac{1}{2\E d(\rA)} \sum_i w_i (\rv{d}_i - \E d(\rA)) \,,
    \end{align*}
    then by \cref{eq:sos-feasibility-1-2},
    \begin{align*}
        \Set{
            \zero \le w \le \one, \;
            \iprod{\one, w} \le \gamma n
        }
        \; \sststile{2}{} \;
        {
            \Norm{\tilde{\rv{D}}^{1/2} w}_2^2
            \lesssim
            \gamma n + \frac{\gamma \log(e/\gamma)\, n}{\dnull} + \frac{\gamma \sqrt{\log(e/\gamma)}\, n}{\sqrt{\dnull}}
        } \,.
    \end{align*}
    Therefore, the following holds with probability $1-1/\poly(n)$:
    $\{\zero \le w \le \one, \; \iprod{\one, w} \le \gamma n\} \; \sststile{2}{}$
    \begin{align} \label{eq:sos-feasibility-2-1}
        \Abs{w^\top (\rA - \E\rA) w} 
        \lesssim
        \gamma n \sqrt{\dnull} + \frac{\gamma \log(e/\gamma)\, n}{\sqrt{\dnull}} + \gamma \sqrt{\log(e/\gamma)}\, n \,.
    \end{align}

    Now we bound the second term on the right hand side of \cref{eq:sos-feasibility-2-0}.
    Since
    \begin{align*}
        \E\rA - \frac{d(\rA)}{n}\op\one
        &= \frac{\dnull}{n}\Paren{\op\one - \id} - \frac{d(\rA)}{n}\op\one \\
        &= \frac{\dnull - d(\rA)}{n} \op\one - \frac{\dnull}{n} \id \,,
    \end{align*}
    then
    \begin{align*}
        w^\top \Bigparen{\E\rA - \frac{d(\rA)}{n}\op\one} w
        &= \frac{\dnull - d(\rA)}{n} \iprod{\one, w}^2 - \frac{\dnull}{n} \norm{w}_2^2 \,.
    \end{align*}
    Thus, $\{\zero \le w \le \one, \; \iprod{\one, w} \le \gamma n\} \; \sststile{2}{}$
    \begin{align*}
        \Abs{w^\top \Bigparen{\E\rA - \frac{d(\rA)}{n}\op\one} w}
        &\le \gamma^2 n \bigabs{\dnull - d(\rA)} + \gamma \dnull \\
        &\lesssim \gamma^2 \log n + \gamma^2 \sqrt{n \log(n)\, \dnull} + \gamma \dnull \,. \numberthis \label{eq:sos-feasibility-2-2}
    \end{align*}
    Therefore, putting \cref{eq:sos-feasibility-2-1} and \cref{eq:sos-feasibility-2-2} together, and assuming $\dnull \ge \Omega(1)$, we have
    \begin{align} \label{eq:sos-feasibility-2}
        \Set{
            \zero \le w \le \one, \;
            \iprod{\one, w} \le \gamma n
        }
        \; \sststile{2}{} \;
        {
            \Abs{w^\top \Bigparen{\rA - \frac{d(\rA)}{n}\op\one} w}
            \lesssim 
            \gamma \log(e/\gamma)\, n + \gamma \sqrt{\log(e/\gamma)}\, n \sqrt{\dnull}
        }
    \end{align}
    with probability $1 - 1/\poly(n)$.

    \paragraph{Putting things together}
    Using two simple SoS facts $\sststile{}{x,y} \{(x+y)^2 \le 2x^2 + 2y^2\}$ and $\{|x| \le B\} \sststile{}{x} \{x^2 \le B^2\}$, together with \cref{eq:sos-feasibility-1} and \cref{eq:sos-feasibility-2}, we have
    \begin{align*}
        \Iprod{\rA - \frac{d(\rA)}{n} \op{\one}, \; 2 w \one^\top - \op{w}}^2
        &= \Paren{ 2 \bigiprod{w, \rA\one - d(\rA)\cdot\one} - w^\top \Bigparen{\rA - \frac{d(\rA)}{n}\op\one} w }^2 \\
        &\le 8 \bigiprod{w, \rA\one - d(\rA)\cdot\one}^2 + 2 \Paren{w^\top \Bigparen{\rA - \frac{d(\rA)}{n}\op\one} w}^2 \\
        &\lesssim \gamma^2 \log(e/\gamma) n^2 \dnull + \gamma^2 \log^2(e/\gamma) n^2 \\
        &\lesssim \gamma^2 \log(e/\gamma) n^2 d(\rA) + \gamma^2 \log^2(e/\gamma) n^2 \,,
    \end{align*}
    where in the last step we used $d(\rA) = (1+o(1)) \dnull$.
\end{proof}

\subsection{SoS utility}
\label{sec:utility}
In this section, we prove \cref{lem:sos-utility}.
The key observation is that $Y$ and $Y^*$ will have large agreement because they both agree with $A$ on $(1-\eta)n$ vertices. Therefore, $|d(Y) - d(Y^*)|$ only depends the set of vertices of size at most $2 \eta n$ that $Y$ and $Y^*$ differ, which can be bounded by the SoS certificate in $\cA_{\text{degree}}(Y; \eta)$.

\begin{proof}[Proof of \cref{lem:sos-utility}]
    Let $w = \one-(\one-z) \odot (\one-z^*)$. By constraints $Y \odot (\one -z) (\one -z)^\top = A \odot (\one -z) (\one -z)^\top$ and $Y^* \odot (\one - z^*) (\one - z^*)^\top = A \odot (\one - z^*) (\one - z^*)^\top$, we have
    \begin{align*}
        \cA
        \sststile{4}{Y, z} Y \odot (\one - w) (\one - w)^\top
        & = Y \odot (\one -z) (\one -z)^\top \odot (\one -z^*) (\one -z^*)^\top \\
        & = A \odot (\one -z) (\one -z)^\top \odot (\one -z^*) (\one -z^*)^\top \\
        & = A \odot (\one -z^*) (\one -z^*)^\top \odot (\one -z) (\one -z)^\top \\
        & = Y^* \odot (\one -z^*) (\one -z^*)^\top \odot (\one -z) (\one -z)^\top \\
        & = Y^* \odot (\one - w) (\one - w)^\top \,.
    \end{align*}
    Therefore, it follows that
    \begin{align*}
        \cA
        \sststile{8}{Y, z}
        n \Bigparen{d(Y) - d(Y^*)}
        = & \iprod{Y-Y^*,\one \one^{\top}} \\
        = & \iprod{Y-Y^*,\one \one^{\top} - (\one - w) (\one - w)^\top} \\
        = & \iprod{Y-Y^*,2 w \one^\top - w w^\top} \\
        = & \iprod{Y-\frac{d(Y)}{n}\one \one^{\top} +\frac{d(Y)}{n}\one \one^{\top}-\frac{d(Y^*)}{n}\one \one^{\top}+\frac{d(Y^*)}{n}\one \one^{\top}-Y^*,2 w \one^\top - w w^\top} \\
        = & \iprod{Y-\frac{d(Y)}{n}\one \one^{\top},2 w \one^\top - w w^\top} + \iprod{\frac{d(Y^*)}{n}\one \one^{\top}-Y^*,2 w \one^\top - w w^\top} \\
        & + \iprod{\frac{d(Y)}{n}\one \one^{\top}-\frac{d(Y^*)}{n}\one \one^{\top},2 w \one^\top - w w^\top} \,.
    \end{align*}
    Notice that
    \begin{equation*}
        \iprod{\frac{d(Y)}{n}\one \one^{\top}-\frac{d(Y^*)}{n}\one \one^{\top},2 w \one^\top - w w^\top} = n \Bigparen{2 \frac{\iprod{w,\one}}{n}- \Paren{\frac{\iprod{w,\one}}{n}}^2} \Bigparen{d(Y) - d(Y^*)} \,.
    \end{equation*}
    By re-arranging terms, we can get
    \begin{align*}
        \cA
        \sststile{8}{Y, z}
        n \Bigparen{1-\frac{\iprod{w,\one}}{n}}^2 \Bigparen{d(Y) - d(Y^*)}
        = & \iprod{Y-\frac{d(Y)}{n}\one \one^{\top},2 w \one^\top - w w^\top} + \iprod{\frac{d(Y^*)}{n}\one \one^{\top}-Y^*,2 w \one^\top - w w^\top}
    \end{align*}
    Squaring both sides, we get
    \begin{align}
    \label{eq:sos-utility-1}
    \begin{split}
        \cA
        \sststile{8}{Y, z}
        n^2 \Bigparen{1-\frac{\iprod{w,\one}}{n}}^4 & \Bigparen{d(Y) - d(Y^*)}^2 \\
        = & \Bigparen{\iprod{Y-\frac{d(Y)}{n}\one \one^{\top},2 w \one^\top - w w^\top} + \iprod{\frac{d(Y^*)}{n}\one \one^{\top}-Y^*,2 w \one^\top - w w^\top}}^2 \\
        \leq & 2\iprod{Y-\frac{d(Y)}{n}\one \one^{\top},2 w \one^\top - w w^\top}^2 + 2 \iprod{\frac{d(Y^*)}{n}\one \one^{\top}-Y^*,2 w \one^\top - w w^\top}^2
    \end{split}
    \end{align}
    By definition of $w$, it follows that $\cA \sststile{2}{z} w \odot w = w$ and
    \begin{align}
    \label{eq:sos-utility-w-size}
    \begin{split}
        \cA
        \sststile{2}{z}
        \iprod{\one, w}
        = & \iprod{\one, \one- (\one-z) \odot (\one-z^*)} \\
        = & n - \sum_{i \in [n]} (1-z_i)(1-z^*_i) \\
        = & \sum_{i \in [n]} z_i + \sum_{i \in [n]} z^*_i - \sum_{i \in [n]} z_i z^*_i \\
        \leq & 2 \eta n \,.
    \end{split}
    \end{align}
    Therefore, $w$ satisfies $\cA_{\text{label}}(w;2\eta)$, and, by the SOS certificate in $\cA_{\text{degree}}$, it follows that
    \begin{equation}
    \label{eq:sos-utility-2}
        \cA
        \sststile{8}{Y, z}
        \iprod{Y-\frac{d(Y)}{n} \one \one^{\top}, 2 w \one^{\top}-w w^{\top}}^2 \leq C_1(\eta) \cdot n^2 \cdot d(Y) + C_2(\eta) \cdot n^2 \,.
    \end{equation}
    and by \cref{lem:sos-feasibility},
    \begin{equation}
    \label{eq:sos-utility-3}
        \cA
        \sststile{8}{Y, z}
        \iprod{\frac{d(Y^*)}{n}\one \one^{\top}-Y^*,2 w \one^\top - w w^\top}^2 \leq C_1(\eta) \cdot n^2 \cdot d(Y^*) + C_2(\eta) \cdot n^2 \,.
    \end{equation}
    Plugging \cref{eq:sos-utility-2} and \cref{eq:sos-utility-3} into \cref{eq:sos-utility-1}, we get
    \begin{equation*}
        \cA
        \sststile{8}{Y, z}
        n^2 \Bigparen{1-\frac{\iprod{w,\one}}{n}}^4 \Bigparen{d(Y) - d(Y^*)}^2
        \leq C_1(\eta) \cdot n^2 \cdot d(Y) + C_1(\eta) \cdot n^2 \cdot d(Y^*) + 2 C_2(\eta) \cdot n^2 \,.
    \end{equation*}
    Dividing both sides by $n^2$, we get
    \begin{equation*}
        \cA
        \sststile{8}{Y, z}
        \Bigparen{1-\frac{\iprod{w,\one}}{n}}^4 \Bigparen{d(Y) - d(Y^*)}^2
        \leq C_1(\eta) \cdot d(Y) + C_1(\eta) \cdot d(Y^*) + 2 C_2(\eta) \,.
    \end{equation*}
    Plugging in $\cA \sststile{2}{z} \iprod{\one, w} \leq 2 \eta n$ from \cref{eq:sos-utility-w-size}, we get
    \begin{equation}
    \label{eq:sos-utility-5}
        \cA
        \sststile{8}{Y, z}
        (1 - 2 \eta)^4 \Bigparen{d(Y) - d(Y^*)}^2
        \leq C_1(\eta) \cdot d(Y) + C_1(\eta) \cdot d(Y^*) + 2 C_2(\eta) \,.
    \end{equation}
    Notice that, the following holds for any $C_{\eta}>0$, 
    \begin{align*}
        d(Y)
        = & \frac{d(Y)-d(Y^*)}{\sqrt{C_{\eta} \cdot d(Y^*)}} \cdot \sqrt{C_{\eta} \cdot d(Y^*)} + d(Y^*) \\
        \leq & \frac{1}{2} \Paren{\frac{\Bigparen{d(Y) - d(Y^*)}^2}{C_{\eta} \cdot  d(Y^*)} + C_{\eta} \cdot d(Y^*)} + d(Y^*) \\
        = & \frac{\Bigparen{d(Y) - d(Y^*)}^2}{2 C_{\eta} \cdot d(Y^*)} + \Bigparen{\frac{C_{\eta}}{2} + 1} d(Y^*) \\
        \leq & \frac{\Bigparen{d(Y) - d(Y^*)}^2}{2 C_{\eta} C_{Y^*}} + \Bigparen{\frac{C_{\eta}}{2} + 1} d(Y^*) \,,
    \end{align*}
    Plugging this into \cref{eq:sos-utility-5}, it follows that
    \begin{equation}
    \label{eq:sos-utility-6}
        \cA
        \sststile{8}{Y, z}
        (1 - 2 \eta)^4 \Bigparen{d(Y) - d(Y^*)}^2
        \leq C_1(\eta) \frac{\Bigparen{d(Y) - d(Y^*)}^2}{2 C_{\eta} C_{Y^*}} + C_1(\eta) \Bigparen{\frac{C_{\eta}}{2} + 2} d(Y^*) + 2 C_2(\eta) \,.
    \end{equation}
    Let $C_{\eta} = \frac{50 C_1(\eta)}{C_{Y^*} (1 - 2 \eta)^4}$. Since $\eta^2 \log (1/\eta) < 1$ for $\eta \in [0, \frac{1}{2})$, we have $C_1(\eta) = C_{\text{deg}} \eta^2 \log \Paren{1 / \eta} < C_{\text{deg}}$ and $C_{\eta} < \frac{50 C_{\text{deg}}}{C_{Y^*} (1 - 2 \eta)^4}$.
    Plugging these into \cref{eq:sos-utility-6}, it follows that
    \begin{align*}
        \cA
        \sststile{8}{Y, z}
        (1 - 2 \eta)^4 & \Bigparen{d(Y) - d(Y^*)}^2 \\
        \leq & \frac{(1 - 2 \eta)^4}{100} \Bigparen{d(Y) - d(Y^*)}^2 + C_1(\eta) \Bigparen{\frac{25 C_{\text{deg}}}{C_{Y^*} (1 - 2 \eta)^4} + 2} d(Y^*) + 2 C_2(\eta) \\
        \leq & \frac{(1 - 2 \eta)^4}{100} \Bigparen{d(Y) - d(Y^*)}^2 + \frac{C_1(\eta) (25 C_{\text{deg}}/C_{Y^*} + 2)}{(1 - 2 \eta)^4} d(Y^*) + 2 C_2(\eta) \,.
    \end{align*}
    Rearranging the terms, it follows that
    \begin{align*}
        \cA
        \sststile{8}{Y, z}
        0.99 (1 - 2 \eta)^4 \Bigparen{d(Y) - d(Y^*)}^2
        \leq & \frac{C_1(\eta) (25 C_{\text{deg}}/C_{Y^*} + 2)}{(1 - 2 \eta)^4} d(Y^*) + 2 C_2(\eta) \\
        \Bigparen{d(Y) - d(Y^*)}^2
        \leq & \frac{C_1(\eta) (50 C_{\text{deg}}/C_{Y^*} + 4)}{(1 - 2 \eta)^8} d(Y^*) + \frac{4 C_2(\eta)}{(1 - 2 \eta)^4} \\
        \Bigparen{d(Y) - d(Y^*)}^2
        \lesssim & \frac{C_1(\eta)}{(1 - 2 \eta)^8} d(Y^*) + \frac{C_2(\eta)}{(1 - 2 \eta)^4} \,.
    \end{align*}
    Plugging in $C_1(\eta) = C_{\text{deg}} \eta^2 \log \Paren{1 / \eta}$ and $C_2(\eta) = C_{\text{deg}} \eta^2 \log^2 \Paren{1 / \eta}$, it follows that
        \begin{align*}
        \cA
        \sststile{8}{Y, z}
        \Bigparen{d(Y) - d(Y^*)}^2
        \lesssim & \frac{C_{\text{deg}} \eta^2 \log \Paren{1 / \eta}}{(1 - 2 \eta)^8} d(Y^*) + \frac{C_{\text{deg}} \eta^2 \log^2 \Paren{1 / \eta}}{(1 - 2 \eta)^4} \\
        \lesssim & \frac{\eta^2 \log \Paren{1 / \eta}}{(1 - 2 \eta)^8} d(Y^*) + \frac{\eta^2 \log^2 \Paren{1 / \eta}}{(1 - 2 \eta)^4} \,.
    \end{align*}
\end{proof}

%%% BIBLIOGRAPHY
% assumes hyperref
\phantomsection
\addcontentsline{toc}{section}{References}
\bibliographystyle{amsalpha}
\bibliography{reference.bib}

%%% APPENDIX
\appendix
\section{Sum-of-squares background}\label{sec:sos-background}

In this paper, we use the sum-of-squares (SoS) semidefinite programming hierarchy \cite{barak2014sum,sos2016note,raghavendra2018high} for both algorithm design and analysis.
The sum-of-squares proofs-to-algorithms framework has been proven useful in many optimal or state-of-the-art results in algorithmic statistics~\cite{hopkins2018mixture,KSS18,pmlr-v65-potechin17a,hopkins2020mean}.
We provide here a brief introduction to pseudo-distributions, sum-of-squares proofs, and sum-of-squares algorithms.

\subsection{Sum-of-squares proofs and algorithms}
\paragraph{Pseudo-distribution} 

We can represent a finitely supported probability distribution over $\R^n$ by its probability mass function $\mu\from \R^n \to \R$ such that $\mu \geq 0$ and $\sum_{x\in\supp(\mu)} \mu(x) = 1$.
We define pseudo-distributions as generalizations of such probability mass distributions by relaxing the constraint $\mu\ge 0$ to only require that $\mu$ passes certain low-degree non-negativity tests.

\begin{definition}[Pseudo-distribution]
  \label{def:pseudo-distribution}
  A \emph{level-$\ell$ pseudo-distribution} $\mu$ over $\R^n$ is a finitely supported function $\mu:\R^n \rightarrow \R$ such that $\sum_{x\in\supp(\mu)} \mu(x) = 1$ and $\sum_{x\in\supp(\mu)} \mu(x)f(x)^2 \geq 0$ for every polynomial $f$ of degree at most $\ell/2$.
\end{definition}

We can define the expectation of a pseudo-distribution in the same way as the expectation of a finitely supported probability distribution.

\begin{definition}[Pseudo-expectation]
  Given a pseudo-distribution $\mu$ over $\R^n$, we define the \emph{pseudo-expectation} of a function $f:\R^n\to\R$ by
  \begin{equation}
    \tE_\mu f \seteq \sum_{x\in\supp(\mu)} \mu(x) f(x) \,.
  \end{equation}
\end{definition}

The following definition formalizes what it means for a pseudo-distribution to satisfy a system of polynomial constraints.

\begin{definition}[Constrained pseudo-distributions]
  Let $\mu:\R^n\to\R$ be a level-$\ell$ pseudo-distribution over $\R^n$.
  Let $\cA = \{f_1\ge 0, \ldots, f_m\ge 0\}$ be a system of polynomial constraints.
  We say that \emph{$\mu$ satisfies $\cA$} at level $r$, denoted by $\mu \sdtstile{r}{} \cA$, if for every multiset $S\subseteq[m]$ and every sum-of-squares polynomial $h$ such that $\deg(h)+\sum_{i\in S}\max\{\deg(f_i),r\} \leq \ell$,
  \begin{equation}
    \label{eq:constrained-pseudo-distribution}
    \tE_{\mu} h \cdot \prod_{i\in S}f_i \ge 0 \,.
  \end{equation}
  We say $\mu$ satisfies $\cA$ and write $\mu \sdtstile{}{} \cA$ (without further specifying the degree) if $\mu \sdtstile{0}{} \cA$.
\end{definition}

We remark that if $\mu$ is an actual finitely supported probability distribution, then we have  $\mu\sdtstile{}{}\cA$ if and only if $\mu$ is supported on solutions to $\cA$.

\paragraph{Sum-of-squares proof} 

% Sum-of-squares proof is a method for giving certificates for polynomial inequalities.
We introduce sum-of-squares proofs as the dual objects of pseudo-distributions, which can be used to reason about properties of pseudo-distributions.
We say a polynomial $p$ is a sum-of-squares polynomial if there exist polynomials $(q_i)$ such that $p = \sum_i q_i^2$.

\begin{definition}[Sum-of-squares proof]
  \label{def:sos-proof}
  A \emph{sum-of-squares proof} that a system of polynomial constraints $\cA = \{f_1\ge 0, \ldots, f_m\ge 0\}$ implies $q\ge0$ consists of sum-of-squares polynomials $(p_S)_{S\subseteq[m]}$ such that\footnote{Here we follow the convention that $\prod_{i\in S}f_i=1$ for $S=\emptyset$.}
  \[
    q = \sum_{\text{multiset } S\subseteq[m]} p_S \cdot \prod_{i\in S} f_i \,.
  \]
  If such a proof exists, we say that \(\cA\) \emph{(sos-)proves} \(q\ge 0\) within degree \(\ell\), denoted by $\mathcal{A}\sststile{\ell}{} q\geq 0$.
  In order to clarify the variables quantified by the proof, we often write \(\cA(x)\sststile{\ell}{x} q(x)\geq 0\).
  We say that the system \(\cA\) \emph{sos-refuted} within degree \(\ell\) if $\mathcal{A}\sststile{\ell}{} -1 \geq 0$.
  Otherwise, we say that the system is \emph{sos-consistent} up to degree \(\ell\), which also means that there exists a level-$\ell$ pseudo-distribution satisfying the system.
\end{definition}

% \Hnote{TODO: discuss bit complexity issues}
The following lemma shows that sum-of-squares proofs allow us to deduce properties of pseudo-distributions that satisfy some constraints.
\begin{lemma}
  \label[lemma]{lem:sos-soundness}
  Let $\mu$ be a pseudo-distribution, and let $\cA,\cB$ be systems of polynomial constraints.
  Suppose there exists a sum-of-squares proof $\cA \sststile{r'}{} \cB$.
  If $\mu \sdtstile{r}{} \cA$, then $\mu \sdtstile{r\cdot r' + r'}{} \cB$.
\end{lemma}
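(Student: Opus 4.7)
The plan is to verify the defining inequalities of $\mu \sdtstile{r r' + r'}{} \cB$ directly by expanding each constraint from $\cB$ using the provided SoS proofs from $\cA$ and then appealing to $\mu \sdtstile{r}{} \cA$ term-by-term. Concretely, write $\cB = \{g_1 \ge 0, \ldots, g_k \ge 0\}$ and fix an arbitrary multiset $T \subseteq [k]$ together with a sum-of-squares polynomial $h$ satisfying
\[
  \deg(h) + \sum_{j \in T} \max\{\deg(g_j),\, r r' + r'\} \le \ell,
\]
where $\ell$ is the level of $\mu$. The goal is to show $\tE_\mu\, h \cdot \prod_{j \in T} g_j \ge 0$.

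For each $j \in T$, the hypothesis $\cA \sststile{r'}{} g_j \ge 0$ supplies sum-of-squares polynomials $\big(p^{(j)}_{S}\big)_{S \subseteq [m]}$ such that $g_j = \sum_{S} p^{(j)}_{S} \prod_{i \in S} f_i$ with each term of total degree at most $r'$. Substituting these decompositions into $h \cdot \prod_{j \in T} g_j$ and distributing, I would obtain a sum of terms of the form
\[
  h \cdot \prod_{j \in T} p^{(j)}_{S^{(j)}} \cdot \prod_{j \in T} \prod_{i \in S^{(j)}} f_i,
\]
indexed by tuples $\big(S^{(j)}\big)_{j \in T}$. The multiplier $h \cdot \prod_{j} p^{(j)}_{S^{(j)}}$ is a product of SoS polynomials and is therefore itself SoS, and the constraint factors together form a multiset $\bar S \subseteq [m]$.

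The main bookkeeping step is checking that each such term meets the degree budget required to invoke $\mu \sdtstile{r}{} \cA$. Summing $\max\{\deg(f_i), r\} \le \deg(f_i) + r$ over $i \in S^{(j)}$ and using $\sum_{i \in S^{(j)}} \deg(f_i) + \deg\big(p^{(j)}_{S^{(j)}}\big) \le r'$, along with the standard convention that each non-trivial $f_i$ has $\deg(f_i) \ge 1$ (so that $|S^{(j)}| \le r'$), gives
\[
  \deg\big(p^{(j)}_{S^{(j)}}\big) + \sum_{i \in S^{(j)}} \max\{\deg(f_i), r\} \le r' + r \cdot |S^{(j)}| \le r' + r r' = r r' + r'.
\]
Summing this bound over $j \in T$ and adding $\deg(h)$, and then comparing against the assumed budget (using $r r' + r' \le \max\{\deg(g_j), r r' + r'\}$), shows $\deg\big(h \prod_j p^{(j)}_{S^{(j)}}\big) + \sum_{i \in \bar S} \max\{\deg(f_i), r\} \le \ell$.

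With the degree condition verified, $\mu \sdtstile{r}{} \cA$ applies to each term in the expansion, so each term has non-negative pseudo-expectation; summing yields $\tE_\mu\, h \cdot \prod_{j \in T} g_j \ge 0$, as required. I expect the only real obstacle to be the combinatorial degree accounting above, in particular tracking that the multiplicative effect of composing proofs accumulates as $r'(r+1)$ per constraint in $\cB$ rather than something larger; the sum-of-squares and non-negativity structure is preserved at every step by elementary closure properties.
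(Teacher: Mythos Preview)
The paper states this lemma as background material in the SoS preliminaries section and does not supply a proof; it is a standard soundness fact for the SoS proof system. Your proposal is the canonical argument: expand each $g_j$ via its degree-$r'$ certificate from $\cA$, distribute the product $h\prod_{j\in T} g_j$, observe that each resulting multiplier is a product of SoS polynomials (hence SoS), and check the degree budget term-by-term against $\mu \sdtstile{r}{} \cA$. Your degree accounting is correct, and the one assumption you flag (that each $f_i$ has degree at least $1$, so $|S^{(j)}|\le r'$) is the usual harmless convention.
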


% If the pseudo-distribution $D$ satisfies $\cA$ only approximately, soundness continues to hold if we require an upper bound on the bit-complexity of the sum-of-squares $\cA \sststile{r'}{} \cB$  (number of bits required to write down the proof).
% \Dnote{I think it is good to discuss about bit complexity issues here. it might be good to make a remark environment that we can reference from other places as well. i think we need some more discussion (but short and isolated) why bit complexity issues are ok for our algorithm.}

\paragraph{Sum-of-squares algorithm}

Given a system of polynomial constraints, the \emph{sum-of-squares algorithm} searches through the space of pseudo-distributions that satisfy this polynomial system by semideﬁnite programming.

Since semidefinite programing can only be solved approximately, we can only find pseudo-distributions that approximately satisfy a given polynomial system.
We say that a level-$\ell$ pseudo-distribution \emph{approximately satisfies} a polynomial system, if the inequalities in \cref{eq:constrained-pseudo-distribution} are satisfied up to an additive error of $2^{-n^\ell}\cdot \norm{h}\cdot\prod_{i\in S}\norm{f_i}$, where $\norm{\cdot}$ denotes the Euclidean norm\footnote{The choice of norm is not important here because the factor $2^{-n^\ell}$ swamps the effects of choosing another norm.} of the coefficients of a polynomial in the monomial basis.

\begin{theorem}[Sum-of-squares algorithm]
  \label{theorem:SOS_algorithm}
  There exists an $(n+ m)^{O(\ell)} $-time algorithm that, given any explicitly bounded\footnote{A system of polynomial constraints is \emph{explicitly bounded} if it contains a constraint of the form $\|x\|^2 \leq M$.} and satisfiable system\footnote{Here we assume that the bit complexity of the constraints in $\cA$ is $(n+m)^{O(1)}$.} $\cA$ of $m$ polynomial constraints in $n$ variables, outputs a level-$\ell$ pseudo-distribution that satisfies $\cA$ approximately.
\end{theorem}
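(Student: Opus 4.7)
The plan is to reduce the problem of finding an approximately-satisfying level-$\ell$ pseudo-distribution to solving a semidefinite program (SDP) of size $(n+m)^{O(\ell)}$, and then invoke a standard polynomial-time SDP solver based on the ellipsoid method. First, I would parametrize any level-$\ell$ pseudo-distribution $\mu$ by its moment sequence $\{\tilde{\E}_\mu [x^\alpha]\}_{|\alpha|\le \ell}$, where $\alpha \in \mathbb{N}^n$ ranges over multi-indices of total degree at most $\ell$. The number of such moments is $N = \binom{n+\ell}{\ell} = (n+m)^{O(\ell)}$, and these are the decision variables of the SDP. The normalization $\sum_x \mu(x)=1$ becomes a single affine constraint $\tilde{\E}[1]=1$, and the pseudo-non-negativity requirement $\tilde{\E}[p^2]\ge 0$ for all polynomials $p$ of degree at most $\ell/2$ is equivalent to the moment matrix $M$ (indexed by monomials up to degree $\ell/2$, with entries $M_{\alpha,\beta} = \tilde{\E}[x^{\alpha+\beta}]$) being positive semidefinite.

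Second, I would encode the constraint $\mu \sdtstile{}{} \cA$ as PSD constraints on so-called localizing matrices. For each multiset $S \subseteq [m]$ whose combined degree fits the degree budget, the requirement $\tilde{\E}\bigparen{h \cdot \prod_{i \in S} f_i} \ge 0$ for \emph{all} SoS polynomials $h$ of matching degree is, by the standard representation of non-negative linear functionals on the cone of low-degree SoS polynomials, equivalent to the single PSD condition that the localizing matrix $L^{(S)}$ with entries $L^{(S)}_{\alpha,\beta}=\tilde{\E}\bigbrack{x^{\alpha+\beta}\prod_{i\in S} f_i}$ is PSD. Each such matrix is linear in the moment variables, has dimension $(n+m)^{O(\ell)}$, and the number of such matrices over all relevant multisets $S$ is also $(n+m)^{O(\ell)}$. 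Together with the global moment-matrix PSD constraint, this yields an SDP whose feasible solutions are exactly the level-$\ell$ pseudo-distributions satisfying $\cA$; the satisfiability hypothesis on $\cA$ (every actual distribution supported on a solution of $\cA$ is a pseudo-distribution) guarantees this SDP is non-empty.

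Finally, I would invoke the ellipsoid method to solve this SDP to additive error $2^{-n^\ell}$ in time polynomial in the bit complexity of the input and in $N$, i.e.\ $(n+m)^{O(\ell)}$. This is where the \emph{explicit boundedness} hypothesis is crucial: the constraint $\|x\|^2 \le M$ admits a short SoS derivation of $x^{2\alpha} \le M^{|\alpha|}$ for every $|\alpha| \le \ell/2$, which in turn yields an a priori bound $|\tilde{\E}[x^\alpha]| \le M^{\ell/2}$ on every moment variable. This confines the SDP's feasible region to a ball of polynomial radius, which is exactly what the ellipsoid method needs to achieve a polynomial running time and to certify that the output, after rounding, is indeed an $\eps$-approximate pseudo-distribution in the sense of the theorem statement.

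The main conceptual step to get right is the equivalence in the second paragraph: that the infinitely many tests $\tilde{\E}[h \prod_{i\in S}f_i]\ge 0$ (one per SoS polynomial $h$) collapse, for each fixed $S$, into the single finite-dimensional PSD condition on the localizing matrix $L^{(S)}$. The remainder of the argument is standard convex-programming machinery, modulo careful accounting of degrees to ensure that the moments appearing in every localizing matrix remain within the fixed degree-$\ell$ moment vector. The numerical precision issues handled by the ellipsoid method account for the $2^{-n^\ell}$ approximation slack in the conclusion.
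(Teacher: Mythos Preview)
Your proposal is the standard and correct reduction: parametrize by the degree-$\ell$ moment vector, impose the moment-matrix and localizing-matrix PSD constraints, use the explicit-boundedness hypothesis to confine the feasible region, and solve with the ellipsoid method. This is exactly the argument the theorem is meant to summarize.

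Note, however, that the paper does not actually prove this theorem. It is stated in the sum-of-squares background appendix as a known result, without proof, and the paper relies on it as a black box (with references to \cite{barak2014sum,sos2016note,raghavendra2018high} and the surrounding discussion). So there is no ``paper's own proof'' to compare against; your sketch simply fills in what the paper leaves to the literature, and it does so in the conventional way.
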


\begin{remark}[Approximation error and bit complexity]
  \label{remark:sos-numerical-issue}  
  For a pseudo-distribution that only approximately satisfies a polynomial system, we can still use sum-of-squares proofs to reason about it in the same way as \cref{lem:sos-soundness}.
  In order for approximation errors not to amplify throughout reasoning, we need to ensure that the bit complexity of the coefficients in the sum-of-squares proof are polynomially bounded.  
\end{remark}

\subsection{Sum-of-squares toolkit}

In this part, we provide some basic SoS proofs that are useful in our paper.

\begin{lemma}
\label[lemma]{lem:preliminary-SOS-abs-to-square}
Given constant $C \geq 0$, we have
    \begin{equation*}
         \Bigset{-C \leq x \leq C} \sststile{2}{x} x^2 \leq C^2  \,.
    \end{equation*}
\end{lemma}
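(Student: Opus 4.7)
The plan is to exhibit an explicit sum-of-squares decomposition of $C^2 - x^2$ using the two linear constraints $C - x \geq 0$ and $C + x \geq 0$ that together encode $-C \leq x \leq C$. The key algebraic identity is simply
\[
    C^2 - x^2 = (C - x)(C + x),
\]
which is already in the canonical form required by \cref{def:sos-proof}: it is a product of two constraint polynomials with SoS coefficient $1$ (the constant polynomial $1$ is trivially a sum of squares).

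In the notation of \cref{def:sos-proof}, I would take the multiset $S = \{f_1, f_2\}$ where $f_1 = C - x$ and $f_2 = C + x$, and set $p_S = 1$. The remaining $p_{S'}$ (for other subsets) are zero. This yields a valid SoS proof certifying $C^2 - x^2 \geq 0$ with total degree $\deg(1) + \deg(f_1) + \deg(f_2) = 0 + 1 + 1 = 2$, matching the $\sststile{2}{x}$ annotation in the statement.

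The main (and essentially only) obstacle is notational rather than mathematical: making sure the reader sees that $\{-C \leq x \leq C\}$ is formally the pair of constraints $\{C - x \geq 0,\ C + x \geq 0\}$, so that the factorization $C^2 - x^2 = (C-x)(C+x)$ literally matches the template of a degree-$2$ SoS proof. There is no computation to grind through beyond stating this identity.
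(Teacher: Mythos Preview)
Your proof is correct and takes essentially the same approach as the paper: both use the identity $C^2 - x^2 = (C-x)(C+x)$ to write $C^2 - x^2$ as the product of the two constraint polynomials, giving a degree-$2$ SoS proof.
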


\begin{proof}
    \begin{align*}
        \Bigset{-C \leq x \leq C}
        & \sststile{2}{x} (C-x)(C+x) \geq 0 \\
        & \sststile{2}{x} C^2-x^2 \geq 0 \\
        & \sststile{2}{x} C^2 \geq x^2 \,.
    \end{align*}
\end{proof}

\begin{lemma}[SoS subset sum] \label[lemma]{lem:sos-subset-sum}
    Let $a_1, \dots, a_n \in \R$ and $B \in \R$.
    Suppose for any subset $S \subseteq [n]$ with $|S| \le k$, we have $\abs{\sum_{i \in S} a_i} \le B$.
    Then 
    \begin{align*}
        \Bigset{
            0 \le x_1, \dots, x_n \le 1, \;
            \sum_i x_i \le k
        }
        \; \sststile{1}{x_1, \dots, x_n} \;
        {
            \Bigabs{\sum_{i=1}^{n} a_i x_i} \le B
        } \,.
    \end{align*}
\end{lemma}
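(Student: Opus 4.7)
The plan is to exhibit an explicit degree-1 SoS certificate that writes $B - \sum_i a_i x_i$ as a nonnegative scalar combination of the constraint polynomials $\{x_j,\, 1-x_i,\, k - \sum_i x_i\}$. The companion bound $B + \sum_i a_i x_i \ge 0$ will then follow by substituting $a_i \mapsto -a_i$, since the hypothesis on subset sums is invariant under negation.

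The classical motivation is that the linear functional $\sum_i a_i x_i$ over the polytope $\{0 \le x \le 1,\, \iprod{\one,x} \le k\}$ attains its maximum at the indicator vector of the ``top-$k$ positive entries.'' Concretely, if there are at least $k$ indices with $a_i > 0$, I let $T$ collect the $k$ indices with the largest $a_i$ and set $t = \min_{i \in T} a_i \ge 0$; otherwise I let $T = \{i : a_i > 0\}$ and set $t = 0$. In either case, $|T| \le k$, the quantity $M \coloneqq \sum_{i \in T} a_i$ equals the largest subset sum over subsets of size at most $k$ (so $M \le B$ by hypothesis), and moreover $a_i \ge t \ge 0$ for $i \in T$ while $a_j \le t$ for $j \notin T$.

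The heart of the proof will be the algebraic identity
\[
    B - \sum_i a_i x_i \;=\; (B - M) \;+\; \sum_{i \in T} (a_i - t)(1 - x_i) \;+\; \sum_{j \notin T} (t - a_j)\, x_j \;+\; t \Paren{k - \sum_i x_i},
\]
which holds in both cases (in case $|T|<k$ the last summand vanishes because $t=0$; in case $|T|=k$ the identity checks out by direct expansion using $|T|=k$). Each of the four summands on the right is a nonnegative constant multiplied by one of the constraint polynomials from the hypothesis: $(B-M) \ge 0$ scales the trivial constraint $1 \ge 0$; $(a_i - t) \ge 0$ scales $(1 - x_i) \ge 0$; $(t - a_j) \ge 0$ scales $x_j \ge 0$; and $t \ge 0$ scales $(k - \sum_i x_i) \ge 0$. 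This is precisely a degree-1 SoS derivation of $B - \sum_i a_i x_i \ge 0$ from the constraints.

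The main obstacle is conceptual rather than computational: the classical ``sort and pick top-$k$'' argument is combinatorial and does not obviously translate into low-degree SoS. The key trick is to use the threshold $t$ as a witness for a swap argument: whenever $x_j > 0$ for some $j \notin T$, the budget $\sum_i x_i \le k$ forces a compensating slack $1 - x_i > 0$ for some $i \in T$; since $a_i \ge t \ge a_j$, this exchange cannot increase the objective, and the term $t(k - \sum_i x_i)$ is exactly what bookkeeps it. Once the identity above is written down, verification is a routine expansion of both sides, and no further SoS machinery is required.
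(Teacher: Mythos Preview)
Your proof is correct and is essentially the same argument as the paper's: both pick the ``top-$k$'' (or all positive) indices as $T$, use the threshold $t=\min_{i\in T}a_i$ (resp.\ $t=0$), and decompose $B-\sum_i a_i x_i$ into nonnegative multiples of $1-x_i$, $x_j$, and $k-\sum_i x_i$. The only difference is cosmetic---the paper writes the derivation as a chain of degree-1 inequalities split into the two cases $a_k\ge 0$ and $a_k<0$ (after sorting), whereas you package both cases into a single explicit identity with the parameter $t$.
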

\begin{proof}
    % \Hnote{TODO}
    We first show $\sum_{i=1}^{n} a_i x_i \le B$.
    Without loss of generality, assume $a_1 \ge \dots \ge a_n$.
    
    Case 1: $a_k \ge 0$.
    It is straightforward to see $\{0 \le x_1, \dots, x_n \le 1, \; \sum_i x_i \le k\} \vdash_{1}$
    \begin{align*}
        B - \sum_{i=1}^{n} a_i x_i
        &\ge \sum_{i=1}^{k} a_i - \sum_{i=1}^{n} a_i x_i \tag{$\sum_{i=1}^{k} a_i \le B$} \\
        &= \sum_{i=1}^{k} a_i (1 - x_i) - \sum_{i=k+1}^{n} a_i x_i \\
        &\ge \sum_{i=1}^{k} a_k (1 - x_i) - \sum_{i=k+1}^{n} a_k x_i \tag{$0 \le x_i \le 1$} \\
        &= a_k \cdot \Paren{k - \sum_{i=1}^{n} x_i} \\
        &\ge 0 \,. \tag{$\sum_i x_i \le k$}
    \end{align*}
    
    Case 2: $a_k < 0$.
    Let $\ell$ be the largest index such that $a_\ell \ge 0$. (Note $\ell \in \{0, 1, \dots, k-1\}$.)
    Then
    \begin{align*}
        B - \sum_{i=1}^{n} a_i x_i
        &\ge \sum_{i=1}^{\ell} a_i - \sum_{i=1}^{n} a_i x_i \tag{$\sum_{i=1}^{\ell} a_i \le B$} \\
        &= \sum_{i=1}^{\ell} a_i (1 - x_i) + \sum_{i=\ell+1}^{n} (-a_i) x_i \\
        &\ge 0 \,,
    \end{align*}
    where in the last inequality we used $a_1, \dots, a_\ell \ge 0$ and $a_{\ell+1}, \dots, a_n < 0$, as well as $0 \le x_i \le 1$ for all $i$.

    Observe that we can apply the same argument above to $-a_1, \dots, -a_n$ and conclude $\sum_{i=1}^{n} (-a_i) x_i \le B$, or equivalently, $\sum_{i=1}^{n} a_i x_i \ge -B$.

    Therefore,
    \begin{align*}
        \Bigset{
            0 \le x_1, \dots, x_n \le 1, \;
            \sum_i x_i \le k
        }
        \; \sststile{1}{} \;
        \Bigset{
            -B \le \sum_{i=1}^{n} a_i x_i \le B
        } \,.
    \end{align*}
\end{proof}

\section{Concentration inequalities}
\label{sec:concentration-inequalities}
% \Snote{Do we ever deal with inhomogeneous random graphs? If not I would change the statements to be just about $G(n,p)$.}

In this section, we prove several concentration inequalities for \ER random graphs that are crucially used to derive our main results.
We first introduce two classical concentration bounds.
% In this section, we introduce some classical concentration bounds used in the paper.
% \begin{lemma}[Chernoff-Hoeffding bound, implied by proposition 2.5 in \cite{Wainwright_2019}]
%     \label[lemma]{lem:Chernoff}
%     Suppose $X_1,X_2,\ldots,X_n$ are independent random variables taking values in $[a,b]$. 
%     Let $X=\sum_{i=1}^n X_i$.
%     Then for any $t>0$, we have 
%     \begin{equation*}
%         \Pr\Brac{|X-\E[X]|\geq t}\leq 2\exp\Paren{-2t^2/n(b-a)^2}\,.
%     \end{equation*}
% \end{lemma}

\begin{lemma}[Chernoff bound]
    \label[lemma]{lem:Chernoff}
    Let $\rv{X}_1, \rv{X}_2, \dots, \rv{X}_N$ be independent random variables taking values in $\{0,1\}$.
    Let $\rv X \seteq \sum_{i=1}^N \rv X_i$.
    Then for any $t > 0$,
    \begin{equation*}
        \Pr\Paren{\rv X \ge \E \rv X + t} 
        \le \exp\Paren{-\frac{t^2}{t + 2 \E \rv{X}}} \,,
    \end{equation*}
    \begin{equation*}
        \Pr\Paren{\rv X \le \E \rv X - t} 
        \le \exp\Paren{-\frac{t^2}{2 \E \rv{X}}} \,.
    \end{equation*} 
\end{lemma}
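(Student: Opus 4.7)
The plan is to use the standard Cram\'er--Chernoff exponential moment method, which is by now the textbook route to this bound. First, I would apply Markov's inequality to $e^{\lambda \rv{X}}$ for $\lambda > 0$, giving
\[
    \Pr(\rv{X} \geq \E \rv{X} + t) \leq e^{-\lambda(\E \rv{X} + t)} \cdot \E[e^{\lambda \rv{X}}] \,.
\]
By independence, $\E[e^{\lambda \rv{X}}] = \prod_{i=1}^N \E[e^{\lambda \rv{X}_i}]$, and for each Bernoulli $\rv{X}_i$ with $p_i \seteq \E \rv{X}_i$ I would use the identity $\E[e^{\lambda \rv{X}_i}] = 1 + p_i(e^\lambda - 1)$ together with $1+x \le e^x$ to get $\E[e^{\lambda \rv{X}_i}] \le \exp(p_i(e^\lambda - 1))$. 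Multiplying over $i$ yields the clean MGF bound $\E[e^{\lambda \rv{X}}] \le \exp(\mu(e^\lambda - 1))$, where $\mu \seteq \E \rv{X}$.

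For the upper tail, I would optimize by choosing $\lambda = \log(1 + t/\mu)$, which produces the Bennett-type inequality $\Pr(\rv{X} \ge \mu + t) \le \exp(-\mu \psi(t/\mu))$, where $\psi(u) = (1+u)\log(1+u) - u$. To arrive at the form stated in the lemma, I would then invoke the elementary inequality $\psi(u) \ge u^2/(2+u)$ for $u \ge 0$, which can be verified by differentiating the difference or by a Taylor expansion argument. Substituting $u = t/\mu$ gives the exponent $-t^2/(2\mu + t)$, matching the claim.

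For the lower tail, I would mirror the same argument with $\lambda < 0$, or equivalently apply the upper-tail argument to the Bernoullis $1 - \rv{X}_i$ in the appropriate sense. This gives an analogous bound with rate function $\psi^-(u) = (1-u)\log(1-u) + u$ on $u \in [0,1]$, and the cleaner inequality $\psi^-(u) \ge u^2/2$ --- immediate from the Taylor series, whose higher-order terms are all nonnegative --- yields the tighter exponent $-t^2/(2\mu)$.

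The only non-mechanical part is verifying the two elementary inequalities $\psi(u) \ge u^2/(2+u)$ and $\psi^-(u) \ge u^2/2$, both of which are standard one-variable calculus exercises and do not pose a real obstacle. This lemma is entirely classical; the point of including it is as a building block for the more delicate graph concentration results of \cref{sec:concentration-inequalities}, where the real work lies.
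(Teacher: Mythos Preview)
Your proof is correct and is exactly the standard Cram\'er--Chernoff argument. The paper does not actually prove this lemma: it is introduced in \cref{sec:concentration-inequalities} as one of ``two classical concentration bounds'' and simply stated without proof, to be used as a black box in the subsequent lemmas. So there is nothing to compare against; your write-up would serve perfectly well if a proof were required, and your closing remark that the real work lies in the downstream graph concentration results is exactly right.
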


% \begin{lemma}[McDiarmid's inequality, theorem 2.9.1 in \cite{vershynin2018high}]
\begin{lemma}[McDiarmid's inequality]
    \label[lemma]{lem:McDiarmid}
    Let \( \rv{X}_1, \rv{X}_2, \dots, \rv{X}_N \) be independent random variables. Let \( f: \mathbb{R}^N \to \mathbb{R} \) be a measurable function such that the value of \( f(x) \) can change by at most \( c_i > 0 \) under an arbitrary change of the \( i \)-th coordinate of \( x \in \mathbb{R}^N \). That is, for all \( x, x' \in \mathbb{R}^N \) differing only in the \( i \)-th coordinate, we have $|f(x) - f(x')| \leq c_i$.
    Then for any $t > 0$,
    \[
        \Pr\Bigparen{ \bigabs{f(X) - \E[f(X)]} \geq t}
        \le 2 \exp \left(-\frac{2t^2}{\sum_{i=1}^N c_i^2}\right).
    \]
\end{lemma}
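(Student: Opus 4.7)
The plan is to prove McDiarmid's inequality via the classical martingale method, building on the Azuma--Hoeffding inequality for bounded-difference martingales. The overall strategy is to express $f(\rv X) - \E f(\rv X)$ as a sum of martingale increments, show that each increment is almost surely bounded in a range of width at most $c_i$, and then invoke Azuma--Hoeffding to obtain the stated subgaussian tail bound.

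First, I would introduce the Doob martingale associated with $f$ and the filtration generated by the $\rv X_i$. Set $\rv Z_0 \seteq \E f(\rv X)$ and, for $i = 1, \ldots, N$, $\rv Z_i \seteq \E[f(\rv X) \mid \rv X_1, \ldots, \rv X_i]$, so that $\rv Z_N = f(\rv X)$. Then $f(\rv X) - \E f(\rv X) = \sum_{i=1}^N \rv D_i$ where $\rv D_i \seteq \rv Z_i - \rv Z_{i-1}$ is a martingale difference sequence with respect to the natural filtration.

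Next comes the main technical step: bounding the range of each $\rv D_i$ conditional on $\rv X_1, \ldots, \rv X_{i-1}$. Using independence of the $\rv X_j$, I would write $\rv Z_i = g_i(\rv X_1, \ldots, \rv X_i)$ where $g_i(x_1, \ldots, x_i) = \E[f(x_1, \ldots, x_i, \rv X_{i+1}, \ldots, \rv X_N)]$, and similarly $\rv Z_{i-1} = \E_{\rv X_i'}[g_i(\rv X_1, \ldots, \rv X_{i-1}, \rv X_i')]$, where $\rv X_i'$ is an independent copy of $\rv X_i$. Then define, conditional on $\rv X_1, \ldots, \rv X_{i-1}$,
\[
    U_i \seteq \sup_{x} g_i(\rv X_1, \ldots, \rv X_{i-1}, x) - \rv Z_{i-1}, \quad L_i \seteq \inf_{x} g_i(\rv X_1, \ldots, \rv X_{i-1}, x) - \rv Z_{i-1}.
\]
The bounded differences hypothesis on $f$, together with Jensen/linearity of expectation over the remaining coordinates, yields $U_i - L_i \leq c_i$, and clearly $L_i \leq \rv D_i \leq U_i$ almost surely.

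Finally, I would apply the Azuma--Hoeffding inequality to the martingale $\sum_i \rv D_i$ with increments bounded in intervals of length at most $c_i$: for any $t > 0$,
\[
    \Pr\Bigparen{\bigabs{\textstyle\sum_{i} \rv D_i} \geq t} \leq 2\exp\Paren{-\frac{2 t^2}{\sum_{i=1}^N c_i^2}}.
\]
This is exactly the claimed bound on $\abs{f(\rv X) - \E f(\rv X)}$. The main obstacle is the range bound on $\rv D_i$: one must be careful that the supremum over $x$ is measurable and that the bounded-differences hypothesis passes correctly through the conditional expectation over the remaining independent coordinates. Azuma--Hoeffding itself can be invoked as a black box, or proved in one line via the standard Hoeffding lemma $\E[e^{\lambda \rv D_i} \mid \cF_{i-1}] \leq e^{\lambda^2 c_i^2 / 8}$ combined with a Chernoff-style exponential moment argument.
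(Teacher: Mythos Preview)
Your proposal is the standard and correct proof of McDiarmid's inequality via the Doob martingale and Azuma--Hoeffding. However, the paper does not actually give a proof of this lemma: it is stated in \cref{sec:concentration-inequalities} alongside the Chernoff bound as a classical concentration inequality and used as a black box (in the proof of \cref{lem:number_high_degree_vertices}), with no accompanying argument. So there is nothing in the paper to compare your proof against; your write-up simply supplies a proof where the paper cites the result without one.
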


% Using Chernoff bound, we derive the following important concentration property of the average degree of \ER random graphs.
\begin{lemma}[Average degree concentration]
    \label[lemma]{lem:average_degree_concentration}
    Let $\rv{A} \sim \bbG(n, \dnull/n)$.
    Let $d(\rv{A}) \seteq \frac{1}{n} \sum_{i, j} \rv{A}_{ij}$.
    % Then for any $t > 0$,
    % \[
    %     \Pr\Paren{\Abs{d(\rv A) - \dnull} \ge \delta \dnull} \le 2\exp\Paren{-\frac{\delta^2 n \dnull}{6}} \,,
    % \]
    Then for every constant $C>0$, there exists another constant $C'$ which only depends on $C$ such that 
    \begin{align*}
        \Pr\Paren{\bigabs{d(\rv A) - \E d(\rv{A})} \le C'\cdot \max\Set{\frac{\log n}{n}, \sqrt{\frac{\log n}{n}} \cdot \sqrt{\dnull}} }
        \ge 1 - n^{-C} \,.
    \end{align*}
\end{lemma}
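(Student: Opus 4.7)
The plan is to reduce the claim about the average degree to a concentration statement for a single Binomial random variable, and then apply the Chernoff bound in \cref{lem:Chernoff} with a two-case split based on whether $\dnull$ exceeds $(\log n)/n$.

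First I would note that $\sum_{i,j}\rv{A}_{ij} = 2\rv{E}$, where $\rv{E}$ denotes the total number of edges of $\rv A$, and that $\rv{E} \sim \Bin\bigparen{\binom{n}{2},\, \dnull/n}$ since each of the $\binom{n}{2}$ possible edges is present independently with probability $\dnull/n$. Consequently $d(\rv A) = 2\rv E/n$ and $\E \rv E = \binom{n}{2}\dnull/n \asymp n\dnull$, so bounding $\bigabs{d(\rv A) - \E d(\rv A)}$ reduces to bounding $(2/n)\bigabs{\rv E - \E \rv E}$.

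Next I would invoke \cref{lem:Chernoff} applied to $\rv E$ with threshold $t = C' \cdot \max\Set{\log n,\,\sqrt{n\log(n)\cdot \dnull}}$ for a sufficiently large constant $C'$ depending only on $C$. The upper tail gives $\Pr(\rv E - \E\rv E \geq t) \leq \exp\bigparen{-t^2/(t+2\E\rv E)}$, which I would control via a case split. When $\dnull \geq (\log n)/n$, the maximum is $\sqrt{n\log(n)\cdot\dnull}$ and $t \lesssim \E \rv E$, so $t+2\E\rv E \asymp \E \rv E$ and hence $t^2/(t+2\E\rv E) \gtrsim t^2/\E\rv E \geq C\log n$ for $C'$ chosen large enough. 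When $\dnull < (\log n)/n$, the maximum is $\log n$ and $t \gtrsim \E\rv E$, so $t+2\E\rv E \lesssim t$ and $t^2/(t+2\E\rv E) \gtrsim t \geq C \log n$. In both cases the upper tail is at most $n^{-C}$.

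The lower tail is handled identically and is even slightly easier, since its Chernoff denominator is $2\E\rv E$ rather than $t + 2\E\rv E$. A union bound yields total failure probability at most $2 n^{-C}$, absorbed into $n^{-C}$ by enlarging $C'$. Rescaling $\bigabs{\rv E - \E \rv E} \leq t$ by the factor $2/n$ gives the claimed bound, since $(2/n)\cdot C' \max\Set{\log n,\,\sqrt{n\log(n)\cdot\dnull}} = 2 C' \max\Set{(\log n)/n,\,\sqrt{(\log n)/n}\cdot\sqrt{\dnull}}$. The argument is essentially routine; there is no genuine conceptual obstacle, only the bookkeeping required to track constants through the two cases and to verify that $t$ is of the correct order relative to $\E\rv E$ in each regime.
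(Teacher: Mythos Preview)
Your proposal is correct and follows essentially the same approach as the paper: both write $d(\rv A)=\tfrac{2}{n}\rv E$ with $\rv E$ a sum of $\binom{n}{2}$ independent Bernoullis and apply the Chernoff bound of \cref{lem:Chernoff}, handling the subgaussian versus subexponential regimes by a case split (the paper does this via the equivalent manipulation $\tfrac{t^2}{t+2\E\rv E}\ge \tfrac12\min\{t, t^2/(2\E\rv E)\}$ rather than an explicit threshold on $\dnull$). One small caveat: your claim ``$t\lesssim \E\rv E$'' in the first case can fail near the boundary $\dnull\approx(\log n)/n$ once $C'$ is large, but the conclusion survives since the other branch of the min then controls the exponent---exactly as your own bookkeeping remark anticipates.
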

\begin{proof}
    Note $d(\rv{A}) = \frac{2}{n} \sum_{i < j} \rv{A}_{ij}$ and $\{ \rv{A}_{ij} \}_{i<j} \sim \Ber(\pnull)$ independently.
    Also note $\E d(\rv{A}) = (1-1/n)\, \dnull$.
    % Let $\mu \seteq \E \sum_{i<j} \rv A_{ij} = \sum_{i<j} p_{ij}$.
    Then by Chernoff bound \cref{lem:Chernoff}, for any $t > 0$,
    \begin{align*}
        \Pr\Paren{\bigabs{d(\rv A) - \E d(\rv A)} \ge \frac{2t}{n}}
        &\le 2 \exp\Paren{- \frac{t^2}{2t + (n-1)\, \dnull/2}} \\
        &\le 2 \exp\Paren{- \frac{t^2}{2 \cdot \max\{2t, (n-1)\, \dnull/2\}}} \\
        &= 2 \exp\Paren{- \min\Set{\frac{t}{4}, \frac{t^2}{(n-1)\, \dnull}}} \\
        &= 2 \max\Set{ \exp\Paren{-\frac{t}{4}}, \exp\Paren{-\frac{t^2}{(n-1)\,\dnull}} } \,.
    \end{align*}
    % \begin{align*}
        % \Pr\Paren{\Bigabs{\sum_{i<j} \rv A_{ij} - \mu} \ge \delta\mu} &\le 2\exp\Paren{-\frac{\delta^2\mu}{3}} \,, \\
        % \Pr\Paren{\Abs{d(\rv A) - \dnull} \ge \delta \dnull} &\le 2\exp\Paren{-\frac{\delta^2 n \dnull}{6}} \,.
    % \end{align*}
    % For every constant $C>0$, there exists another constant $C'$ which only depends on $C$ such that 
    % \begin{align*}
    %     \Pr\Paren{\Abs{d(\rv A) - (1-1/n)\, \dnull} \le C'\cdot \max\Set{\frac{\log n}{n}, \sqrt{\frac{\log n}{n} \dnull}} }
    %     \ge 1 - n^{-C} \,.
    % \end{align*}
\end{proof}

% \Hnote{maybe better to define $e(S)$ and $e(S,\bar{S})$ globally}
\begin{lemma}[Degrees subset sum] \label[lemma]{lem:degree-subset-sum}
    Let $\rv A \sim \bbG(n, \dnull/n)$.
    For $S \subseteq [n]$, let $\rv{e}(S) \seteq \sum_{i,j \in S, \; i < j} \rv{A}_{ij}$ and let $\rv{e}(S, \bar{S}) \seteq \sum_{i \in S, \; j \notin S} \rv{A}_{ij}$.
    Then for every constant $C>0$, there exists another constant $C'$ which only depends on $C$ such that with probability $1 - n^{-C}$, we have for every $S \subseteq [n]$,
    \[
        \Abs{ \rv{e}(S, \bar{S}) - \E \rv{e}(S, \bar{S}) } 
        \le 
        C' \cdot \Paren{ |S| \log(en/|S|) + |S| \sqrt{\dnull \log(en/|S|)} } \,,
    \]
    and
    \[
        \Abs{ \rv{e}(S) - \E \rv{e}(S) } 
        \le 
        C' \cdot \Paren{ |S| \log(en/|S|) + |S| \sqrt{\dnull (|S|/n) \log(en/|S|)} } \,.
    \]
\end{lemma}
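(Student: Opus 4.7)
The plan is to prove each bound by (i) fixing a set $S$ of size $k$, (ii) applying the Chernoff bound of \cref{lem:Chernoff} to the sum of independent Bernoullis comprising $\rv e(S,\bar S)$ or $\rv e(S)$, (iii) taking a union bound over all $\binom{n}{k}$ sets of that size, and finally (iv) taking a union bound over $k \in \{1,\dots,n\}$. The number of sets of size $k$ satisfies $\binom{n}{k} \le (en/k)^k = \exp(k \log(en/k))$, so the natural scale on which to control the tail is $k\log(en/k)$, which is exactly the scale appearing on the right-hand side.

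For the cut bound, fix $|S| = k$ and set $\mu = \E \rv e(S,\bar S) = k(n-k)\dnull/n \le k \dnull$. The Chernoff bound gives $\Pr(|\rv e(S,\bar S) - \mu| \ge t) \le 2 \exp(-t^2/(t + 2\mu))$. Take $t = C' \bigparen{k \log(en/k) + k \sqrt{\dnull \log(en/k)}}$, and split into two regimes: if $t \ge 2\mu$ then $t^2/(t+2\mu) \ge t/2 \gtrsim C' k\log(en/k)$, whereas if $t \le 2\mu$ then $t^2/(t+2\mu) \ge t^2/(3\mu) \ge C'^2 k^2 \dnull \log(en/k)/(3k\dnull) = C'^2 k\log(en/k)/3$. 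In both regimes the tail is at most $2 \exp(-c C' k \log(en/k))$ for an absolute $c > 0$. Choosing $C'$ large enough that $c C' - 1 \ge C+2$, a union bound over $\binom{n}{k} \le e^{k\log(en/k)}$ sets of size $k$ yields failure probability at most $2 n^{-(C+2)}$ per $k$, and summing over $k \le n$ gives total failure at most $n^{-C}$.

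The within-$S$ bound follows by exactly the same recipe, except $\mu = \binom{k}{2}\dnull/n \le k^2 \dnull/n$ and the target deviation is $t = C'\bigparen{k\log(en/k) + k\sqrt{\dnull(k/n)\log(en/k)}}$. The only computation that changes is the $t \le 2\mu$ case: $t^2/(3\mu) \ge C'^2 k^2 (k/n)\dnull \log(en/k)/(3k^2\dnull/n) = C'^2 k\log(en/k)/3$, matching the cut calculation. The remainder of the union-bound argument is identical. I expect the only even mildly delicate step is keeping track of the two Chernoff regimes cleanly (the additive $k\log(en/k)$ term handles the Poisson-like ``large deviation'' regime, while the $\sqrt{\dnull \log(en/k)}$-style term handles the sub-Gaussian regime), but this is routine once the case split is made, and no technology beyond \cref{lem:Chernoff} plus $\binom{n}{k} \le (en/k)^k$ is required.
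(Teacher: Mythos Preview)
Your proposal is correct and follows essentially the same approach as the paper's proof: Chernoff bound for a fixed $S$, union bound over the $\binom{n}{k}\le (en/k)^k$ subsets of size $k$, and a final union bound over $k$. The paper phrases the two-regime analysis via the inequality $\tfrac{\delta^2}{\delta+2\mu}\ge \tfrac12\min\{\delta,\delta^2/(2\mu)\}$ and then takes $\delta$ to be the maximum of the two resulting thresholds, whereas you set $t$ to be the \emph{sum} of those thresholds and do an explicit case split on $t\gtrless 2\mu$; these are equivalent up to constants. (Minor slip: in the case $t\le 2\mu$ you have $t+2\mu\le 4\mu$, so the bound is $t^2/(4\mu)$ rather than $t^2/(3\mu)$, but this is immaterial.)
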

\begin{proof}
    Let $N \in \N$ and let $\rv X_1, \dots, \rv X_N \sim \Ber(\pnull)$ independently.
    Consider their sum $\rv X \seteq \sum_i \rv X_i$.
    By Chernoff bound \cref{lem:Chernoff}, for any $\delta > 0$,
    \[
        \Pr\Paren{\Abs{\rv X - \E \rv X} \ge \delta} 
        \le 
        2 \exp\Paren{-\frac{\delta^2}{\delta + 2 \E \rv X}} \,.
    \]
    Fix a $k \in [n]$. 
    % There are ${n \choose k}$ subsets of $[n]$ of size $k$.
    By union bound, the probability that there exists a $k$-sized subset $S \subseteq [n]$ such that $\Abs{ \rv{e}(S, \bar{S}) - \E \rv{e}(S, \bar{S}) } \ge \delta$ is at most 
    \[
        {n \choose k} \cdot 2 \exp\Paren{\frac{\delta^2}{\delta + 2 \E \rv{e}(S, \bar{S})}} 
        \le
        2 \exp\Paren{-\frac{\delta^2}{\delta + 2 \E \rv{e}(S, \bar{S})} + k\log\frac{e n}{k}} \,.
    \]
    We want to set $\delta$ such that
    \[
        \frac{\delta^2}{\delta + 2 \E \rv{e}(S, \bar{S})}
        \ge 
        C \cdot k\log\frac{e n}{k} 
    \]
    for some sufficiently large positive constant $C$. 
    Then it will imply the above probability is at most 
    \[
        2 \exp\Paren{-(C-1) \cdot k\log\frac{e n}{k}} 
        \le 2 \exp(-(C-1) \log(en))
        \le n^{-(C-1)} \,,
    \]
    where we used the fact that $x \mapsto x\log\frac{e}{x}$ is an increasing function for $x \le 1$.
    Since
    \[
        \frac{\delta^2}{\delta + 2 \E \rv{e}(S, \bar{S})}
        \ge \frac{1}{2} \min\Set{\frac{\delta^2}{\delta}, \frac{\delta^2}{2 \E \rv{e}(S, \bar{S})}}
        = \min\Set{\frac{\delta}{2}, \frac{\delta^2}{4 \E \rv{e}(S, \bar{S})}} \,,
    \]
    it suffices to ask for
    \[
        \delta 
        \ge \max\Set{ 2C \cdot k\log\frac{e n}{k}, \sqrt{4C \cdot \E \rv{e}(S, \bar{S}) \cdot k\log\frac{e n}{k}} } \,.
    \]
    Then we can apply union bound to all $k = 1, \dots, n$ and conclude that with probability at least $1 - n^{-(C-2)}$, every subset $S \subseteq [n]$ satisfies 
    \[
        \Abs{ \rv{e}(S, \bar{S}) - \E \rv{e}(S, \bar{S}) } 
        \lesssim |S| \log(en/|S|) + |S| \sqrt{\dnull \log(en/|S|)} \,.
    \]

    Similarly, we can replace all $\rv{e}(S,\bar{S})$ by $\rv{e}(S)$ in the above argument and conclude that with probability at least $1 - n^{-(C-2)}$, every subset $S \subseteq [n]$ satisfies 
    \[
        \Abs{ \rv{e}(S) - \E \rv{e}(S) } 
        \lesssim |S| \log(en/|S|) + |S| \sqrt{\dnull (|S|/n) \log(en/|S|)} \,.
    \]
\end{proof}

% \Hnote{we should think about if we want to replace $p,d$ by $\pnull, \dnull$}
\begin{lemma}
    \label[lemma]{lem:number_high_degree_vertices}
    Let $\rv A \sim \bbG(n, \dnull/n)$.
    Then for every constant $C>0$, there exists another constant $C'$ which only depends on $C$ such that with probability $1 - n^{-C}$, the number of vertices with degree larger than $2 \E d(\rv{A})$ is at most $C' n / \dnull$.
    % Then with probability at least $1-1/\poly(n)$ the number of vertices with degree larger than $2 \E d(\rv{A})$ is at most $O(n/\dnull)$.
\end{lemma}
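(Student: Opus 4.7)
The plan is to bootstrap \cref{lem:degree-subset-sum} by applying it to the random set $S \seteq \{v \in [n] : \rv{d}_v > 2\E d(\rv A)\}$ itself. Since the event in \cref{lem:degree-subset-sum} holds simultaneously for every subset with probability $1 - n^{-C}$, this instantiation is legitimate. The key identity is $\sum_{v \in S} \rv{d}_v = 2 \rv{e}(S) + \rv{e}(S, \bar S)$, whose expectation equals $(1-1/n)\dnull \cdot |S| = \E d(\rv A) \cdot |S|$. On the other hand, by the very definition of $S$, the left-hand side strictly exceeds $2 \E d(\rv A) \cdot |S|$. So the deviation of $2\rv{e}(S) + \rv{e}(S, \bar S)$ from its mean is forced to be at least $\E d(\rv A) \cdot |S| \gtrsim \dnull \cdot |S|$.

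Next, I would upper bound this same deviation using \cref{lem:degree-subset-sum}: on the good event,
\[
    \Abs{\rv{e}(S) - \E \rv{e}(S)} + \Abs{\rv{e}(S, \bar S) - \E \rv{e}(S, \bar S)} \lesssim |S| \log(en/|S|) + |S| \sqrt{\dnull \log(en/|S|)} \,.
\]
Combining the two sides, assuming $|S| \ge 1$ (else the bound is vacuous) and canceling a factor of $|S|$, I obtain
\[
    \dnull \lesssim \log(en/|S|) + \sqrt{\dnull \log(en/|S|)} \,.
\]
A short case split on which term dominates reduces this to $\dnull \lesssim \log(en/|S|)$, i.e., $|S| \lesssim n \cdot e^{-c\dnull}$ for some absolute constant $c > 0$.

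Finally, to convert the exponential bound into the stated $O(n/\dnull)$ bound, I would invoke the elementary inequality $x e^{-cx} \le 1/(ce)$, which for $\dnull \ge 1/c$ gives $e^{-c\dnull} \le 1/(ce\dnull)$ and hence $|S| \lesssim n/\dnull$; for $\dnull = O(1)$ the trivial bound $|S| \le n$ is already of the right order.

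The main obstacle is simply identifying the right random set to which the subset-sum lemma should be applied: the slick point is that controlling the \emph{number} of high-degree vertices can be reduced to averaged concentration of edge counts across and within a set, rather than pointwise degree tail bounds followed by a naive union bound (which would incur an extra $\log n$ factor and fail to give the $1/\dnull$ scaling in the dense regime). Once the right set is chosen, the remainder is purely algebraic.
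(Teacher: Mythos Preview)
Your argument is correct and constitutes a genuinely different, and in fact cleaner, route than the paper's. The paper bounds the expected number of high-degree vertices via a per-vertex Chernoff tail, and then controls the fluctuation around this expectation using a vertex-exposure martingale with a truncation trick to tame the Lipschitz constant in McDiarmid's inequality; this forces an additional case split according to whether $\dnull \lesssim (n/\log n)^{1/4}$. Your approach bypasses all of this by recycling \cref{lem:degree-subset-sum}: since that lemma already gives a \emph{uniform} bound over all subsets, you may instantiate it at the data-dependent set $S$ of high-degree vertices, and the defining property of $S$ immediately forces the deviation $2\rv e(S)+\rv e(S,\bar S) - |S|\,\E d(\rv A)$ to be at least $\E d(\rv A)\,|S|$. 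Comparing with the upper bound from the lemma and solving gives $|S| \lesssim n e^{-c\dnull} \lesssim n/\dnull$ with no further case analysis.

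One small point of phrasing: when you write that the ``expectation'' of $2\rv e(S)+\rv e(S,\bar S)$ equals $\E d(\rv A)\cdot |S|$, this is not literally an expectation once $S$ is random; rather, it is the deterministic centering term $2\binom{|S|}{2}\dnull/n + |S|(n-|S|)\dnull/n$ supplied by \cref{lem:degree-subset-sum}, which indeed simplifies to $|S|\,\E d(\rv A)$. With that clarification the argument is complete. Also note that your inequality $x e^{-cx}\le 1/(ce)$ holds for all $x>0$, so the separate treatment of $\dnull=O(1)$ is not even needed.
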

\begin{proof}
    Fix some $\eps > 0$. (We will set $\eps = 1$ in the end.)
    For $i \in [n]$, let $\rv{B}_i$ be the $\{0,1\}$-valued indicator random variable of the event that the degree of node $i$ in $\rv{A}$ is larger than $(1 + \eps) \E d(\rv{A})$.
    Let $p_i \seteq \Pr\paren{\rv{B}_i = 1}$.
    By Chernoff bound \cref{lem:Chernoff},
    \begin{align} \label{eq:prob_deg_large}
        p_i = \Pr\Paren{\rv{B}_i = 1} \le \exp\Paren{-\frac{\eps^2 \, \E d(\rv{A})}{\eps + 2}} \,.
    \end{align}
    
    Fix some deviation $\delta > 0$. We are going to upper bound $\Pr\paren{\sum_i \rB_i - \E \sum_i \rB_i \ge \delta}$.
    To this end, consider the $n$-stage vertex exposure martingale (note $\rB_i$'s are not independent) where at the $i$th stage we reveal all edges incident to the first $i$ nodes.
    Formally, let $\rv{S}_i \seteq \{\rv{A}_{ij}\}_{j > i}$ (note $\rS_i$'s are independent) and define $f(\rS_1, \dots, \rS_n) \seteq \sum_i \rB_i$.
    However, the Lipschitz constant of $f$ is $\Omega(n)$ which is too large for us to apply McDiarmid's inequality.
    To reduce the Lipschitz constant, we introduce a truncation function $t$ as follows.
    Fix a bound $\Delta \ge 1$.
    For $i \in [n]$, let $t(\rS_i) = \rS_i$ if there are at most $\Delta$ ones in $\rS_i$; otherwise, set $t(\rS_i) = 0$.
    Let $g(\rS_1, \dots, \rS_n) \seteq f(t(\rS_1), \dots, t(\rS_n))$.
    Then it is straightforward to see the Lipschitz constant of $g$ is at most $3\Delta$.
    Then by McDiarmid's inequality \cref{lem:McDiarmid},
    \begin{align*}
        \Pr\Paren{g - \E g \ge \delta} 
        \le \exp\Paren{-\frac{2 \delta^2}{9 n \Delta^2}} \,.
    \end{align*}
    To make this probability at most $n^{-C}$, it suffices to set $\delta \ge \sqrt{9C/2}\, \Delta \sqrt{n \log n}$.
    
    Then we relate $f$ and $g$.
    By the definition of $g$, we have $f(\rS_1, \dots, \rS_n) \ge g(\rS_1, \dots, \rS_n)$ with probability 1, which implies $\E f \ge \E g$.
    By Chernoff bound and union bound, there exists a constant $C' > 0$ which only depends on $C$ such that the maximum degree of $\rA$ is at most $\dnull + C' \cdot \paren{\sqrt{\dnull \log n} + \log n}$ with probability $1 - n^{-C}$.
    Thus, we set $\Delta = \dnull + C' \cdot \paren{\sqrt{\dnull \log n} + \log n}$. 
    Then we have $f(\rS_1, \dots, \rS_n) = g(\rS_1, \dots, \rS_n)$ with probability $1 - n^{-C}$.
    % \begin{align*}
    %     \Pr\Paren{f(\rS_1, \dots, \rS_n) = g(\rS_1, \dots, \rS_n)} \ge 1-1/\poly(n) \,.
    % \end{align*}
    
    Putting things together, by union bound, $f = g$ and $g - \E g \le \delta$ happen simultaneously with probability $1 - 2n^{-C}$.
    Together with $\E f \ge \E g$, they imply
    \begin{align*}
        f - \E f \le \sqrt{9C/2} \cdot \sqrt{n \log n} \Paren{\dnull + C'\sqrt{\dnull \log n} + C'\log n}
    \end{align*}
    with probability $1 - 2n^{-C}$.

    Finally, plugging in $\eps = 1$ to \cref{eq:prob_deg_large}, we have the expectation of the number of nodes with degree larger than $2\E d(\rA)$ is upper bounded by 
    \[
        n e^{-\E d(\rA) / 3} \le 2n / \E d(\rA),
    \]
    where we used $e^{-x/3} \le 2/x$ for all $x>0$.
    To ensure the deviation at most $O(n/\dnull)$, it suffices to require 
    \begin{align*}
        \sqrt{n \log n} \Paren{\dnull + \sqrt{\dnull \log n} + \log n} \lesssim \frac{n}{\dnull}
        \iff
        \dnull \lesssim (n / \log n)^{1/4} \,.
    \end{align*}
    Note when $\dnull \gtrsim (n / \log n)^{1/4}$, it is easy to see the maximum degree of $\rA$ is $(1+o(1)) \dnull$ with probability $1-1/\poly(n)$. 
\end{proof}
% \begin{proof}
%     Note that by a Chernoff bound and linearity of expectation the expected number of such vertices is at most
%     \[
%         n \cdot \Pr\brac{\mathrm{Bin}(n-1,p) \geq 2(n-1)p} \leq ne^{-d/3} \leq n/d \,.
%     \]
%     To prove concentration, we use the same vertex exposure martingale as in \cref{lem:degree_modification}.
%     Consider the function $f(S_1, \ldots, S_n)$ that first for every $i$ checks if at most $100 \log n$ elements of $S_i$ are 1 and if not, keeps arbitrary $100 \log n$.
%     Afterwards, it counts the number of vertices of degrees larger than $2d$.
%     With probability at least $1- 1/\poly(n)$ the function $f$ is equal to the number of vertices of degree larger than $2d$ (and it is always upper bounded by this number).
%     Condition on this event.
%     Note that changing one of the $S_i$, can change the value of $f$ by at most $300 \log n$.
%     Thus, by McDiarmid's Inequality, we obtain
%     \[
%         \Pr\Brac{f - \E f \geq \frac n d} \leq \exp\Paren{- \frac{2 n^2 / d^2} {300 n \log n}} = \exp\Paren{- \frac{2n}{300d^2 \log n}} \leq n^{-100} \,.
%     \]
% \end{proof}

% \Hnote{we should think about if we want to replace $p,d$ by $\pnull, \dnull$}
\begin{lemma} 
    \label[lemma]{lem:normalized_adjacency_matrix_spectral_norm_concentration}
    Let $\rv A \sim \bbG(n, \dnull/n)$.
    % Let $\dnull \seteq (n-1)\pnull$.
    Let $\tilde{\rv D}$ be the $n$-by-$n$ diagonal matrix whose $i$-th diagonal entry is $\max\{1, \rd_i \big/ 2\E d(\rA)\}$ where $\rd_i$ is the degree of node $i$.
    Then with probability $1 - 1/\poly(n)$,
    \[
        \Normop{\tilde{\rv D}^{-1/2} (\rv A - \E \rv A) \tilde{\rv D}^{-1/2}}
        \lesssim
        \sqrt{\dnull} \,.
    \]
\end{lemma}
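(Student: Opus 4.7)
The plan is to adapt the concentration-after-regularization result of Le--Levina--Vershynin \cite{le2016estimating}, using the diagonal rescaling $\tilde{\rv D}^{-1/2}$ in place of their high-degree vertex trimming. Without such regularization, $\|\rA - \E \rA\|_{\op}$ can be as large as $\Theta(\sqrt{\log n / \log \log n})$ in the sparse regime, driven by localized eigenvectors supported on high-degree outlier vertices. The role of $\tilde{\rv D}^{-1/2}$ is precisely to damp those coordinates: a vertex $i$ with $\rd_i > 2 \E d(\rA)$ receives weight $\tilde{\rv D}_{ii} > 1$, which shrinks the corresponding entry of $\tilde{\rv D}^{-1/2} u$ for any unit $u$.

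I would split into two regimes. In the dense regime $\dnull \geq c \log n$ for a sufficiently large constant $c$, matrix Bernstein directly yields $\|\rA - \E \rA\|_{\op} \lesssim \sqrt{\dnull}$ with probability $1 - 1/\poly(n)$. Since $\tilde{\rv D} \succeq \id$, we have $\tilde{\rv D}^{-1/2} \preceq \id$, and for any symmetric $M$, the substitution $v = \tilde{\rv D}^{-1/2} u$ shows that $\|\tilde{\rv D}^{-1/2} M \tilde{\rv D}^{-1/2}\|_{\op} \leq \|M\|_{\op}$, giving the claim immediately.

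The interesting case is the sparse regime $\dnull = O(\log n)$. Here I would apply the variational formula and change variables $x = \tilde{\rv D}^{-1/2} u$, $y = \tilde{\rv D}^{-1/2} v$, reducing the task to showing
\begin{equation*}
    \sup \sum_{i,j} x_i (\rA_{ij} - \E \rA_{ij}) y_j \lesssim \sqrt{\dnull}
\end{equation*}
subject to $\sum_i \tilde{\rv D}_{ii} x_i^2 \leq 1$ and the analogous constraint on $y$. The weighted constraint gives both $\|x\|_2 \leq 1$ and $\sum_i \rd_i x_i^2 \lesssim \E d(\rA) = \Theta(\dnull)$, which are precisely the two ingredients needed to drive the LLV discretization argument. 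Following their scheme, I would split the double sum into a \emph{light} part where $|x_i y_j| \leq \sqrt{\dnull}/n$ and a \emph{heavy} part. The light contribution is bounded via a Bernstein inequality on an $\epsilon$-net of the weighted unit ball. The heavy contribution is handled by binning coordinates of $x$ and $y$ into dyadic magnitude buckets and, for each pair of buckets, bounding the associated bipartite edge count of $\rA$.

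The main obstacle will be porting the LLV bucket-counting combinatorial estimates from their unweighted $\ell_2$ setting to our weighted norm, and in particular verifying that the heavy-pair analysis only needs the degree information that the weighted constraint supplies. I expect this to be essentially mechanical once one observes that the weighted constraint provides the right quadratic-form control on high-degree vertices, together with \cref{lem:number_high_degree_vertices} to bound the number of vertices with $\tilde{\rv D}_{ii} > 1$ by $O(n/\dnull)$. A union bound over the $O(\log^2 n)$ dyadic bucket pairs then closes out the heavy part and yields the claim with probability $1 - 1/\poly(n)$.
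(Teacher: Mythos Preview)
Your plan is workable but takes a much longer route than the paper. The paper does not re-run the Le--Levina--Vershynin light/heavy decomposition in a weighted norm; instead it observes that $\tilde{\rv D}^{-1/2}\rA\tilde{\rv D}^{-1/2}$ is itself an instance of the LLV regularization: each entry $\rA_{ij}$ is multiplied by $\tilde{\rv D}_{ii}^{-1/2}\tilde{\rv D}_{jj}^{-1/2}\in(0,1]$, and this factor differs from $1$ only on the set $\rT=\{i:\rd_i>2\E d(\rA)\}$, which by \cref{lem:number_high_degree_vertices} has $|\rT|\lesssim n/\dnull$. This is exactly the hypothesis of \cite[Theorem~2.1]{le2016estimating} (``reduce edge weights incident to at most $10n/d$ vertices arbitrarily''), so one gets $\|\tilde{\rv D}^{-1/2}\rA\tilde{\rv D}^{-1/2}-\E\rA\|_{\op}\lesssim\sqrt{\dnull}$ as a black box. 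The only remaining work is a triangle-inequality correction: bound $\|\E\rA-\tilde{\rv D}^{-1/2}(\E\rA)\tilde{\rv D}^{-1/2}\|_{\op}$. Since $\E\rA\approx\tfrac{\dnull}{n}\one\one^\top$, with $\rc=\tilde{\rv D}^{-1/2}\one$ this reduces to $\tfrac{\dnull}{n}\|\one\one^\top-\rc\rc^\top\|_{\op}$, and a crude Frobenius bound using $\rc_i=1$ off $\rT$ and $|\rT|\lesssim n/\dnull$ gives $\|\one\one^\top-\rc\rc^\top\|_F^2\lesssim n|\rT|\lesssim n^2/\dnull$, hence $\lesssim\sqrt{\dnull}$ after the $\dnull/n$ prefactor.

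What each approach buys: the paper's argument is a few lines once one recognizes that the diagonal rescaling \emph{is} the regularization LLV already analyzed, so no net or bucket machinery needs to be touched. Your approach would also succeed and has the merit of being self-contained, but the ``essentially mechanical'' heavy-pair adaptation is in fact the bulk of the LLV paper; you would be reproving their main theorem rather than citing it. Your dense-regime shortcut via $\tilde{\rv D}^{-1/2}\preceq\id$ and submultiplicativity is fine, but note the same monotonicity does not close the sparse case precisely because $\|\rA-\E\rA\|_{\op}$ itself is too large there---which is why you were forced back into the full LLV argument.
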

\begin{proof}
    First, let $\rT \sse [n]$ be the index set corresponding to vertices of degree larger than $2\E d(\rA)$.
    By \cref{lem:number_high_degree_vertices}, we know that $\card{\rT} \lesssim n /\dnull$ with probability at least $1-1/\poly(n)$.
    We condition on this event.
    Note that this implies that the rescaling by $\tilde{\rv D}^{-1/2}$ affects at most $O(n/\dnull)$ vertices.
    By \cite[Theorem 2.2.1]{le2016estimating} (see also Point 4 in Section 2.1.4), it follows that $\Normop{\tilde{\rv D}^{-1/2} \rv A \tilde{\rv D}^{-1/2} - \E \rv A} \lesssim \sqrt{\dnull}$ with probability at least $1-1/\poly(n)$.%\footnote{Technically, the thesis considers $\dnull = n\pnull$ instead of $\dnull = (n-1)\pnull$ the same result still applies.}
    It is thus enough to show that with the same probability $\Normop{\E\rA - \tilde{\rv D}^{-1/2} \E[\rA]\, \tilde{\rv D}^{-1/2}} \lesssim \sqrt{\dnull}$.
    Note that $\E \rv A = \tfrac \dnull n (\one \one^\top - \id_n)$.
    Throughout the proof we will pretend it is equal to $\tfrac \dnull n  \one \one^\top$, it can be easily checked that the difference is a lower order term.
    %\Hnote{
    %    strictly speaking, $\E \rv A$ should be $\frac{d}{n} \Paren{\op\one - \id}$. though it will not make change the conclusion if we say $\E \rv A = \frac{d}{n} \op\one$, i think it should be a better to say (maybe in footnote)$\E \rv A = \frac{d}{n} \Paren{\op\one - \id}$ and briefly say something like ignoring the identity matrix is not a problem as it is a lower order term.
    %}
    Let $\rc = \tilde{\rv D}^{-1/2} \one$.
    We will show that $\Normop{\op{\rc} - \one \one^\top} \lesssim \tfrac n {\sqrt{\dnull}}$ with probability at least $1- 1/\poly(n)$, which implies the claim.

    Note that for $i \not\in \rT$ it holds that $\rc_i = 1$.
    Further, for all $i$, we have $0 \leq \rc_i \leq 1$.
    Thus,
    \begin{align*}
            \Normop{\op{\rv{c}} - \op{\one}}^2 
            \le \Normf{\op{\rv{c}} - \op{\one}}^2
            \le 2\sum_{i \in \rT, j \in [n]} \Paren{1 + \rc_i^2 \rc_j^2} 
            \leq 4n \card{\rT} 
            \lesssim \frac {n^2} \dnull\,.
    \end{align*}
    % To bound the spectral norm, we show that for any unit vector $x$, $\abs{x^\top (cc^\top - \one \one^\top)x} \lesssim n/\sqrt{d}$.
\end{proof}

\section{Robust binomial mean estimation}
\label{sec:robust-binomial}
In this section, we show that the sample median can robustly estimate the mean of a binomial distribution. For simplicity, we prove the result for a smaller but rich enough parameter regime than our main theorem \cref{thm:main-theorem}. We also make basic integrality assumptions to avoid having to deal with floors and ceilings throughout the proof.
To compare it with \cref{thm:main-theorem}, we set $m = n$ in the arguments below.

The corrupted binomial model that we consider is defined as follows.
\begin{definition}[$\eta$-corrupted ($m$, $n$, $d$)-binomial model]
\label{def:corrupted-binom}
    Let $\eta \in [0, 1]$, the $\eta$-corrupted ($m$, $n$, $d$)-binomial model is generated by first sampling $m$ i.i.d samples $\rv X^\circ_1, \rv X^\circ_2, \dots, \rv X^\circ_m$ from $\Bin(n, \frac{d}{n})$, then adversarially picking an $\eta$-fraction of the samples and arbitrarily modifying them to get $X_1, X_2, \dots, X_m$.
\end{definition}

The goal of robust binomial mean estimation is to estimate the mean $d$ given observation of corrupted samples $X_1, X_2, \dots, X_m$ that are generated according to the $\eta$-corrupted ($m$, $n$, $d$)-binomial model from \cref{def:corrupted-binom}. We will show that the median of the corrupted samples satisfy the following error guarantee.
\begin{theorem}
\label{thm:robust-binom}
    Given $\eta$-corrupted ($m$, $n$, $d$)-binomial samples $X_1, X_2, \dots, X_m$, when $10000 \leq d \leq 0.0001 n$ is an integer and $\tfrac {1000} {\sqrt{m}} \leq \eta \leq 0.01$, the median $\hat{d}$ satisfies, with probability $1-4\exp(-\eta^2 m / 4)$,
    \begin{equation*}
        \Abs{\hat{d}-d} \leq O(\eta \sqrt{d}) \,.
    \end{equation*}
\end{theorem}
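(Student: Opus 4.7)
My plan is the classical quantile-based robust estimation argument, tailored to the binomial distribution via an explicit local central limit estimate. First I would translate ``the corrupted median lies outside $[d-t,d+t]$'' into a condition on the CDF $F$ of $\Bin(n,d/n)$; then I would use an anti-concentration bound on $F$ near its mean to show that the condition fails for $t = O(\eta\sqrt{d})$.

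\paragraph{Reduction to a CDF condition.}
Fix $t \geq 0$ and suppose $\hat d \leq d - t$. Then at least $\lceil m/2\rceil$ of the $X_i$ are $\leq d - t$; removing the at most $\eta m$ corrupted ones, at least $(1/2 - \eta)m$ of the clean samples $X^\circ_i$ satisfy $X^\circ_i \leq d - t$. Since $N_-(t) := |\{i : X^\circ_i \leq d - t\}| \sim \Bin(m, F(d-t))$, a standard Chernoff bound gives
\begin{equation*}
    \Pr\bigl(N_-(t) \geq F(d-t)\,m + \eta m\bigr) \leq \exp(-\eta^2 m/4),
\end{equation*}
using $\eta \geq 1000/\sqrt{m}$ to absorb lower-order factors. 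Combining, on the complementary event, $\hat d \leq d - t$ forces $F(d - t) \geq 1/2 - 2\eta$. The symmetric argument handles $\hat d \geq d + t$. Hence it is enough to exhibit $t = O(\eta\sqrt{d})$ with $F(d-t) < 1/2 - 2\eta$ and $1 - F(d+t) < 1/2 - 2\eta$.

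\paragraph{Anti-concentration of the binomial near its mean.}
Since $d$ is an integer and equals the mean $np$, the classical fact that the median of $\Bin(n, d/n)$ is exactly $d$ gives $F(d-1) \leq 1/2$ and $1 - F(d) \leq 1/2$. I would then use Stirling's formula to show that for every integer $k$ with $|k - d| \leq \sqrt{d}$,
\begin{equation*}
    \Pr\bigl(\Bin(n, d/n) = k\bigr) \;\geq\; \frac{c_0}{\sqrt{d}}
\end{equation*}
for some absolute constant $c_0 > 0$. The hypothesis $d \geq 10000$ controls the Stirling error, and $d \leq 0.0001\, n$ ensures the variance is $d(1 - d/n) \geq 0.9999\,d$, so the standard Gaussian approximation to the binomial point mass kicks in with absolute constants. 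Summing,
\begin{equation*}
    F(d - 1) - F(d - t - 1) \;=\; \sum_{k = d-t}^{d-1} \Pr(X = k) \;\geq\; c_0\, t / \sqrt{d}
\end{equation*}
for every integer $t \leq \sqrt{d}$. Choosing $t = \lceil (4/c_0)\,\eta\sqrt{d}\,\rceil$, which is $\leq \sqrt{d}$ thanks to $\eta \leq 0.01$, yields $F(d - t) \leq 1/2 - 2\eta$ as required; the symmetric bound on the right tail follows identically, and a union bound over the two tail events produces the stated failure probability of at most $4\exp(-\eta^2 m/4)$.

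\paragraph{Main obstacle.}
The technical meat is the anti-concentration step: one must verify that the point-mass lower bound $c_0/\sqrt{d}$ holds uniformly in $d \in [10000, 0.0001\,n]$ and $k \in [d - \sqrt d,\, d + \sqrt d]$ with an absolute constant. Explicit Stirling bounds of the shape $\binom{n}{k}(d/n)^k(1-d/n)^{n-k} = (1 + O(1/d))\,(2\pi d(1-d/n))^{-1/2} \exp(-(k-d)^2/(2d(1-d/n)) + O(|k-d|^3/d^2))$ together with $|k-d| \leq \sqrt{d}$ and $d/n \leq 10^{-4}$ keep every multiplicative correction an absolute constant, so this is routine but slightly tedious. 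The parameter hypotheses in the theorem are precisely tuned to make all of these error contributions benign.
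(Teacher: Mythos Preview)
Your proposal is correct and follows essentially the same approach as the paper. Both arguments (i) use that the median of $\Bin(n,d/n)$ equals $d$, (ii) apply Stirling's formula to show each point mass within $O(\sqrt{d})$ of $d$ is $\Omega(1/\sqrt{d})$, hence the mass in an interval of length $O(\eta\sqrt{d})$ adjacent to $d$ is $\Omega(\eta)$, and (iii) invoke Chernoff/Hoeffding on the empirical counts to transfer this to the sample. The only difference is packaging: the paper separately bounds the number of clean samples $\leq d$ and the number in $[d-100\eta\sqrt d,\,d-1]$ and then adds, whereas you fold both into a single CDF inequality $F(d-t) < 1/2 - 2\eta$; this is cosmetic and the constants, parameter assumptions, and final probability bound all line up.
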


To prove \cref{thm:robust-binom}, we need the following two lemmas.
\begin{lemma}
\label{lem:binom-density-concentration}
    When $10000 \leq d \leq 0.0001 n$ is an integer and $\eta \leq 0.01$, let $\rv X \sim \Bin(n, \frac{d}{n})$ be a binomial random variable, it follows that
    \begin{equation}
    \label{eq:binom-bound-1}
        \Pr(d - 100 \eta \sqrt{d} \leq \rv X \leq d - 1) \geq 4 \eta \,,
    \end{equation}
    and,
    \begin{equation}
    \label{eq:binom-bound-2}
        \Pr(d + 1 \leq \rv X \leq d + 100 \eta \sqrt{d}) \geq 4 \eta \,.
    \end{equation}
\end{lemma}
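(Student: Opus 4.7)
The plan is to reduce both inequalities to a single local estimate: I will show that there is an absolute constant $c_0 > 0$ such that
\[
    \Pr(\rv X = k) \ge \frac{c_0}{\sqrt{d}} \quad \text{for every integer } k \text{ with } |k-d| \le \sqrt{d}.
\]
Since $\eta \le 0.01$, both windows $[d-100\eta\sqrt{d}, d-1]$ and $[d+1, d+100\eta\sqrt{d}]$ sit inside $\{k : |k-d| \le \sqrt{d}\}$, so summing this pointwise bound over the $\lfloor 100\eta\sqrt{d}\rfloor$ integers in each window gives a total probability of at least $c_0 \lfloor 100\eta\sqrt{d}\rfloor / \sqrt{d}$. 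In the regime of application, where $100\eta\sqrt{d}$ exceeds a small absolute constant, this floor is at least $50\eta\sqrt{d}$, which yields $50c_0 \eta$, and this beats the required $4\eta$ once the constants in the pointwise bound are chosen with a bit of slack.

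To produce the pointwise bound I will work directly from Stirling's formula rather than invoking a packaged local CLT, so that all constants are explicit. Because $d$ is an integer and $p = d/n \le 10^{-4}$, the mode of $\Bin(n,p)$ is at $k=d$, and Stirling applied to $\binom{n}{d}$ together with the factor $(d/n)^d(1-d/n)^{n-d}$ gives $\Pr(\rv X = d) = (1+o(1))/\sqrt{2\pi d(1-d/n)}$, which is at least $c_1/\sqrt{d}$ for an explicit $c_1 > 0$ since $1-d/n \ge 0.9999$ and $d \ge 10^4$ makes the Stirling relative error negligible. To extend this estimate to nearby $k$, I will telescope the one-step ratio
\[
    \frac{\Pr(\rv X = k+1)}{\Pr(\rv X = k)} = \frac{(n-k)\,p}{(k+1)(1-p)},
\]
which for $|k - d| \le \sqrt{d}$ equals $1 - (k-d+\Theta(1))/d + O(\sqrt{d}/n)$. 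Summing the logarithms of at most $\sqrt{d}$ consecutive such factors yields total magnitude $O(1)$, so the product is bounded below by $e^{-O(1)}$, and the bound at $d$ propagates, up to a multiplicative constant, to the entire window $|k-d| \le \sqrt{d}$.

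The main obstacle is not conceptual but bookkeeping: one must track the absolute constants in the Stirling expansion and the telescoped product sharply enough to guarantee $50c_0 \ge 4$. There is ample slack here because $\eta \le 0.01$ forces the window radius to be at most $\sqrt{d}$, so the Gaussian-type factor $\exp(-(k-d)^2/(2d))$ never drops below $e^{-1/2}$. A minor caveat is that the statement as written is vacuous unless $100\eta\sqrt{d} \gtrsim 1$; this is always the case in the regime where the lemma is applied inside \cref{thm:robust-binom}, and I will flag this condition explicitly in the proof.
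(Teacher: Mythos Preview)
Your proposal is correct and follows essentially the same strategy as the paper: both establish a pointwise lower bound $\Pr(\rv X = t) \gtrsim 1/\sqrt{d}$ for every integer $t$ with $|t-d|\le \sqrt{d}$, then sum this over the roughly $100\eta\sqrt{d}$ integers in each window to obtain $\gtrsim 100\eta$ and check the constants. The only mechanical difference is that the paper applies Stirling directly to $\binom{n}{t}$ for each $t$ and handles the remaining $t\log(d/t)+(n-t)\log((n-d)/(n-t))$ by second-order Taylor expansion, whereas you anchor Stirling once at the mode $t=d$ and propagate outward via the one-step ratio $\Pr(\rv X=k+1)/\Pr(\rv X=k)$; both routes deliver the same bound with comparable constants, and your flag that the argument implicitly needs $100\eta\sqrt{d}\ge 1$ is accurate---the paper's proof tacitly uses this as well.
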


\begin{proof}
    Consider an arbitrary integer $t \in [d - 100 \eta \sqrt{d}, d + 100 \eta \sqrt{d}]$, it follows that
    \begin{align}
    \label{eq:binom-quantile-1}
    \begin{split}
        \log \Paren{\Pr(\rv X = t)}
        & =  \log \Paren{\binom{n}{t} \Paren{\frac{d}{n}}^t \Paren{\frac{n-d}{n}}^{n-t}} \\
        & = \log \Paren{\frac{n!}{t!(n-t)!} \Paren{\frac{d}{n}}^t \Paren{\frac{n-d}{n}}^{n-t}} \\
        & = \log \Paren{\frac{n!}{t!(n-t)!}} + t \log\Paren{\frac{d}{n}} + (n-t) \log \Paren{\frac{n-d}{n}} \,.
    \end{split}
    \end{align}
    By Stirling's approximation, it follows that
    \begin{align}
    \label{eq:binom-quantile-2}
    \begin{split}
        \frac{n!}{t!(n-t)!}
        & = \frac{\sqrt{2 \pi n} (n/e)^n (1+O(1/n))}{\sqrt{2 \pi t} (t/e)^t (1+O(1/t)) \cdot \sqrt{2 \pi (n-t)} ((n-t)/e)^{n-t} (1+O(1/(n-t)))} \\
        & = \sqrt{\frac{n}{2 \pi t(n-t)}} \cdot \frac{1+O(1/n)}{(1+O(1/t))(1+O(1/(n-t))} \cdot \Paren{\frac{n}{t}}^t \Paren{\frac{n}{n-t}}^{n - t}\\
        & \geq \sqrt{\frac{1}{10 t}} \Paren{\frac{n}{t}}^t \Paren{\frac{n}{n-t}}^{n - t} \,.
    \end{split}
    \end{align}
    Plugging \cref{eq:binom-quantile-2} into \cref{eq:binom-quantile-1}, we get
    \begin{equation}
    \label{eq:binom-quantile-3}
        \log \Paren{\Pr(\rv X = t)}
        \geq \frac{1}{2} \log\Paren{\frac{1}{10 t}} + t \log\Paren{\frac{d}{t}} + (n-t) \log \Paren{\frac{n-d}{n-t}} \,.
    \end{equation}
    For the second term $t \log\Paren{\frac{d}{t}}$, we use the Maclaurin series of natural logarithm and get
    \begin{align*}
        t \log\Paren{\frac{d}{t}}
        & = t \log\Paren{1+\frac{d-t}{t}} \\
        & \geq t \Paren{\frac{d-t}{t}-\frac{(d-t)^2}{2t^2}+\frac{(d-t)^3}{3t^3}} \\
        & = d-t-\frac{(d-t)^2}{2t}+\frac{(d-t)^3}{3t^2} \,.
    \end{align*}
    Since $(d-t)^2 \leq 10000 \eta^2 d \leq d$ and $t \geq d-100 \eta \sqrt{d} \geq 0.99 d$, it follows that
    \begin{equation}
    \label{eq:binom-quantile-4}
        t \log\Paren{\frac{d}{t}}
        \geq d-t-0.9 \,.
    \end{equation}
    For the last term $(n-t) \log \Paren{\frac{n-d}{n-t}}$, we can also use the Maclaurin series of natural logarithm and get
    \begin{align*}
        (n-t) \log \Paren{\frac{n-d}{n-t}}
        & = (n-t) \log \Paren{1+\frac{t-d}{n-t}} \\
        & \geq (n-t) \Paren{\frac{t-d}{n-t} - \frac{(t-d)^2}{2(n-t)^2} + \frac{(t-d)^3}{3(n-t)^3}} \\
        & = t-d - \frac{(t-d)^2}{2(n-t)} + \frac{(t-d)^3}{3(n-t)^2} \,.
    \end{align*}
    Since $(t-d)^2 \leq 10000 \eta^2 d \leq 0.0001n$, it follows that
    \begin{equation}
    \label{eq:binom-quantile-5}
        (n-t) \log \Paren{\frac{n-d}{n-t}}
        \geq t-d-0.1 \,.
    \end{equation}
    Plugging \cref{eq:binom-quantile-4} and \cref{eq:binom-quantile-5} into \cref{eq:binom-quantile-3}, we get
    \begin{equation}
    \label{eq:binom-quantile-6}
        \log \Paren{\Pr(\rv X = t)}
        \geq \frac{1}{2} \log\Paren{\frac{1}{10 t}}-1
        \geq \frac{1}{2} \log\Paren{\frac{1}{t}}-3\,.
    \end{equation}
    Now, we are ready to prove \cref{eq:binom-bound-1} and \cref{eq:binom-bound-2}. We first consider the regime $t \in [d - 100 \eta \sqrt{d},d - 1]$, it follows that
    \begin{align*}
        \log \Paren{\Pr(d - 100 \eta \sqrt{d} \leq \rv X \leq d - 1)}
        & = \log \Paren{\sum_{t = d - 100 \eta \sqrt{d}}^{d - 1} \Pr(\rv X = t)} \\
        & \geq \log \Paren{100 \eta \sqrt{d} \Pr(\rv X = d - 100 \eta \sqrt{d})} \\
        & = \log \Paren{100 \eta} + \frac{1}{2}\log \Paren{d} + \log \Paren{\Pr(\rv X = d - 100 \eta \sqrt{d})} \\
        & \geq \log \Paren{100 \eta} + \frac{1}{2}\log \Paren{d} + \frac{1}{2} \log\Paren{\frac{1}{d - 100 \eta \sqrt{d}}}-3 \\
        & = \log \Paren{4 \eta} + \frac{1}{2} \log\Paren{\frac{d}{d - 100 \eta \sqrt{d}}} \\
        & \geq \log \Paren{4 \eta} \,,
    \end{align*}
    which implies that
    \begin{equation*}
        \Pr(d - 100 \eta \sqrt{d} \leq \rv X \leq d - 1) \geq 4 \eta \,.
    \end{equation*}
    The regime $t \in [d + 1, d + 100 \eta \sqrt{d}]$ can be proved in a similar way
    \begin{align*}
        \log \Paren{\Pr(d + 1 \leq \rv X \leq d + 100 \eta \sqrt{d})}
        & = \log \Paren{\sum_{t = d + 1}^{d + 100 \eta \sqrt{d}} \Pr(\rv X = t)} \\
        & \geq \log \Paren{100 \eta \sqrt{d} \Pr(\rv X = d + 100 \eta \sqrt{d})} \\
        & = \log \Paren{100 \eta} + \frac{1}{2}\log \Paren{d} + \log \Paren{\Pr(\rv X = d + 100 \eta \sqrt{d})} \\
        & \geq \log \Paren{100 \eta} + \frac{1}{2}\log \Paren{d} + \frac{1}{2} \log\Paren{\frac{1}{d + 100 \eta \sqrt{d}}} - 3\\
        & = \log \Paren{4.1 \eta} + \frac{1}{2} \log\Paren{\frac{d}{d + 100 \eta \sqrt{d}}} \\
        & \geq \log \Paren{4.1 \eta} + \frac{1}{2} \log\Paren{\frac{1}{1.01}} \\
        & \geq \log \Paren{4 \eta} \,.
    \end{align*}
    which implies that
    \begin{equation*}
        \Pr(d+1 \leq \rv X \leq d + 100 \eta \sqrt{d}) \geq 4 \eta \,.
    \end{equation*}
\end{proof}

\begin{lemma}
    \label{lem:binom-chernoff}
    When $10000 \leq d \leq 0.0001 n$ is an integer and $\frac{1000}{m} \leq \eta \leq 0.01$, let $\rv X^\circ_1, \rv X^\circ_2, \dots, \rv X^\circ_m$ be $m$ i.i.d. binomial random variables from $\Bin(n, \frac{d}{n})$, with probability $1-2 \exp \Paren{- \frac{\eta m}{2}}$, there are at least $2 \eta m$ samples in range $[d - 100 \eta \sqrt{d},d - 1]$ and at least $2 \eta m$ samples in range $[d+1, d + 100 \eta \sqrt{d}]$.
\end{lemma}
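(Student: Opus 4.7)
My plan is to view this as a straightforward Chernoff concentration argument built on top of \cref{lem:binom-density-concentration}.

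First, for each $i \in [m]$, I will define indicator random variables $\rv{L}_i \seteq \Ind\{d - 100\eta\sqrt{d} \le \rv X_i^\circ \le d-1\}$ and $\rv{U}_i \seteq \Ind\{d+1 \le \rv X_i^\circ \le d + 100\eta\sqrt{d}\}$. By \cref{lem:binom-density-concentration} applied to each $\rv X_i^\circ \sim \Bin(n, d/n)$, we have $\Pr(\rv{L}_i = 1) \ge 4\eta$ and $\Pr(\rv{U}_i = 1) \ge 4\eta$. Since the $\rv X_i^\circ$ are i.i.d., the $\rv{L}_i$'s (resp. $\rv{U}_i$'s) are i.i.d. Bernoulli-like variables taking values in $\{0,1\}$, so setting $\rv L \seteq \sum_i \rv L_i$ and $\rv U \seteq \sum_i \rv U_i$ we get $\E \rv L \ge 4\eta m$ and $\E \rv U \ge 4\eta m$.

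Next, I will apply the lower-tail Chernoff bound from \cref{lem:Chernoff} to each sum. For $\rv L$, I want to bound $\Pr(\rv L < 2\eta m)$. Since $2\eta m \le \E \rv L / 2$, the event $\{\rv L < 2\eta m\}$ is contained in $\{\rv L \le \E \rv L - \E \rv L / 2\}$, so taking $t = \E \rv L / 2$ in the lower-tail Chernoff bound gives
\[
    \Pr(\rv L < 2\eta m) \le \exp\!\Paren{-\frac{(\E \rv L/2)^2}{2 \E \rv L}} = \exp\!\Paren{-\frac{\E \rv L}{8}} \le \exp\!\Paren{-\frac{\eta m}{2}}\,.
\]
The identical argument applied to $\rv U$ gives the same bound. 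Finally, a union bound over the two bad events yields that both $\rv L \ge 2\eta m$ and $\rv U \ge 2\eta m$ hold simultaneously with probability at least $1 - 2\exp(-\eta m / 2)$, which is exactly the claimed guarantee.

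There is no real obstacle here — the only thing to double-check is that the density bounds in \cref{lem:binom-density-concentration} can indeed be invoked under the stated parameter assumptions ($10000 \le d \le 0.0001 n$ and $\eta \le 0.01$), which match the hypotheses of this lemma, and that the constant factor of $4$ in the density lower bound is exactly what makes the Chernoff calculation give the clean exponent $\eta m/2$. The assumption $\eta \ge 1000/m$ is not strictly needed for the proof itself but ensures the failure probability $2\exp(-\eta m/2)$ is meaningfully small (and matches how the lemma is invoked elsewhere in \cref{sec:robust-binomial}).
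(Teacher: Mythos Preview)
Your proposal is correct and is essentially the same as the paper's own proof: both define Bernoulli indicators for the two ranges, invoke \cref{lem:binom-density-concentration} to get success probability at least $4\eta$, apply the lower-tail Chernoff bound with deviation $t = \E\rv L/2$ to obtain $\exp(-\E\rv L/8)\le \exp(-\eta m/2)$, and finish with a union bound. Your remark that the hypothesis $\eta \ge 1000/m$ is not used in the argument itself is also accurate.
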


\begin{proof}
    By \cref{lem:binom-density-concentration}, we know that for each $\rv X \in \{\rv X^\circ_1, \rv X^\circ_2, \dots, \rv X^\circ_m\}$, we have
    \begin{equation}
    \label{eq:binom-chernoff-1}
        \Pr(d - 100 \eta \sqrt{d} \leq \rv X \leq d - 1) \geq 4 \eta \,,
    \end{equation}
    and,
    \begin{equation}
    \label{eq:binom-chernoff-2}
        \Pr(d + 1 \leq \rv X \leq d + 100 \eta \sqrt{d}) \geq 4 \eta \,.
    \end{equation}
    Let us denote by $\rv Z_i$ the event that $\rv X^\circ_i$ is in range $[d - 100 \eta \sqrt{d},d - 1]$.
    By \cref{eq:binom-chernoff-1}, we have $\E \Brac{\rv Z_i} \geq 4 \eta$.
    Since $\rv X^\circ_1, \rv X^\circ_2, \dots, \rv X^\circ_m$ are i.i.d., the events $\rv Z_i$'s are i.i.d. Bernoulli random variables. Therefore, by Chernoff bound, it follows that
    \begin{align*}
        \Pr \Paren{\sum_i \rv Z_i \leq 2 \eta m}
        & \leq \Pr \Paren{\sum_i \rv Z_i \leq (1-\frac{1}{2}) \E \Brac{\sum_i \rv Z_i}} \\
        & \leq \exp \Paren{- \frac{\E \Brac{\sum_i \rv Z_i}}{8}} \\
        & \leq \exp \Paren{- \frac{\eta m}{2}} \,.
    \end{align*}
    Therefore, with probability at least $1-\exp \Paren{- \frac{\eta m}{2}}$, at least $2 \eta m$ samples are in range $[d - 100 \eta \sqrt{d},d - 1]$. Using the same argument, it can also be shown that, with probability at least $1-\exp \Paren{- \frac{\eta m}{2}}$, at least $2 \eta m$ samples are in range $[d+1, d + 100 \eta \sqrt{d}]$. The lemma follows by union bound.
\end{proof}

Now, we are ready to prove \cref{thm:robust-binom} for the error guarantee of median for corrupted binomial samples.
\begin{proof}[Proof of \cref{thm:robust-binom}]
    Consider the uncorrupted samples $\rv X^\circ_1, \rv X^\circ_2, \dots, \rv X^\circ_m$. Notice that the median of $\Bin(n, \frac{d}{n})$ is $d$ when $d$ is an integer, that is $\Pr(\rv X^\circ_i \geq d) \geq \frac{1}{2}$ and $\Pr(\rv X^\circ_i \leq d) \geq \frac{1}{2}$ for each $i \in [m]$.
    Using similar arguments via Chernoff bound as \cref{lem:binom-chernoff}, it is easy to check that with probability at least $1-2\exp(-\eta^2m / 4)$, there are at least $\frac{(1-\eta)m}{2}$ uncorrupted samples that are at least $d$ and at least $\frac{(1-\eta)m}{2}$ uncorrupted samples that are at most $d$.
    
    By \cref{lem:binom-chernoff}, with probability $1-2\exp(-\eta m / 2) \geq 1-2\exp(-\eta^2 m / 4)$, there are at least $2 \eta m$ uncorrupted samples in range $[d - 100 \eta \sqrt{d},d - 1]$ and at least $2 \eta m$ uncorrupted samples in range $[d+1, d + 100 \eta \sqrt{d}]$.
    
    Therefore, combining the two bounds, with probability $1-4\exp(-\eta^2 m / 4)$, there are at least $\frac{m}{2} + \frac{3 \eta m}{2}$ uncorrupted samples in range $[d - 100 \eta \sqrt{d},n]$ and at least $\frac{m}{2} + \frac{3 \eta m}{2}$ uncorrupted samples in range $[0, d + 100 \eta \sqrt{d}]$.
    
    After corrupting $\eta m$ samples, there are still at least $\frac{m}{2} + \frac{\eta m}{2}$ samples in range $[d - 100 \eta \sqrt{d},n]$ and at least $\frac{m}{2} + \frac{\eta m}{2}$ samples in range $[0, d + 100 \eta \sqrt{d}]$. Thus, the median $\hat{d}$ satisfies, with probability $1-4\exp(-\eta^2 m / 4)$,
    \begin{equation*}
        \Abs{\hat{d}-d} \leq O(\eta \sqrt{d}) \,.
    \end{equation*}
\end{proof}

\end{document}